\documentclass[onecolumn,aps,pr,superscriptaddress,preprintnumbers,nofootinbib,10pt]{revtex4-2}
\usepackage{xcolor}
\usepackage{soul}
\usepackage{amsfonts}
\usepackage{amssymb}
\usepackage{amsmath}
\usepackage{amsthm}
\usepackage{graphicx}
\usepackage{appendix}
\usepackage{bbold}
\usepackage{dsfont}
\usepackage{enumitem}
\usepackage{xcolor}
\usepackage[
colorlinks=true]{hyperref}

\usepackage{yfonts}
\usepackage{MnSymbol}

\setcounter{MaxMatrixCols}{30}

\usepackage[normalem]{ulem}
\newcommand\redsout{\bgroup\markoverwith{\textcolor{red}{\rule[0.4ex]{3pt}{0.7pt}}}\ULon}

\providecommand{\U}[1]{\protect\rule{.1in}{.1in}}

\hyphenation{coun-ter-term}
\hyphenation{sym-me-tri-za-tion}

\newcommand{\tr}{\ensuremath{\mathrm{Tr}}}

\usepackage{color}
\usepackage{listings}
\definecolor{darkgreen}{rgb}{0,0.35,0}
\definecolor{Rood}{rgb}{1, 0, 0}

\newtheorem{theorem}{Theorem}[section]
\newtheorem{lemma}[theorem]{Lemma}
\newtheorem{proposition}[theorem]{Proposition}

\theoremstyle{definition}
\newtheorem{definition}{Definition}[section]

\usepackage{verbatim}
\usepackage{subcaption}

\usepackage{array}

\newcommand{\beq}{\begin{equation}}
\newcommand{\eeq}{\end{equation}}
\newcommand{\bea}{\begin{eqnarray}}
\newcommand{\eea}{\end{eqnarray}}

\newcommand{\m}[1]{\bar{#1}}

\graphicspath{{./figs/}}
\newbox{\ORCIDicon}
\sbox{\ORCIDicon}{\large
                  \includegraphics[width=0.8em]{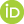}}

\begin{document}

\title{
Dually weighted multi-matrix models as a path to causal gravity-matter systems}

\author{Juan~L.~A. Abranches}
\email{juan.abranches@oist.jp}
\affiliation{Okinawa Institute of Science and Technology Graduate University, 1919-1, Tancha, Onna, Kunigami District, Okinawa 904-0495, Japan}
\author{Antonio~D.~Pereira\,\href{https://orcid.org/0000-0002-6952-2961}{\usebox{\ORCIDicon}}}
\email{adpjunior@id.uff.br}
\affiliation{Instituto de F\'isica, Universidade Federal Fluminense, Campus da Praia Vermelha, Av. Litor\^anea s/n, 24210-346, Niter\'oi, RJ, Brazil}
\affiliation{Institute for Mathematics, Astrophysics and Particle Physics (IMAPP),
Radboud University, Heyendaalseweg 135, 6525 AJ Nijmegen, The Netherlands}
\author{Reiko~Toriumi\,\href{https://orcid.org/0000-0003-1274-0463}{\usebox{\ORCIDicon}}}
\email{reiko.toriumi@oist.jp}
\affiliation{Okinawa Institute of Science and Technology Graduate University, 1919-1, Tancha, Onna, Kunigami District, Okinawa 904-0495, Japan}

\begin{abstract}
We introduce a dually-weighted multi-matrix model that for a suitable choice of weights reproduce two-dimensional Causal Dynamical Triangulations (CDT) coupled to the Ising model. When Ising degrees of freedom are removed, this model corresponds to the CDT-matrix model introduced by Benedetti and Henson [Phys. Lett. B 678, 222 (2009)]. We present exact as well as approximate results for the Gaussian averages of characters of a Hermitian matrix $A$ and $A^2$ for a given representation and establish the present limitations that prevent us to solve the model analytically. This sets the stage for the formulation of more sophisticated matter models coupled to two-dimensional CDT as dually weighted multi-matrix models providing a complementary view to the standard simplicial formulation of CDT-matter models.

\end{abstract}

\maketitle

\large
\section{Introduction}
\label{Sect:Intro}

The search for a consistent theory of quantum gravity that is valid up to arbitrarily short length scales remains an
open challenge in theoretical physics. The direct quantization of the gravitational field described by general relativity
using the standard perturbative field-theoretic techniques renders a perturbatively non-renormalizable quantum field
theory (QFT), see, e.g., \cite{tHooft:1974toh,Christensen:1979iy,Goroff:1985th}. In the history of the construction of quantum theories of the fundamental interactions, perturbative
non-renormalizability of a given well-grounded theoretical model was circumvented by the replacement of such a
model by a more fundamental description either by the addition of new fields or by the discovery that the degrees
of freedom to be quantized were different ones. As such, successful models were constructed and the fulfillment
of perturbative renormalizability became a paradigmatic prescription. Such an
attitude could be taken in the case of quantum gravity. Perhaps, the Einstein-Hilbert action is not the correct starting
point and just corresponds to an effective description of classical gravity at sufficiently low energies and the fundamental degrees of freedom are
completely different. Alternatively, one could add more fields that restore a well-behaved ultraviolet behavior for the
would-be quantum theory of gravity. In all those cases, there is an insistence in keeping the tools of perturbative
renormalization as the guiding principle in the construction of the underlying QFT. However, as it is well-known
nowadays, perturbative renormalizability is neither necessary nor sufficient to define a theory that is valid across
arbitrary length scales. The existence of Landau poles in scalar field theories in four dimensions or in quantum
electrodynamics shows that perturbatively renormalizable theories might require the introduction of a ultraviolet cutoff
at some finite energy scale in order to be non-trivial, see, e.g., \cite{Frohlich:1982tw,Gell-Mann:1954yli,Gockeler:1997dn,Callaway:1988ya}. In practice, however, such Landau poles live at scales far beyond our experimental capabilities
and therefore are harmless. Similarly, by introducing a cutoff for the standard QFT of general
relativity, one has a perfectly well-defined framework to compute quantum corrections to gravitational processes at
energy scales below the cutoff \cite{Donoghue:1993eb,Donoghue:1995cz,Burgess:2003jk}. Hence, constructing a fundamental theory of quantum gravity, as opposed to such an
effective field theory does not necessarily require the construction of a perturbatively renormalizable QFT.
One possibility to define a QFT that is valid up to arbitrarily high energies is by demanding that the coupling
constants that enter physical observables reach a renormalization group fixed point. When this happens, the theory
attains a scale-invariant regime and the ultraviolet cutoff can be safely removed. A paradigmatic example is quantum
chromodynamics which is asymptotically free, i.e., the theory reaches the free (or Gaussian) fixed point in the ultraviolet\footnote{In this case, the theory is also perturbatively renormalizable.}, see \cite{Politzer:1973fx,Gross:1973id}. In this case, since the coupling becomes sufficiently small at high energies,
perturbation theory is applicable and can probe the existence of such a fixed point. Alternatively, the theory could have an interacting (or non-Gaussian) fixed point in the ultraviolet. In this situation, the values of the couplings at the fixed point might not be small and perturbation theory is not sufficient to probe it. When such a non-trivial fixed point exists\footnote{Actually, besides having a non-trivial fixed point, it is necessary that just finitely many couplings are adjusted in order to reach it, otherwise the theory is not predictive.}, we say that the QFT is asymptotically safe. In \cite{Hawking:1979ig}, Weinberg put forward the conjecture that quantum
gravity could be realized as an asymptotically safe QFT. Evidence for that was obtained through the so-called 2 + $\epsilon$
expansion \cite{Kawai:1989yh,Kawai:1992np,Kawai:1993mb,Kawai:1995ju}. However, due to its non-perturbative nature, the search for such a fixed point needs different
techniques with different systematics in order to make its finding a robust claim. For many years, this conjecture
was not investigated systematically due to the lack of suitable technical tools. However, during the 1990's, this situation suffered a twist due to the development of two major frameworks: the use of functional or non-perturbative renormalization group equations to quantum gravity, see, e.g., \cite{Reuter:1996cp} and reviews \cite{Reuter:2012id,Eichhorn:2018yfc,Bonanno:2020bil,Pawlowski:2020qer,Eichhorn:2022gku,Saueressig:2023irs} and references therein and the proper understanding of how to put quantum
gravity on a lattice in a consistent fashion with causality constraints, see \cite{Ambjorn:1998xu,Ambjorn:2001cv} and reviews \cite{Ambjorn:2012jv,Loll:2019rdj} and references therein. Those computational techniques can be viewed as complementary
approaches to the search for a non-perturbative ultraviolet fixed point for quantum gravity and it has become usual to call the search for the fixed point using continuum techniques as the asymptotic safety approach while the search for a suitable continuum limit of discretized path integrals with causal constraints as the Causal Dynamical Triangulations (CDT) program. Both lines of research have provided a large body of evidence for a suitable continuum limit in four dimensions. We refer the reader to the references in the reviews above mentioned for a comprehensive list of the most recent results in those fields\footnote{We highlight that both approaches follow the inspiring idea put forward by Weinberg but there is no reason a priori to expect that the would-be continuum limits belong to the same universality class. In particular, CDT implements a Lorentzian path integral while the continuum computations employing the functional renormalization group are mostly Euclidean, with few exceptions, see, e.g., \cite{Manrique:2011jc,Biemans:2016rvp,Bonanno:2021squ,Fehre:2021eob,Saueressig:2023tfy}.}. 

Despite the tremendous progress achieved in the physically motivated four-dimensional case, the employment of non-perturbative techniques require substantial truncations and approximations that still need a lot of effort to produce quantitatively reliable results for those universal quantities, i.e., observables that could tell the existence of the continuum limit. Yet a fruitful playground is to consider two-dimensional quantum gravity. In this case, the interplay between continuum and discrete descriptions of the path integral of quantum gravity has witnessed a significant progress over the last decades ranging from the development of novel techniques to perform explicit calculations to rigorous mathematical results. We refer to, e.g., \cite{DiFrancesco:1993cyw,Nakayama:2004vk} for reviews on the topic. In particular, the Dynamical Triangulation program was born in two dimensions within an Euclidean setting while its continuum counterpart is encoded in Liouville quantum gravity. In particular, a successful implementation of the discretization of the path integral of quantum gravity and thus a sum over geometries and possibly topologies can be achieved by purely combinatorial means with the use of the so-called matrix models \cite{DiFrancesco:1993cyw,David:1984tx,DAVID1985543,Kazakov:1985ea,Douglas:1989ve,Brezin:1990rb,Gross:1989vs,Gross:1989aw,Ginsparg:1993is}. Triangles can be taken as dual representations of cubic-matrix vertices and Feynman diagrams of such a model are simply triangulations of two-dimensional surfaces. The perturbative expansion of matrix-models partition functions can be organized in powers of $1/N$ with $N$ standing for the size of the Hermitian random matrices and each power of such an expansion is associated with a specific genus $g$. Such a remarkable expansion allows for the investigation of continuum limits, e.g., where just spherical topologies contribute (the so-called planar limit) or where all topologies are taken into account (double-scaling limit). Thanks to the rich combinatorial framework behind the theory of random matrices, very powerful results could be established in such a discrete-to-continuum approach. The two-dimensional case is simple enough so that one can still solve the Dynamical Triangulation model with no need to make use of random matrices, but such a perspective opens up the possibility to think about higher-dimensional discrete approaches as theories of higher-order tensors. So far, the higher dimensional tensor models were not successful in producing a suitable continuum limit in higher dimensions. Primarily, the main obstacle was the lack of the analogue of the $1/N$-expansion in matrix models. Such a difficulty was lifted thanks to the works of Gurau that introduced the so-called colored tensor models \cite{Gurau:2009tw,Gurau:2010ba,Gurau:2011aq,Gurau:2011xp}. Yet the continuum limits obtained from the colored tensor models up to date do not feature an extended-geometry-like phase very much in agreement with the simulations performed for Dynamical Triangulations in dimensionality greater than two \cite{Gurau:2013cbh,Loll:1998aj,Ambjorn:1995dj,Bialas:1996wu,deBakker:1996zx}, see however \cite{Laiho:2011ya,Laiho:2016nlp,Dai:2021fqb,Bassler:2021pzt,Dai:2023tud,Asaduzzaman:2022kxz,Hamber:2009mt,Hamber:2012te,Hamber:2015jja}. This suggests that one could inspect more intricate continuum limits from those models, see, e.g., the realization of such a reasoning following the functional renormalization group \cite{Eichhorn:2017xhy,Eichhorn:2018ylk,Eichhorn:2018phj,Eichhorn:2019hsa}, or that a restriction to the configuration space should be implemented. This is precisely what is achieved with CDT which displays a robust body of evidence that such a suitable continuum limit exists in higher dimensions \cite{Ambjorn:2010ce,Ambjorn:1998xu,Ambjorn:2000dv,Ambjorn:2004qm,Ambjorn:2005qt,Ambjorn:2007jv,Ambjorn:2011ph,Ambjorn:2011cg,Jordan:2013iaa,Ambjorn:2020rcn,Benedetti:2022ots}. Nevertheless, the analogue of CDT in a tensor-model language is still unknown. In this paper, we explore a matrix model that would be such a realization in two dimensions introduced by Benedetti and Henson in \cite{Benedetti:2008hc}. In essence, this model implements the (non-local) global time foliation of CDT together with the avoidance of spatial topology change by means of  local constraints. This is achieved by the so-called dually weighted matrix models introduced in \cite{DiFrancesco:1992cn,Kazakov_1996,Kazakov:1995gm,Kazakov:1996zm}. On top of the CDT configurations generated by the Benedetti-Henson model, we will introduce degreess of freedom arising from the Ising model, i.e., we develop a (multi-)matrix model of the Ising model coupled to CDT in two dimensions. Alternatively, the combinatorial models can be enriched with more structure such as group-theoretic data giving birth to the so-called Group Field Theories which may or may not include a Lorentzian setting \cite{Oriti:2011jm,Jercher:2022mky,Marchetti:2022igl,Marchetti:2022nrf}. Thus, two-dimensional quantum gravity might be too simplistic for some purposes but certainly is an inspirational source for more sophisticated candidates in higher dimensions.  

Moreover, a realistic description of our Universe
must accommodate matter degrees of freedom whose 
fluctuations can affect the existence of a suitable continuum limit. Hence, it is conceivable that our comprehension of the potential mechanism that drives quantum gravity asymptotically safe (irrespective of the approach followed here) depends on a simultaneous treatment of gravitational as well as matter degrees of freedom. This ambitious goal has
witnessed significant progress over the last two decades both from the point of view of the functional renormalization
group as well as from lattice simulations and encouraging results were found \cite{Knorr:2022dsx,Eichhorn:2022gku,Pawlowski:2023gym,Loll:2019rdj,Ambjorn:2022naa}. In the present work we provide several results regarding a simple but still quite rich gravity-matter model: two-dimensional CDT coupled to the Ising model. There are several motivations to look at this model but we emphasize the following: Implementing the causality constraint at the level of a matrix model is a first step towards its implementation in tensor models. As it is known, such a model does not have a known exact solution and developing appropriate tools to deal with it is mandatory. The inclusion of the Ising model produces a multi-matrix model which is interesting on its own. Clearly, having the Ising model as the matter component is tremendously far from the rich structure of the Standard Model of Particle Physics coupled to quantum gravity. Yet the purpose here is to explore the impact of the dynamical (and causal) lattice to the Ising model and vice-versa. We emphasize that matrix models for Euclidean two-dimensional gravity coupled to the Ising model were investigated in the past \cite{Kazakov:1986hy,Boulatov:1986sb,Kazakov:1986hu,Brezin:1989db} as well as its higher-dimensional version \cite{Bonzom:2011ev,Sasakura:2014zwa,Lahoche:2019orv}. Moreover, the coupling between the Ising model and CDT was also investigated since the birth of two-dimensional CDT, see, e.g., \cite{Ambjorn:1999gi,Benedetti:2006zy,Benedetti:2006rv,Napolitano:2015poa,Sato:2017ccb,Cerda-Hernandez:2017jge,Ambjorn:2020kpr}. We establish several properties of such a CDT-Ising multi-matrix model paving the way for a future complete solution or making its structure well-grounded for studies using non-perturbative tools such as the functional renormalization group methods\footnote{Functional renormalization techniques were developed for matrix models, see, e.g., \cite{Eichhorn:2013isa,Eichhorn:2014xaa,Lahoche:2019ocf,Perez-Sanchez:2020kgq,Eichhorn:2020sla}. There are studies involving the functional renormalization group equation applied to the Benedetti-Henson model, see \cite{Castro:2020dzt}.} or numerical simulations.

This paper is organized as follows:
In Section \ref{Sect:CDTMM}, we will briefly review CDT-like matrix model as a dually weighted matrix model as introduced by Benedetti and Henson \cite{Benedetti:2008hc}.
In Section
\ref{Sect:BHTopOrient}, we develop an analysis concerning topologies which admit global foliations, specifically within the Benedetti-Henson pure CDT-like matrix model (in Section \ref{Sect:CDTMM}) and further on non-orientable symmetric matrix models.
In Section \ref{Sect:IM}, we discuss briefly the well-known two-matrix models which can be interpreted as Ising model on random two-surfaces.
In Section
\ref{sec:CDTmmIsing}, we present our CDT-matrix model coupled to the Ising model by combining the models in Sections \ref{Sect:CDTMM} and \ref{Sect:IM}.
In Section
\ref{cmsec}, we study the properties of the matrices $C_m$ which, in the case of $m=2$, is responsible for generating the global foliation structure on the dual triangulation graphs of the CDT-like matrix models. 
Theorem \ref{thm:Cm} is one of the main results of this paper.
In Section
\ref{sec:characterexp},
we present the character expansions for partition functions of the two models, i.e.,  pure CDT-like matrix model by Benedetti-Henson and CDT-like matrix model coupled to Ising model presented in Sections \ref{Sect:CDTMM} and \ref{sec:CDTmmIsing} respectively. We base our further analyses on these character expansions.
In Section \ref{sec:unitintmono}, we analyze the character expansion of the partition function for the latter model (i.e., CDT-like matrix model coupled to Ising model) and present an expression in Proposition \ref{unitaryint} in terms of Clebsch-Gordan coefficients using Weingarten calculus. 
In Section
\ref{sec:characters},
we evaluate Gaussian averages of characters for a given representation $r$, $\langle \chi_r(A) \rangle_0$ and $\langle \chi_r(A^2) \rangle_0$
for CDT-like matrix models presented in Section \ref{Sect:CDTMM},
using various techniques, e.g., Schur-Weyl duality, theory of symmetric group algebra, etc.
Theorem \ref{thm:chiA2R}, one of our main results, is related to Brauer algebra.
In Section \ref{sec:conc}, we conclude by revisiting and summarizing the various results reported in this work.

\section{Causal Dynamical Triangulations as a matrix model: a short review}
\label{Sect:CDTMM}

Let us describe via matrix model the Causal Dynamical Triangulation (CDT) in two dimensions. The CDT was first worked and solved in \cite{Ambj_rn_1998} as a lattice model. In \cite{Benedetti:2008hc}, a definition through a matrix model was given, and we follow this definition. 
Let us also refer to this model as pure{\footnote{We specifically refer {\it pure} (gravity) because later, we will talk about a matrix model which generate random two-surfaces (gravity) with {\it matter}. }} CDT-like matrix model of Benedetti-Henson.
The CDT-like matrix model graphs can be represented as ribbon graphs just like other matrix models. 
In fact, these CDT-like matrix models fall into a class of so called dually weighted models.
In \cite{Kazakov_1996}, Kazakov, Staudacher, and Wynter introduced a so called dually weighted graphs where different weights are assigned for vertices with different coordination numbers and for faces with different lengths.
As one will see below in elaborating the features of such CDT-like matrix models, our models generate the dually weighted graphs only with vertices with coordination number 3, and with faces with arbitrary lengths however, of particular type (restricting to only having two edges of a certain type, which we will name timelike).
\begin{definition}
    A {\it prime ribbon graph} is a set of topological discs and topological rectangles satisfying the following properties: no two rectangles or two discs intersect; each rectangle has exactly one pair of opposite sides that are contained in disc's circumferences, and this is the only intersection between rectangles and discs.
\end{definition}

\begin{definition}
    A rectangle's edge that is not contained in a disc is called a {\it strand}. 
\end{definition}

\begin{definition}
    A {\it ribbon graph} is the boundary of the union of the discs and rectangles of a prime ribbon graph.
\end{definition}

Additionally, one can represent such ribbon graphs as multigraphs. 
Given a ribbon graph, we assign a vertex to each disc and an edge to each rectangle. If a disc and a rectangle are connected, i.e., they intersect, then respective vertex and edge are connected. See Fig.\,\ref{fig:cdtfa} for an example. We may also call a ribbon graph's discs and rectangles as vertices and edges, respectively. We define a face as a closed curve formed by strands, and we carry the notion of faces from the ribbon graph to the multigraph. 
An example of a face of a ribbon graph is shown in Fig.\,\ref{ribface}.
\begin{figure}[t]
    \begin{subfigure}{0.4\textwidth}
    \centering
    \includegraphics[width=.7\linewidth]{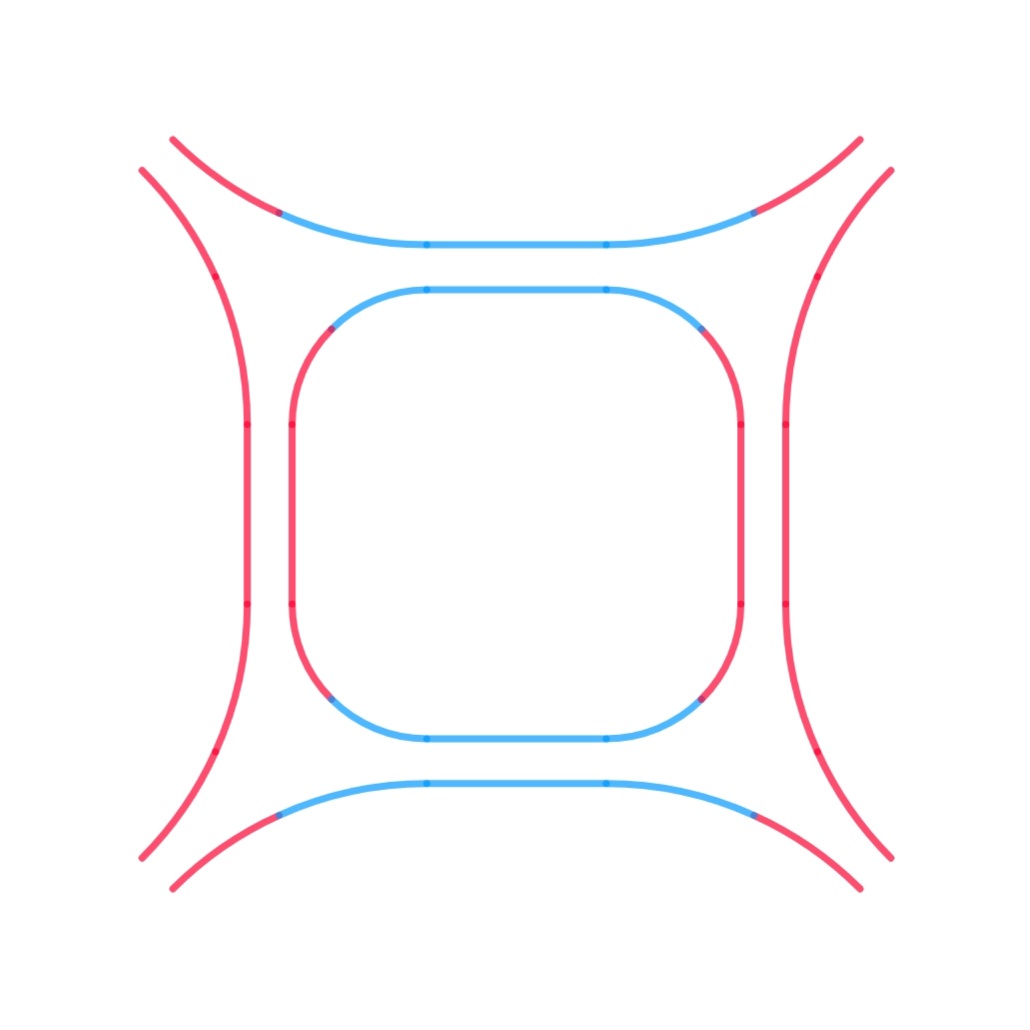}
    \caption{A closed loop is called a face of a ribbon graph. The timelike edges are shown in blue, and the spacelike edges are shown in red.}
    \label{ribface}
    \end{subfigure}
    \begin{subfigure}{0.4\textwidth}
    \centering
    \includegraphics[width=.7\linewidth]{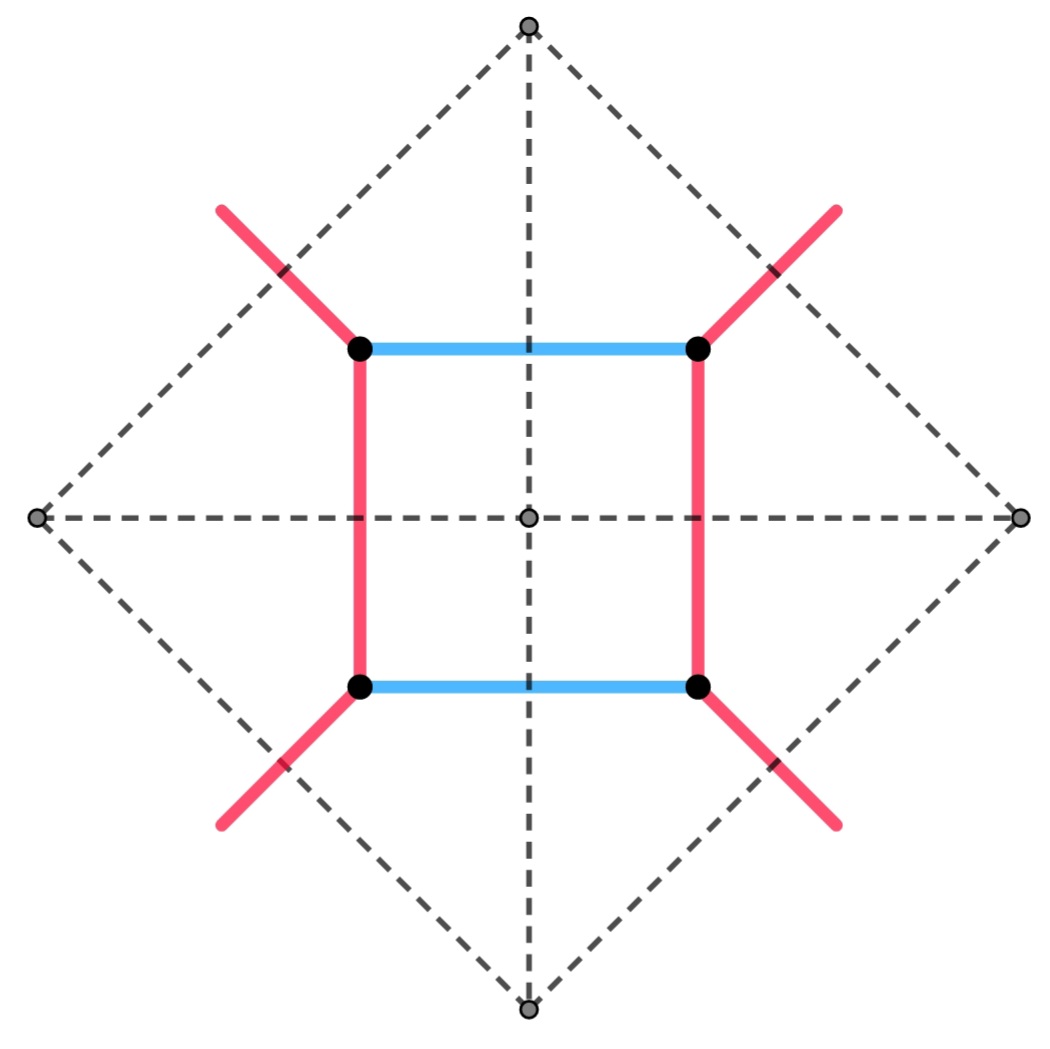}
    \caption{The multigraph representation of the ribbon graph in Fig.\,\ref{ribface} is shown in solid lines and black vertices. Dashed lines represent the edges of the dual graph and the gray dots are vertices of the dual graph. 
    }
    \label{graphndual}
    \end{subfigure}
\caption{A face of a CDT ribbon graph. }
\label{fig:cdtfa}
\end{figure}

The CDT-like matrix model ribbon graphs are edge-colored (not proper), partitioned into two sets: spacelike edges and timelike edges.
Together with the defining properties of a ribbon graph and the edge coloring, the CDT ribbon graph also has the following defining properties (CDT conditions):

\begin{enumerate}
    \item Every vertex is three valent and has exactly one timelike edge and two spacelike edges incident to it.
    \item Every face has either two timelike edges or none.
\end{enumerate}
\begin{figure}[t]
    \begin{subfigure}{0.3\textwidth}
    \centering
    \includegraphics[width=.5\linewidth]{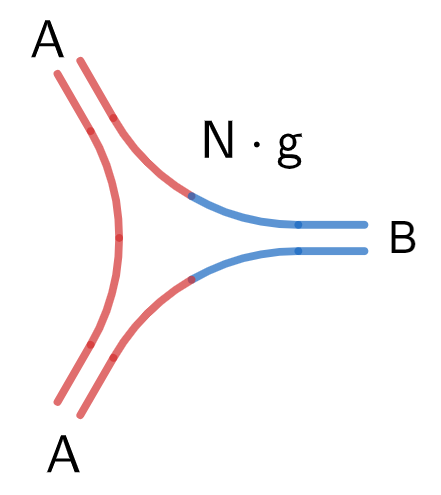}
    \caption{Vertex of a CDT ribbon graph.}
    \label{fig:cdtve}
    \end{subfigure}
    \begin{subfigure}{0.3\textwidth}
    \centering
    \includegraphics[width=.5\linewidth]{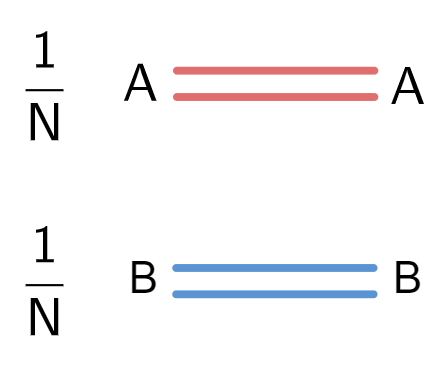}
    \caption{Edges of a CDT ribbon graph.}
    \label{fig:cdtprop}
    \end{subfigure}
\caption{Components of the CDT ribbon graph.}
\label{fig:cdtel}
\end{figure}
Notice that the faces can have any number of spacelike edges. Fig.\,\ref{fig:cdtel} shows a representation of the vertices and edges. 
A triangulation is obtained with the dual graph; the dual graph is defined by associating a vertex with each face of the graph, and if two faces share an edge, their respective vertices are connected by an edge (see Fig.\,\ref{fig:cdtprop}). This also leads to each vertex in the graph being associated with a face in the dual graph, and since all vertices are three valent, all faces in the dual graph are triangles. In Fig.\,\ref{:sph}, we show an example of a sphere triangulation that satisfies the CDT conditions given above.
\begin{figure}[t]
\begin{minipage}[t]{0.8\textwidth}
    \centering
    \includegraphics[width=.35\linewidth]{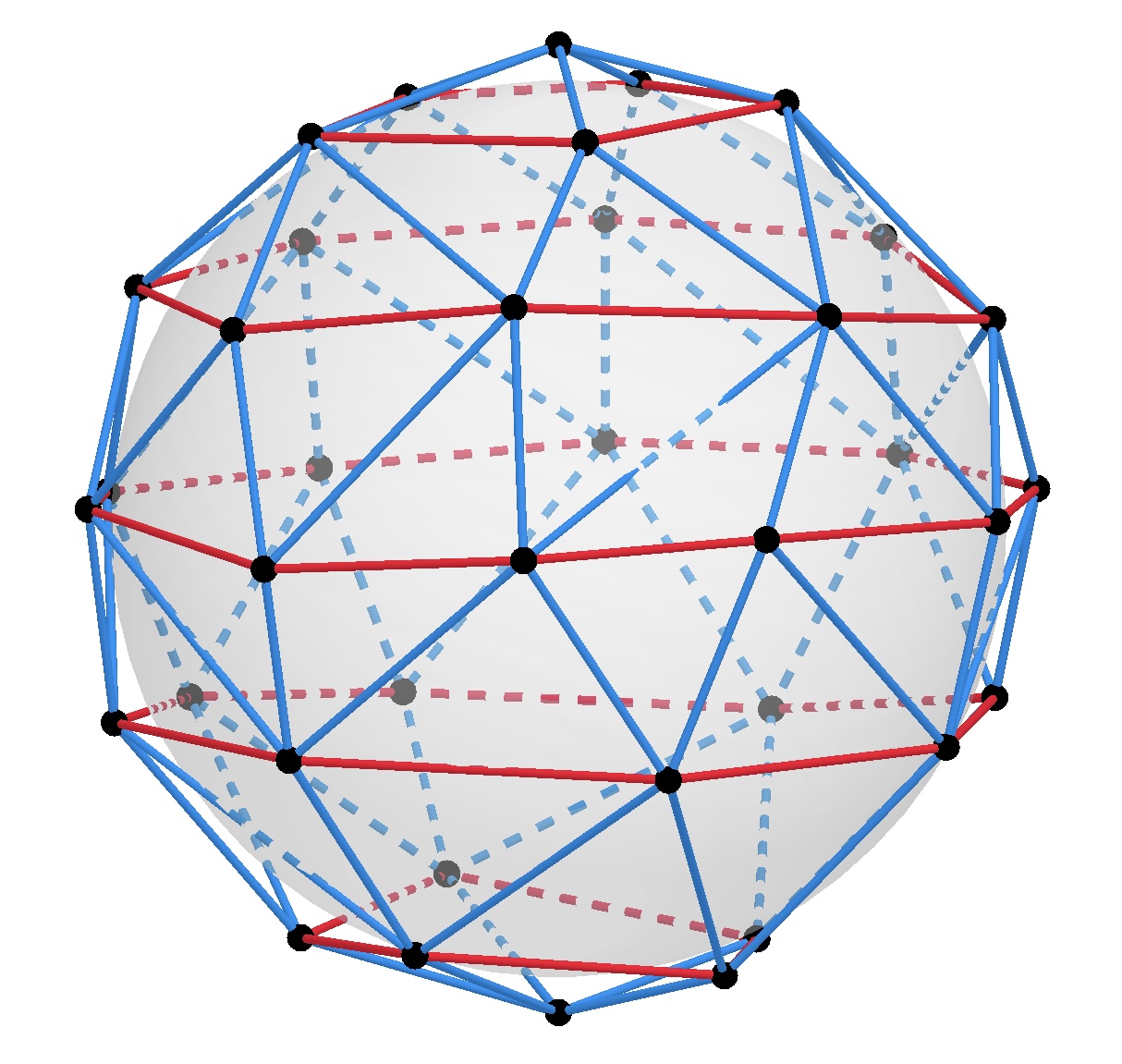}
    \caption{An example of a CDT triangulation graph (dual of a corresponding CDT ribbon graph). spacelike edges in the ribbon graph are dual to timelike edges in the dual triangulation graph, and vice-versa. 
    The vertices of the dual triangulation graph can have two or zero spacelike edges (in red) and any number of timelike edges (in blue).
    }
    \label{:sph}
\end{minipage}
\end{figure}
A matrix model that generates ribbon graphs,  satisfying these defining CDT conditions above can be constructed by associating a Hermitian matrix $A$ with spacelike edges of a ribbon graph and a Hermitian matrix $B$ to timelike edges of a ribbon graph. An auxiliary partition function for such a pure CDT-like matrix model is given by 
\begin{equation}
    \mathcal{Z} = \int dA dB \;e^{-N \tr\left[\frac{1}{2}A^2+\frac{1}{2}({C_2}^{-1}B)^2-g A^2B\right]}\,.
\label{eq:CDTMMDarioABC2}
\end{equation}
The action (the negative of the exponent in \eqref{eq:CDTMMDarioABC2}) of the model has the term $-Ng A^2B$, responsible for the property that every vertex have exactly two spacelike edges and one timelike edge. A matrix $C_2$ is introduced in the quadratic term of $B$, and it satisfies, for a positive integer $p$ and in the large $N$ limit,
\beq
\tr [(C_2)^p] = N\delta_{p,2}\,.
\label{eq:TrCn}
\eeq
This condition on $C_2$ is what sets the property that faces have only two or zero timelike edges. 
In Section \ref{cmsec}, we study this matrix $C_2$ in more detail. 
Notice that this constraint \eqref{eq:TrCn} on $C_2$ is overdetermined, once $p > N$. 
Therefore, the condition \eqref{eq:TrCn} on $C_2$ can only be imposed in the large $N$ limit. 
The Gaussian average of a function $f(A,B)$ is defined by
\begin{equation}
    \langle f(A,B)\rangle_0=\frac{\int dAdB \;f(A,B)\;e^{-N \tr\left[\frac{1}{2}A^2+\frac{1}{2}({C_2}^{-1}B)^2\right]}}{\int dAdB\;e^{-N \tr\left[\frac{1}{2}A^2+\frac{1}{2}({C_2}^{-1}B)^2\right]}}\;,
\end{equation}
and we find the following properties:
\begin{equation}
    \langle A_{ij}A_{kl} \rangle_0 = \frac{1}{N} \delta_{ik}\delta_{jl}\,,
\end{equation}
\begin{equation}
    \langle B_{ij}B_{kl} \rangle_0 = \frac{1}{N} {C_2}_{ik}{C_2}_{jl}\,,
\end{equation}
\begin{equation}
    \langle A_{ij}B_{kl} \rangle_0 = 0\;.
\end{equation}
Since the condition on $C_2$ in \eqref{eq:TrCn} can only be set at large $N$, the partition function for CDT is only be obtained in the large $N$ limit,
\begin{equation}
    \mathcal{Z}_{CDT}= \lim_{N\rightarrow \infty}\mathcal{Z}\;.
\end{equation}

\section{Properties of the Benedetti-Henson model: allowed topologies and (non-)orientability}
\label{Sect:BHTopOrient}

The two defining properties, albeit being local, put a global constraint on the topologies that our ribbon graphs (and in their dual triangulation) may represent. 
Studying the properties of the elements of the graph enables us to determine the values of Euler characteristic which are allowed.
The analysis can be done either with the ribbon graphs or with the triangulations dual to them. 
Here, let us work on the ribbon graphs.

Let us study some properties satisfied by the faces. 
A face that has no timelike edges is simply a set of connected spacelike edges. See Fig.\,\ref{fig:squa}.
For a face that has two timelike edges, since the vertices can only have one timelike edge, a timelike edge in a face is always a neighbor of two spacelike edges of the face. See Fig.\,\ref{fig:pent}. 
This way, the edges of a given face of the latter type form a sequence composed by a timelike edge, a several spacelike edges, a timelike edge, a several spacelike edges, and then connecting back to the first timelike edge. 
Thus, this type of face has two disjoint sets of connected spacelike edges, and these two sets are separated by timelike edges. Let us call these the two sets as {\it spacelike boundaries} of a face.

\begin{definition}
A {\it strip} is defined as a set that contains faces which are sequentially connected by timelike edges, and also contains the edges and the vertices of these faces. Each face in a strip is glued by timelike edges to either two faces in the same strip (it can also glue to itself) or none, and one timelike edge is shared exactly by two faces (these two faces are possibly the same face). 
If a spacelike edge has multiplicity two in one face or if it belongs to two faces of the same strip, consider those two appearances as distinct elements of a strip.
\end{definition}
\begin{definition}\label{def0} A
{\it boundary of a strip} is defined as a set of spacelike edges that are sequentially connected by vertices.
\end{definition}
\begin{definition}\label{def00} A
{\it interior of a strip} is defined as the set of faces and timelike edges in a strip. 
\end{definition}
\begin{definition}\label{def2}
A {\it regular strip} is a strip that has two boundaries (see Fig.\,\ref{fig:restri}).
\end{definition}
\begin{definition}\label{def1}
A {\it singular strip} is defined as a strip composed by a face with no timelike edges (see Fig.\,\ref{fig:sistri}).
\end{definition}
\begin{definition}\label{def3}
A \it M{\"o}bius strip is a non-singular strip that has one boundary (see Fig.\,\ref{fig:mostri}).
\end{definition}

Some important properties satisfied by CDT ribbon graphs are:
\begin{enumerate}
    \item Every face belongs to a strip and is in only one strip. 
    \item A strip has only spacelike edges in its boundary, and all spacelike edges are in boundaries  of strips.
    \item Since there is a finite number of faces, there is a finite number of strips.
    \item Every non-singular strip is a periodic sequence which alternates between faces and timelike edges; and every boundary of a strip is a periodic sequence which alternates between spacelike edges and vertices.
    \item A strip has only one or two boundaries.
    \item A boundary of a strip either bounds one other strip or connects the strip to itself.
\end{enumerate}
\begin{figure}[t]
    \begin{subfigure}{0.3\textwidth}
    \centering
    \includegraphics[width=.7\linewidth]{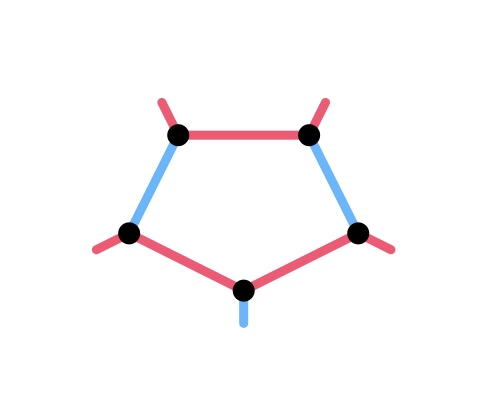}
    \caption{A face with two timelike edges.}
    \label{fig:pent}
    \end{subfigure}
    \begin{subfigure}{0.3\textwidth}
    \centering
    \includegraphics[width=.7\linewidth]{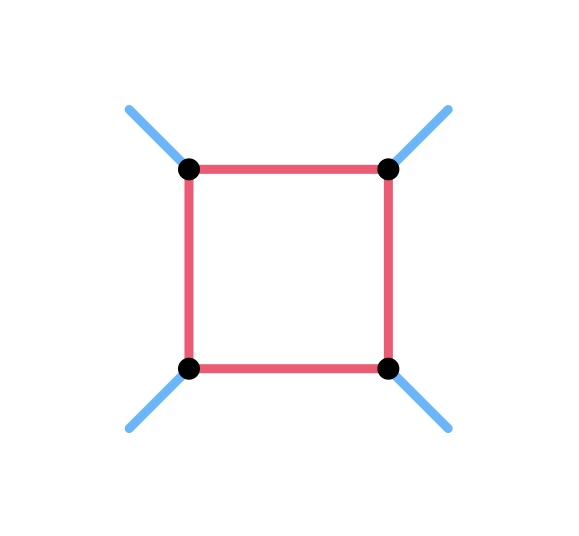}
    \caption{A face with no timelike edge.}
    \label{fig:squa}
    \end{subfigure}
\caption{Components of the CDT ribbon graph. Spacelike edges are shown in red, while timelike edges are shown in blue.}
\label{fig:cdtfa1}
\end{figure}
Properties 1 to 4 are easy to see, but let us take a closer look at property 5. 
The singular strip is just a face with only spacelike edges, thus as a strip it has only one boundary, as shown in Fig.\,\ref{fig:sistri}. 
\begin{figure}[t]
    \begin{subfigure}{0.4\textwidth}
    \centering
    \includegraphics[width=.7\linewidth]{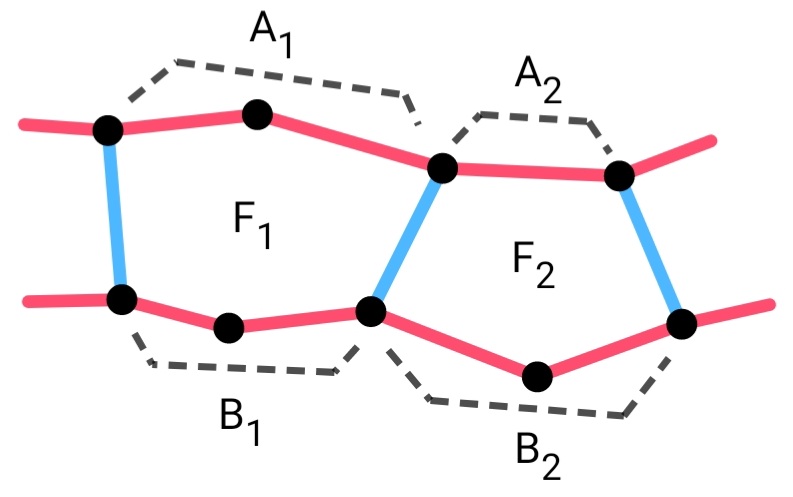}
    \caption{The face $F_1$ has spacelike boundaries $A_1$ and $B_1$, and the face $F_2$ has the spacelike boundaries $A_2$ and $B_2$. Together, they have the spacelike boundaries $A_1\cup A_2$ and $B_1\cup B_2$.}
    \label{strinex}
    \end{subfigure}
    \begin{subfigure}{0.4\textwidth}
    \centering
    \includegraphics[width=.7\linewidth]{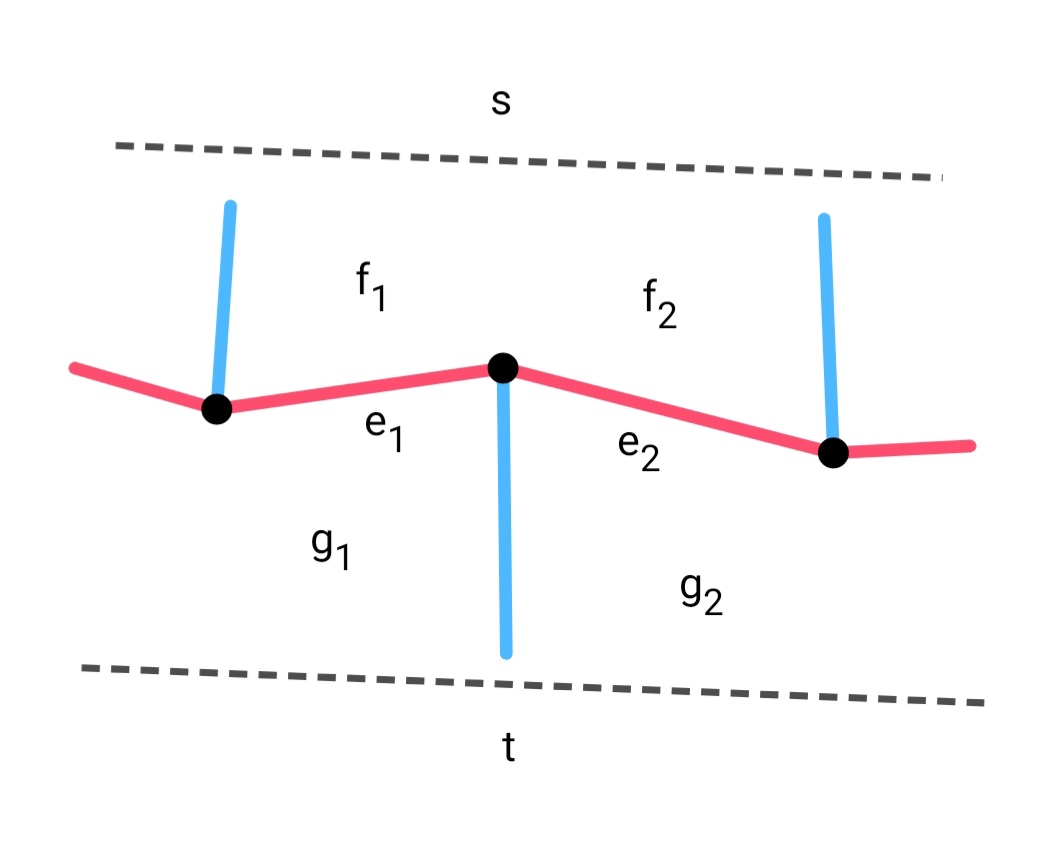}
    \caption{ The edge $e_1$ is in the faces $f_1$ and $g_1$. The edge $e_2$ is in the faces $f_2$ and $g_2$. Both faces $f_1$ and $f_2$ are in  the strip $s$ and both faces $g_1$ and $g_2$ are in the strip $t$. In this example, $f_1=f_2$.}
    \label{striboex}
    \end{subfigure}
\caption{ Steps of the demonstration of properties 5 and 6 of CDT ribbon graphs.}
\label{striex}
\end{figure}
For a regular strip or a M{\"o}bius strip, we follow the  steps below:
\begin{itemize}
    \item Consider one of its faces, call it $F_1$. $F_1$ has two spacelike boundaries, call them $A_1$ and $B_1$. 
    \item Consider also a face, call it $F_2$, that is connected to the previously considered face $F_1$ above by a timelike edge. $F_2$ also has two spacelike boundaries. Call them $A_2$ and $B_2$. (See Fig.\,\ref{strinex}.)
    \item These two boundaries are each connected to a boundary of the previous face by the vertices the faces share. Together, the two faces create two extended spacelike boundaries, $A_1\cup A_2$ and $B_1\cup B_2$. 
    \item This process is inductively repeated to $k$ connected faces in the strip, and the number of spacelike boundaries is kept two, $A_1\cup A_2\cup\cdots\cup A_k$, and $B_1\cup B_2\cup\cdots\cup B_k$.
    \item Call $l$ the number of faces in the strip. Considering property 4 of CDT ribbon graphs, when the first and the last faces are connected, there are two possibilities:
    \begin{enumerate}
         \item The two extended boundaries close into $A_1\cup A_2\cup\cdots\cup A_l$ and $B_1\cup B_2\cup\cdots\cup B_l$, forming a regular strip, as shown in Fig.\,\ref{fig:restri}. 
         \item The extended boundaries connect to each other, $A_1\cup A_2\cup\cdots\cup A_l\cup B_1\cup B_2\cup\cdots\cup B_l$, creating only one closed boundary, forming the M{\"o}bius strip as shown in \ref{fig:mostri}. 
    \end{enumerate}
\end{itemize}    

We can obtain the property 6 in a similar manner as we did property 5 above:
\begin{itemize}
    \item Given a spacelike edge $e_1$, consider the two faces (that might be the same) that share $e_1$. Call these faces $f_1$ and $g_1$. 
    \item Consider the strips (that might be the same) that contain the faces $f_1$ and $g_1$ respectively. Call them $s$ and $t$ respectively. 
    \item Consider now a spacelike edge, call it $e_2$ that has a common vertex with the previous spacelike edge $e_1$. Call $f_2$ and $g_2$ the two faces that $e_2$ belongs to.
    \item The faces $f_2$ and $g_2$ either satisfy $f_2\in s$ and $g_2\in t$, or satisfy $f_2\in t$ and $g_2\in s$. Without the loss of generality, 
    assume the first case. (See Fig.\,\ref{striboex}.)
    \item The two neighboring spacelike edges have the same two strips at its sides, $s$ and $t$.
    \item Inductively, when considering the entire closed cycle of spacelike edges, either $s \neq t$ or $s=t$. Therefore, there are either two or one strip at its sides.
\end{itemize}
\begin{figure}[t]
\begin{minipage}[t]{0.7\textwidth}
    \begin{subfigure}{0.3\textwidth}
    \centering
    \includegraphics[width=1\linewidth]{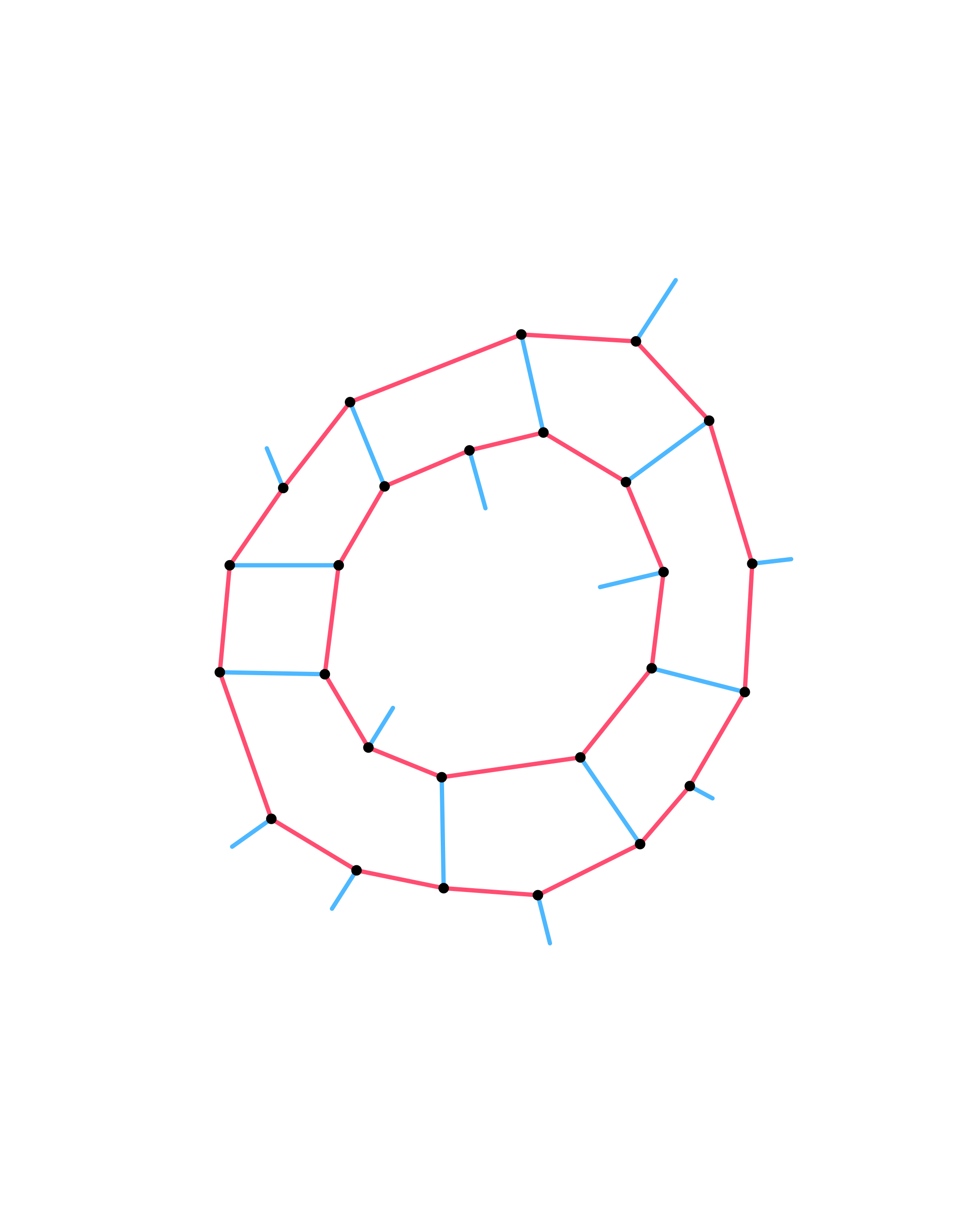}
    \caption{A regular strip.}
    \label{fig:restri}
    \end{subfigure}
    \begin{subfigure}{0.3\textwidth}
    \centering
    \includegraphics[width=1\linewidth]{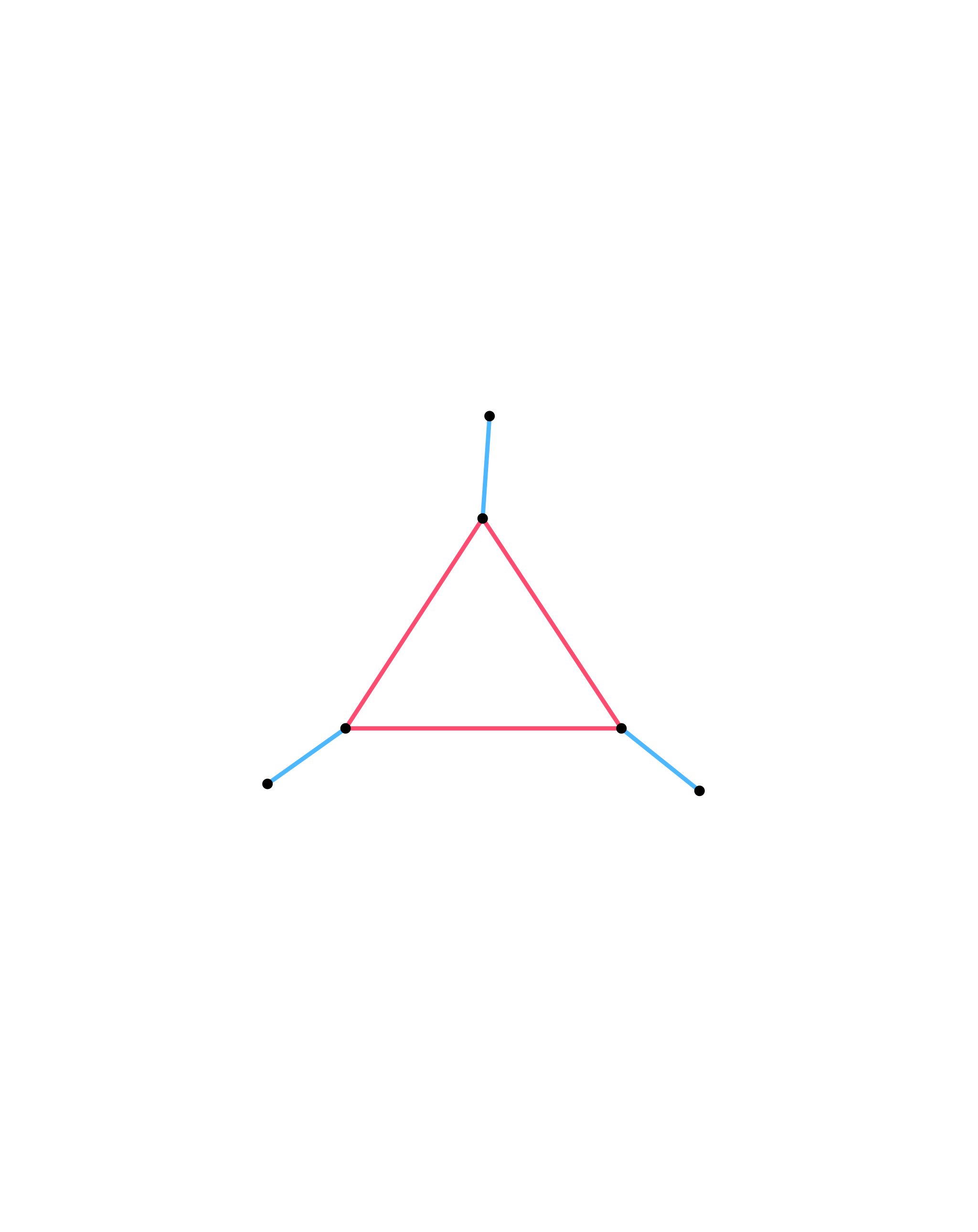}
    \caption{A singular strip.}
    \label{fig:sistri}
    \end{subfigure}
    \begin{subfigure}{0.3\textwidth}
    \centering
    \includegraphics[width=1\linewidth]{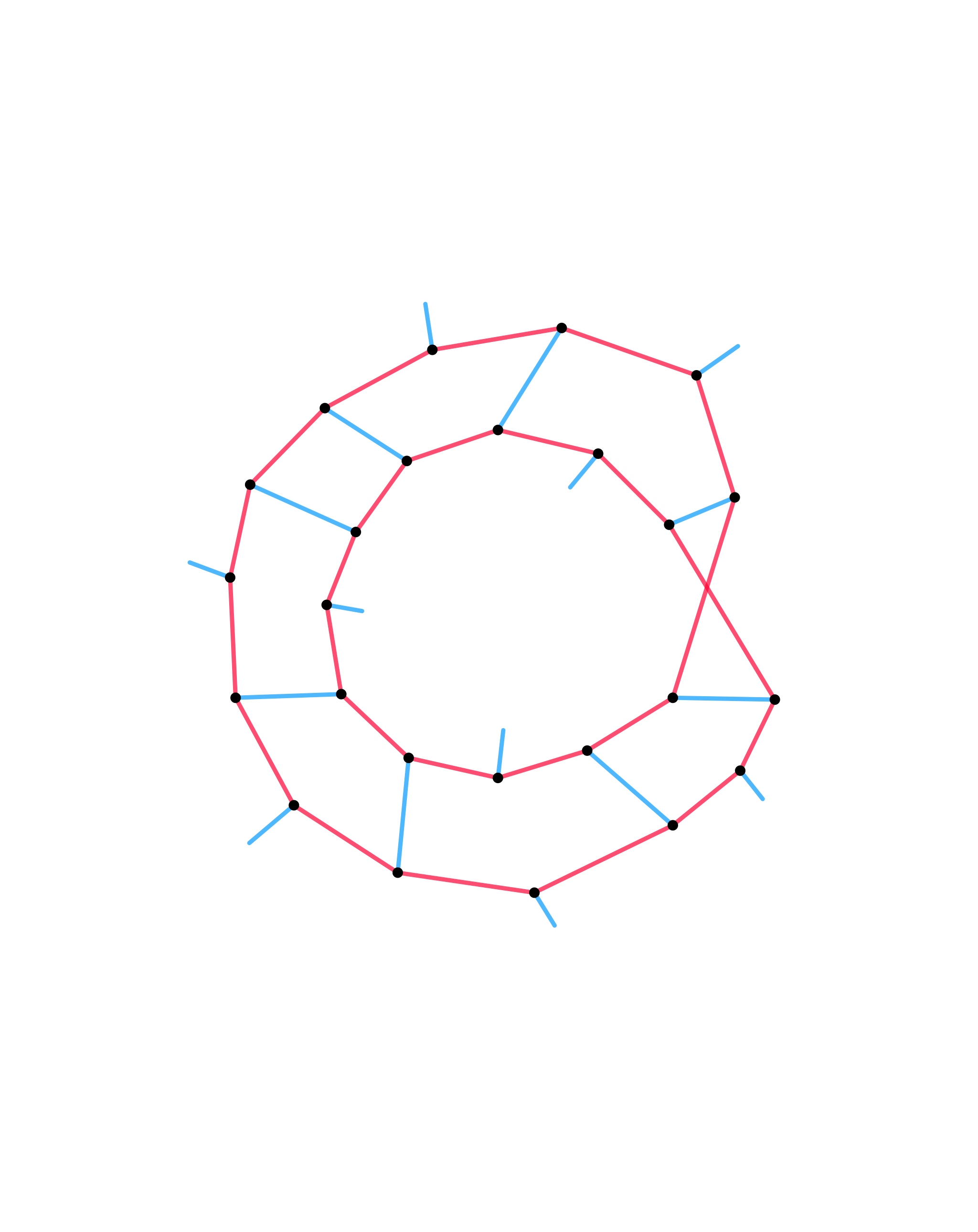}
    \caption{A M{\"o}bius strip}.
    \label{fig:mostri}
    \end{subfigure}
\caption{ Types of strips present in a CDT ribbon graph. spacelike edges are shown in red, while timelike edges are shown in blue.}
\label{fig:cdtst}
\end{minipage}
\end{figure}
Consequently, similarly to the way faces are connected by the timelike edges to form a strip, the entire graph is construted by strips connected by spacelike edges. 
Together with property 6, since the graph is connected, every strip is connected to each other by a sequence of strips. 
Therefore, one can say that a graph is a sequence of strips, and there are three types of strips defined in \ref{def1}, \ref{def2}, and \ref{def3}.
\\

\noindent {\bf {Possible topologies.}}
\\
Let us now take a look at the Euler characteristics of the CDT ribbon graphs. 
The elements of these ribbon graph can be decomposed into two disjoint sets: boundaries of strips (defined in \ref{def0}) and interiors of strips (defined in \ref{def00}).
Setting $V$ as the number of vertices, $F$ the number of faces, $E_S$ the number of spacelike edges, $E_T$ the number of timelike edges, and $E$ the total number of edges, the Euler characteristics can be written as:
\begin{equation}\label{ece}
    \chi = V-E+F = V-E_S-E_T+F=\chi_B+\chi_I,
\end{equation}
with $\chi_B=V-E_S$ being the contribution to the Euler characteristics by the boundaries of strips, and $\chi_I=F-E_T$ being the contribution by the interiors of strips. It is always true that $\chi_B=0$, since every vertex has two spacelike edges, and every edge connects two vertices, thus $2V=2E_S$. 
For $\chi_I$, we need to differentiate between different types of strips. 
For the regular strip and the M{\"o}bius strip, since their faces have two timelike edges and every edge is shared by two faces, these strips have the property $2F=2E_T$, thus $\chi_I=0$. 
As for a singular strip, since it is composed of only one face and there are no timelike edges, $\chi_I=1$. Therefore, the Euler characteristic equals the number of singular strips:
\begin{equation}
    \chi=F_s\,,
\end{equation}
where $F_s$ is the number of faces with no timelike edges, therefore it is equal to the number of singular strips. 

From properties 5 and 6 we also get an important fact: The strips can be ordered by sequencing them according to having a common boundary.
This sequencing induces global foliation, which is one of the fundamental properties of the definition of CDT \cite{Ambj_rn_1998}.
This sequence may be periodic (and therefore called infinite), just as the strips themselves are a periodic sequence of faces.
The sequence also may be finite, starting with a singular or M{\"o}bius strip, as they can only share a boundary with one other strip, and may end also with a singular or M{\"o}bius strip.
The singular strip, then, can only appear at most twice. 
Therefore, since the Euler characteristics equals the number of singular strips, the Euler characteristics can only be 0, 1 and 2.

\begin{itemize}
    \item Considering only orientable surfaces{\footnote{One can realize generating orientable two-surfaces by considering Hermitian matrix models.}}, we know that the Euler characteristics can be expressed as $\chi=2-2g$, where $g$ is the genus of the surface. The possibilities follow:
    \begin{enumerate}
        \item When the graph has two singular strips, we find that the genus is 0, thus the surface is a sphere. This graph has the sequence of strips starting with a singular strip, having some regular strips in the middle, and then ending in another singular strip. 
        
        \item When the graph has one singular strip, there is no solution since there is no orientable surface with $\chi=1$ (corresponding to $g=1/2$).
        
        \item Lastly, when the graph has no singular strips, the genus is 1, therefore the surface is a torus, composed of a periodic sequence of   regular strips.
    \end{enumerate}

    \item  When we consider nonorientable surfaces{\footnote{One can realize generating nonorientable two-surfaces by considering symmetric matrix models.}}, the Euler characteristics now assumes the form 
    \begin{equation}
    \label{eq:chinonorient}
        \chi=2-(2g+c)\,,
    \end{equation}
    where $c$ is the number of cross caps. 
    \begin{enumerate}
        \item  For a graph with two singular strips, \eqref{eq:chinonorient} can only be satisfied for $g=0$ and $c=0$. Then, the topology is of a sphere. 
        
        \item When the graph has only one singular strip, the equation \eqref{eq:chinonorient} can now be satisfied with $g=0$ and $c=1$, thus we have the topology of the projective plane. 
        
        \item Lastly, when the graph has no singular strips, the equation \eqref{eq:chinonorient} has two solutions. One solution is $g=1$ with $c=0$, implying it is a torus. The other solution is $g=0$ with $c=2$, showing the possibility of the topology of the Klein bottle.
    \end{enumerate}
\end{itemize}

\noindent {\bf {Existence and construction of topologies.}}
\\
We have shown above that these topologies are allowed, but it does not yet mean that they indeed exist. For this, let us define orientation of faces and strips:
\begin{definition}
    {\it Orientation of a face}: The edges of a face form a closed curve and thus have the usual notion of orientation. Two faces that share an edge are said to have a compatible orientation if the directions of the closed curves are opposite in the common edge. We extend this notion to any two faces by transitivity. 
\end{definition}
\begin{definition}
    {\it Orientation of a strip}: A regular strip admits a compatible orientation among all of its faces and we define the orientation of a strip as the orientation of one of its faces. A M{\"o}bius strip does not admit a compatible orientation among all of its faces. Two strips are said to have a compatible orientation if a face from one strip has compatible orientation to a face of the other strip.
\end{definition}
In the following, we argue that indeed all the possible topologies discussed above can exist by a simple construction of putting strips together to form the sequence of strips. 
\begin{enumerate}
    \item Sphere: Start the sequence with a singular strip, follow it by a sequence of regular strips, and end the sequence with another singular strip. See Fig.\,\ref{dias}.

\begin{figure}
\begin{minipage}[t]{0.8\textwidth}
    \centering
    \includegraphics[width=.5\linewidth]{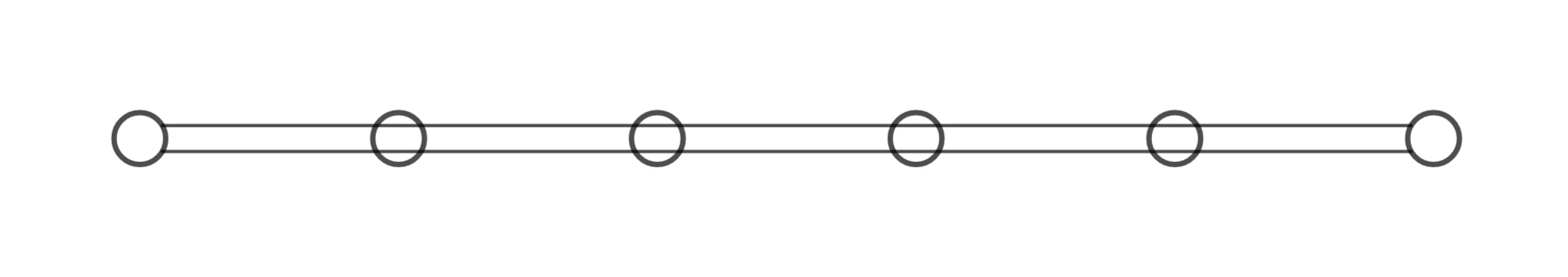}
    \caption{Construction of a sphere graph. The empty circle represents a singular strip, the circle with two lines inside represents a regular strip, and the two non-intersecting lines connecting two circles represent a boundary that keeps the orientation compatible. This example shows four regular strips, but any non-negative integer is possible.}
\label{dias}
\end{minipage}
\end{figure}
    
    \item Torus: Given a finite sequence of regular strips, glue the first strip to the last in way that the orientation is compatible. See Fig.\,\ref{diat}.

\begin{figure}
\begin{minipage}[t]{0.7\textwidth}
    \centering
    \includegraphics[width=.25\linewidth]{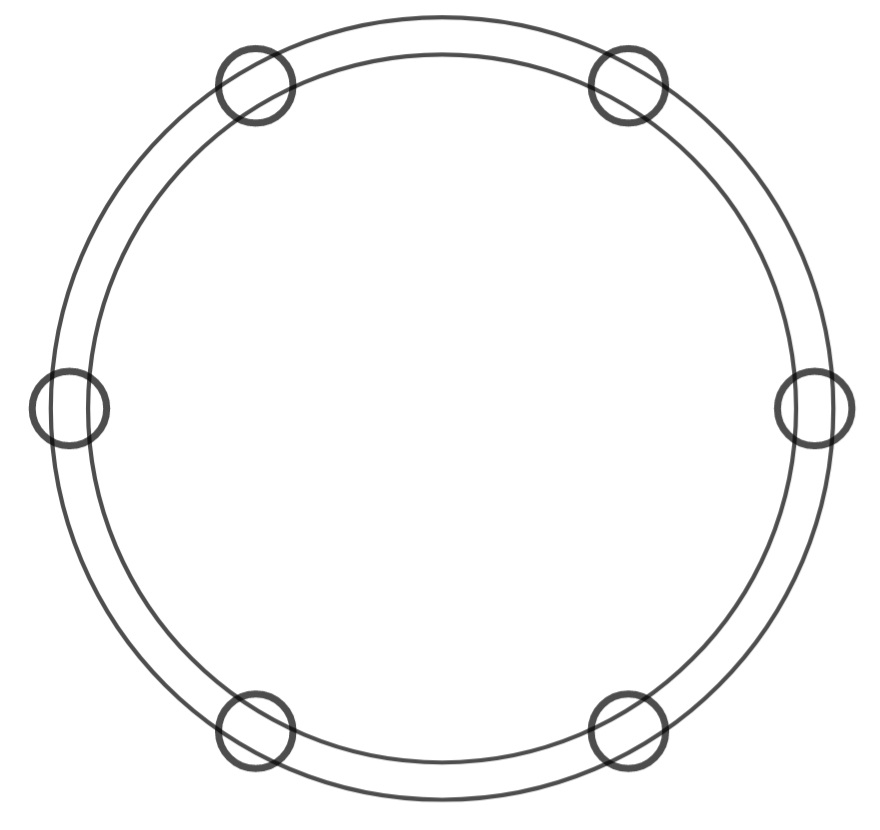}
    \caption{Construction of a torus graph. The circle with two lines represents a regular strip, and the two non-intersecting lines connecting two circles represent a boundary that keeps the orientation compatible. This example shows six regular strips, but any non-negative integer is possible.}
\label{diat}
\end{minipage}
\end{figure}

    \item Projective plane: (i) Start the sequence with a singular strip, follow it by a sequence of regular strips (or none), and end it with a M{\"o}bius strip. See Fig.\,\ref{diap1}. (ii) Alternatively, start the sequence with a singular strip, follow it by a sequence of regular strips, and glue the last boundary to itself as a cross-cap. See Fig.\,\ref{diap2}. The last boundary must have even length for this to be possible.

\begin{figure}
\begin{minipage}[t]{0.8\textwidth}
    \begin{subfigure}{.45\textwidth}
    \centering
    \includegraphics[width=\linewidth]{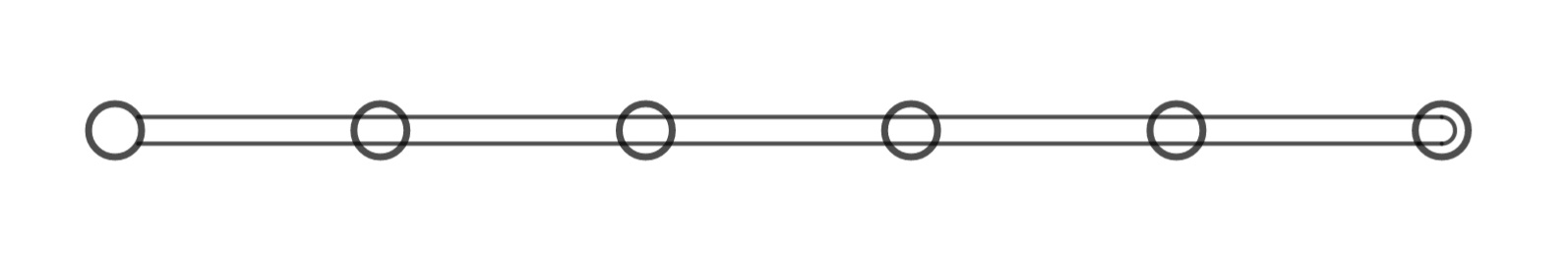}
    \caption{
    }
    \label{diap1}
    \end{subfigure}
    \begin{subfigure}{.45\textwidth}
    \centering
    \includegraphics[width=\linewidth]{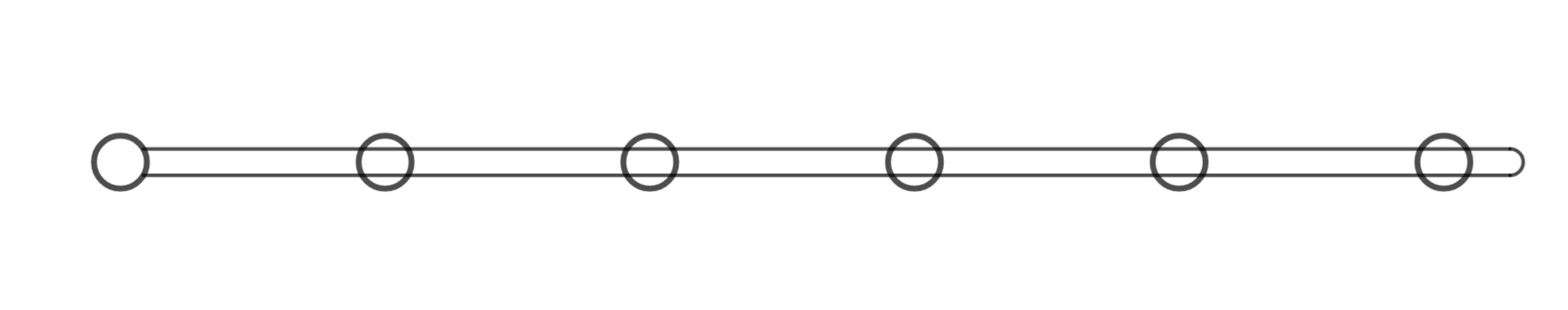}
    \caption{
    }
    \label{diap2}
    \end{subfigure}
\caption{Two different ways of constructing the projective plane. The empty circle represents a singular strip, the circle with two lines represents a regular strip, the two non-intersecting lines connecting two circles represent a boundary that keeps the orientation compatible, and the line going in and out of the same circle represents a cross-cap.
Both examples show four regular strips, but any non-negative integer is possible.
}
\label{diap}
\end{minipage}
\end{figure}

    \item Klein bottle: (i) Start the sequence with a M{\"o}bius strip, follow it by a sequence of regular strips (or none) and end it with another M{\"o}bius strip. See Fig.\,\ref{diak1}. (ii) Alternatively, we can change one or both of these M{\"o}bius strips for (a) cross-cap(s). See Figures \ref{diak2} and \ref{diak3}.
    (iii) Furthermore, we can also have a periodic sequence of regular strips, 
    and we glue the first strip to the last one in a way that their orientations are not compatible.  See Fig.\,\ref{diak4}.

\begin{figure}
\begin{minipage}[t]{0.8\textwidth}
    \begin{subfigure}{.45\textwidth}
    \centering
    \includegraphics[width=\linewidth]{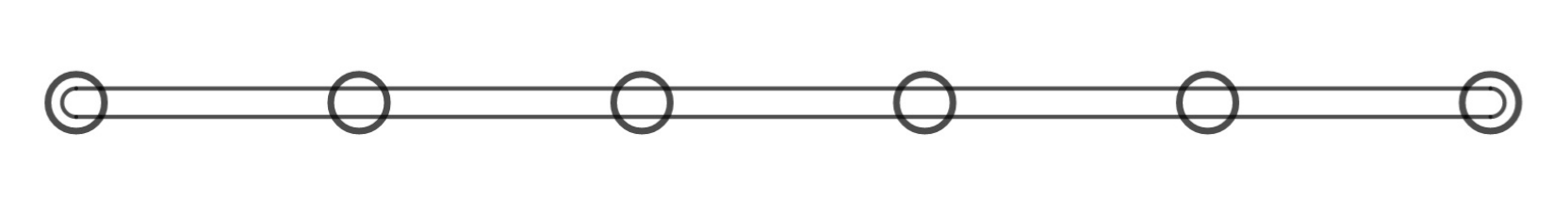}
    \caption{
    }
    \label{diak1}
    \end{subfigure}
    \begin{subfigure}{.45\textwidth}
    \centering
    \includegraphics[width=\linewidth]{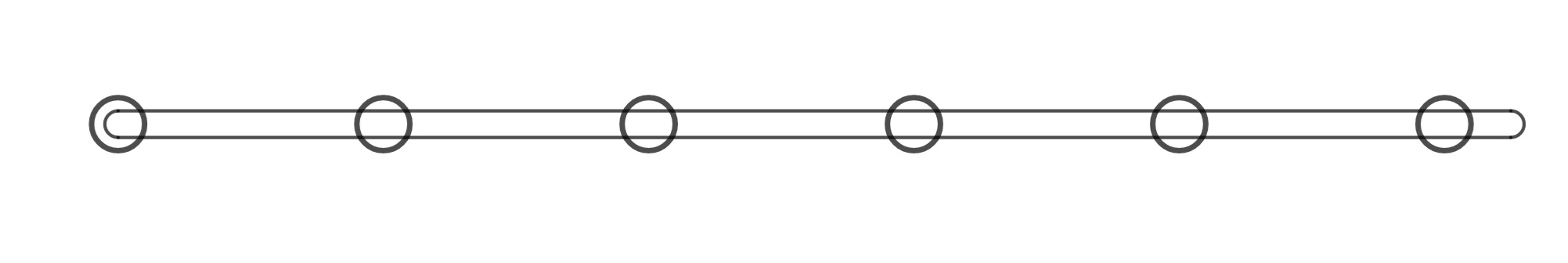}
    \caption{
    }
    \label{diak2}
    \end{subfigure}
    \begin{subfigure}{.45\textwidth}
    \centering
    \includegraphics[width=\linewidth]{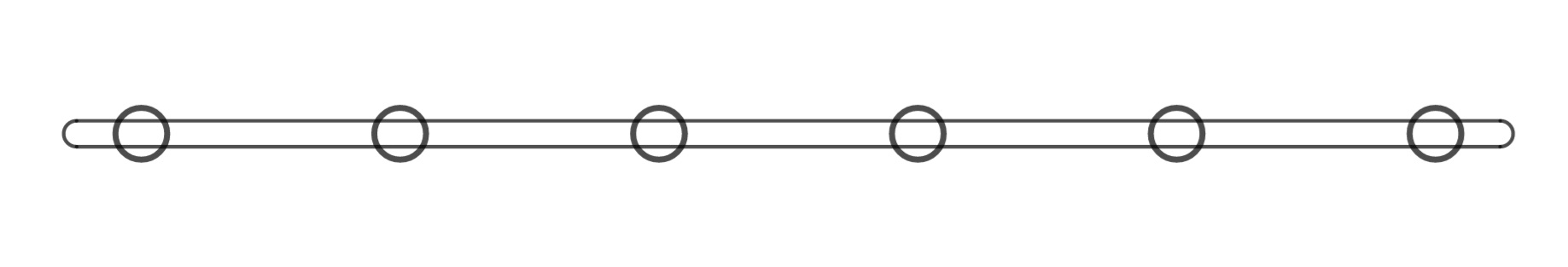}
    \caption{
    }
    \label{diak3}
    \end{subfigure}
    \begin{subfigure}{.45\textwidth}
    \centering
    \includegraphics[width=\linewidth]{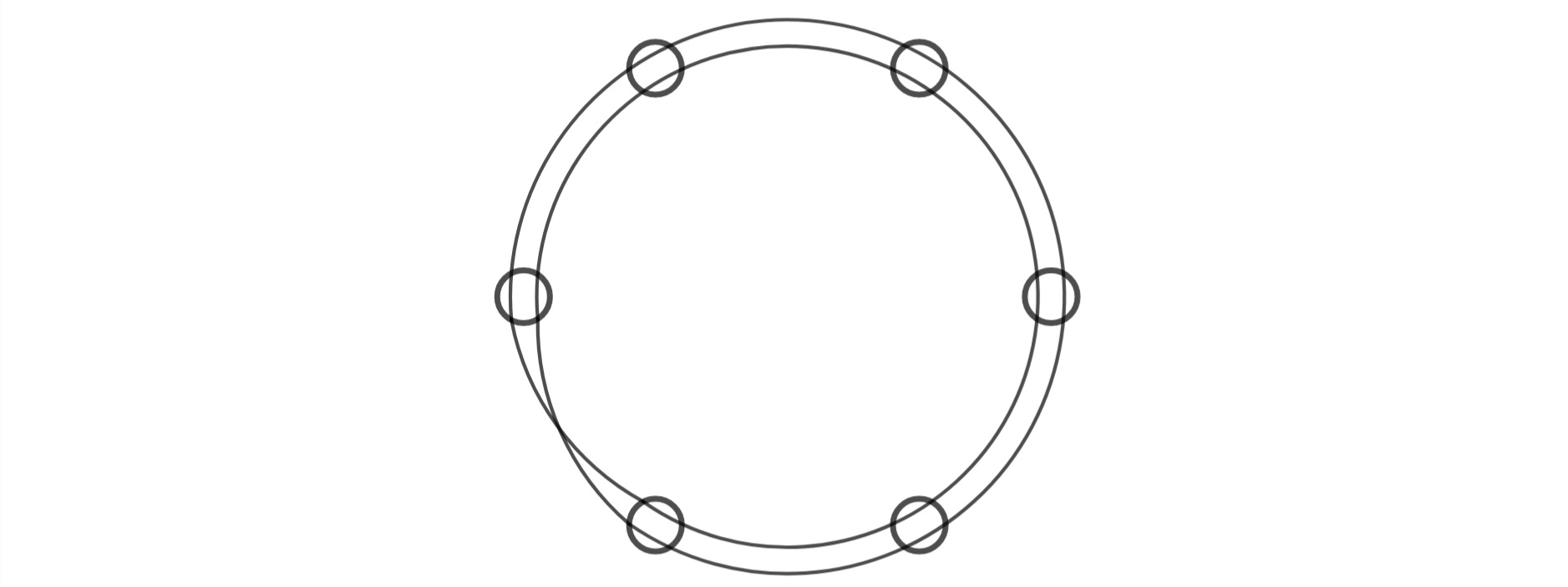}
    \caption{
    }
    \label{diak4}
    \end{subfigure}
\caption{Four different ways of constructing the Klein bottle.
The circle with two lines represents a regular strip, the circle with one line represents a M{\"o}bius strip, and the two non-intersecting lines connecting two circles represent a boundary that keeps the orientation compatible. These examples are constructed with four regular strips, but any number of them is possible.}
\label{diak}
\end{minipage}
\end{figure}
\end{enumerate}

\section{Two-matrix model representing Ising model on random two-surfaces}
\label{Sect:IM}

\begin{figure}
    \begin{subfigure}{0.3\textwidth}
    \centering
    \includegraphics[width=.5\linewidth]{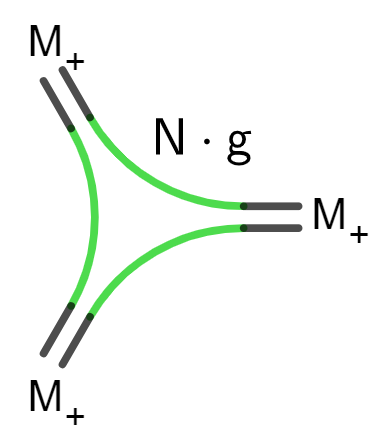}
    \caption{Spin up vertex}
    \label{fig:verve}
    \end{subfigure}
    \begin{subfigure}{0.3\textwidth}
    \centering
    \includegraphics[width=.5\linewidth]{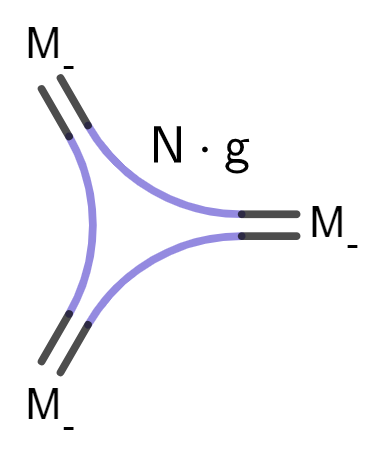}
    \caption{Spin down vertex}
    \label{fig:verpu}
    \end{subfigure}
    \begin{subfigure}{0.3\textwidth}
    \centering
    \includegraphics[width=.6\linewidth]{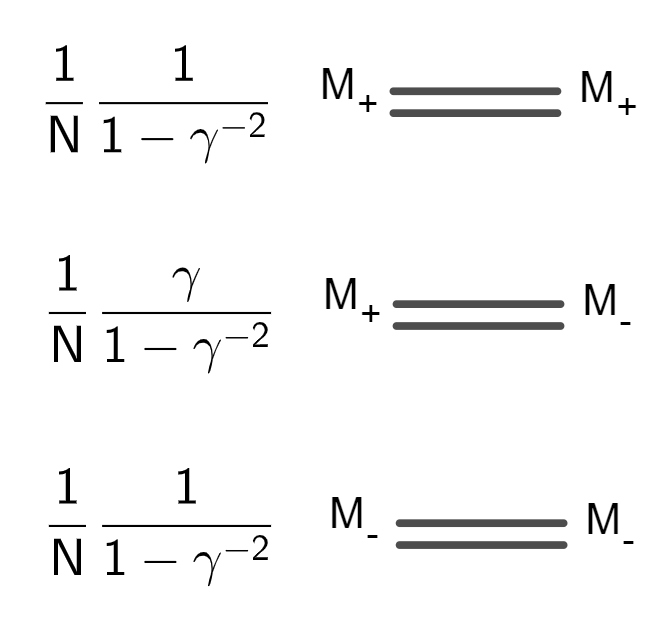}
    \caption{Edges}
    \label{fig:prop}
    \end{subfigure}
\caption{Components of the two-matrix model ribbon graph.}
\label{fig:imf}
\end{figure}

Here, let us introduce two-matrix models which have been widely studied starting with Itzykson and Zuber \cite{Itzykson:1979fi}, followed by many others \cite{Mehta:1990te, EYNARD1997633, Eynard:2002kg, Bertola:2003iw}.

Consider the partition function,
\begin{equation}
    \mathcal{Z}_{\rm IM}=N^{-2}\ln\int dM_+ dM_{-} e^{-S_{IM}}\,,
    \label{eq:partitionfcnIsing}
\end{equation}
which is a generating function of connected ribbon graphs. Here $M_+$ and $M_-$ are $N\times N$ Hermitian matrices and for a Hermitian matrix $M$, the measure applied is $dM=\prod_{i<j}\mathrm{Re}(M_{ij}) \mathrm{Im}(M_{ij}) \prod_i M_{ii}$.
For couplings $\gamma$ and $g$,
\begin{equation}
    S_{\rm IM}=N \;\mathrm{Tr}\left[\frac{1}{2}M_+^2+\frac{1}{2}M_-^2-\gamma^{-2} M_+M_--gM_+^3-gM_-^3\right]
    \label{IMaction}
\end{equation}
is said to be the action of the system \cite{KAZAKOV1986140, Boulatov:1986sb}. 
One can interpret that the spins are located on the vertices (of Feynman ribbon graphs generated perturbatively by this action) represented by the terms in the action $M_+^3$ and $M_-^3$. Fig.\,\ref{fig:imf} shows how these elements appear in the ribbon graph, and Fig.\,\ref{fig:img} shows an example of a ribbon graph represented as a graph. 
\begin{figure}[t]
\begin{minipage}[t]{0.8\textwidth}
    \centering
    \includegraphics[width=.4\linewidth]{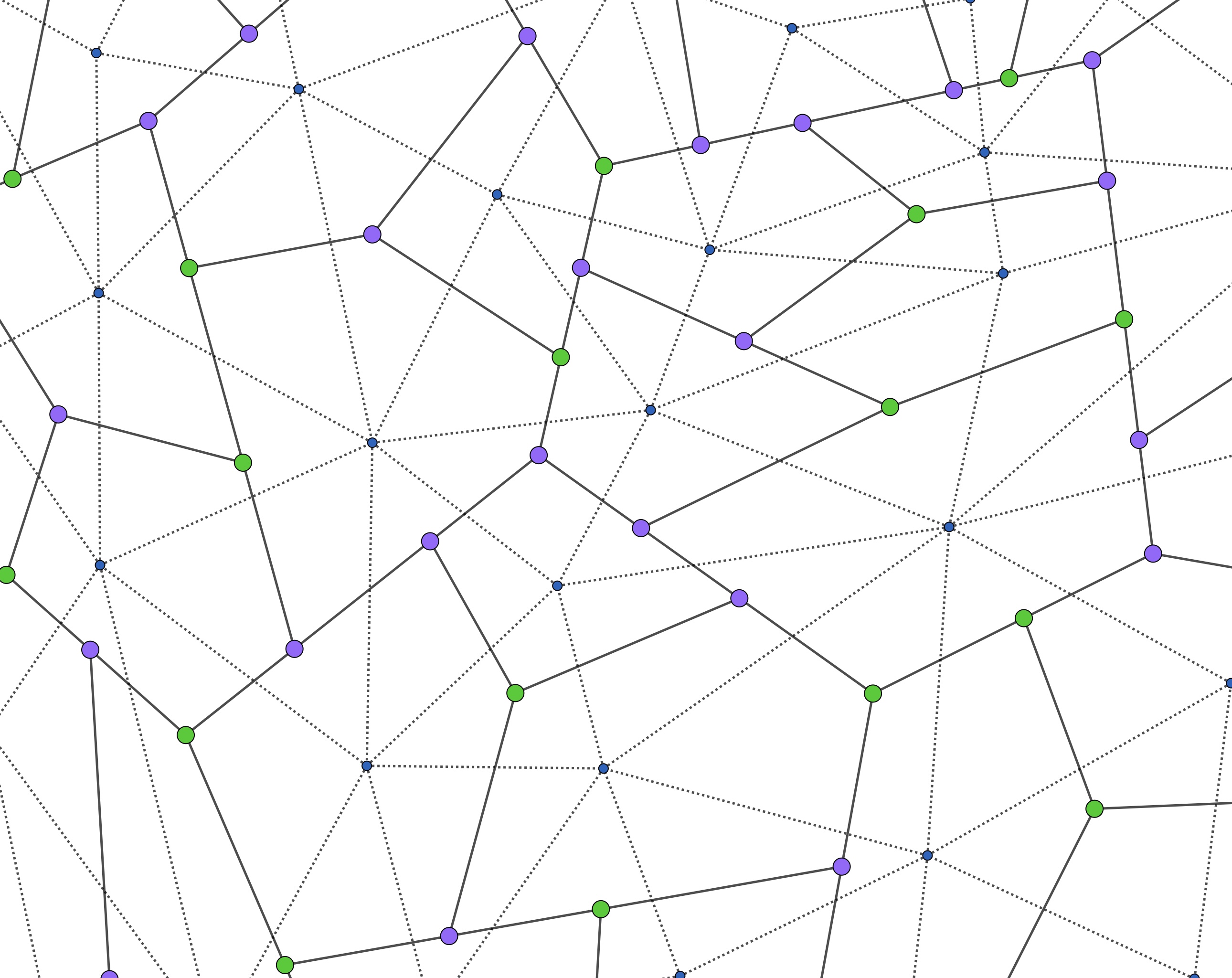}
    \caption{An example of a part of an two-matrix model graph described by the action \eqref{IMaction}. The solid lines represent the ribbon graph. The dashed lines represent the dual triangulation. The coloring of the vertices represent the spin.}
    \label{fig:img}
\end{minipage}
\end{figure}

We can interpret $g$ as a cosmological constant, since it is related to the volume of the universe. In the ribbon graph, the volume appears as the number of vertices, while in the dual graph, the triangulation, it is the number of faces. The coupling $\gamma$ can be related to the Ising temperature $T$ by $\gamma=e^{-T^{-1}}$.
For a Gaussian model with partition function
\begin{equation}
    Z_0=\int dA\;e^{-N\mathrm{Tr}\frac{1}{2}A^2}
    \,,
\end{equation}
with a Hermitian matrix $A$, 
consider the average 
\begin{equation}\label{avdef}
    \langle f(A) \rangle_0 
    = 
    \frac{1}{Z_0}
    {\int dA\;f(A)\;e^{-N\mathrm{Tr}\frac{1}{2}A^2}}
    \,,
\end{equation}
where $0$ denotes the Gaussian average. Wick's theorem says that the computation of this function can be decomposed by taking the products of the propagators related to each edge in our Feynman graphs, where the propagators are simply the mean values taken for $g=0$,
\begin{equation}
    \langle {M_{+}}_{ij}{M_{+}}_{kl} \rangle_0=\langle {M_{-}}_{ij}{M_{-}}_{kl} \rangle_0=\frac{1}{N}\frac{1}{1-\gamma^{-4}}\delta_{il}\delta_{kj}\,,
\end{equation}
and
\begin{equation}
    \langle {M_{+}}_{ij}{M_{-}}_{kl} \rangle_0=\frac{1}{N}\frac{\gamma^{-2}}{1-\gamma^{-4}}\delta_{il}\delta_{kj}\,.
\end{equation}
    Diagonalizing the kinetic term in \eqref{IMaction} takes us to another interesting version of the model, where now one can interpret that the spins are localized on the faces of the matrix model graphs (i.e., ribbon graphs) rather than on the vertices of the matrix model graphs. We achieve this by a few change of variables. First change to the Hermitian matrices $K$ and $L$ given by
\begin{equation}\label{trf1}
    M_{+}=\frac{1}{\sqrt{2}}(K+L)\;\qquad \mathrm{and}\qquad M_{-}=\frac{1}{\sqrt{2}}(K-L)\,,
\end{equation}
so that
\begin{equation}
    S_{\rm IM}=N \;\mathrm{Tr}\left[\frac{1}{2}(1-\gamma^{-2})K^2+\frac{1}{2}(1+\gamma^{-2})L^2-\sqrt{2}gK^3-3\sqrt{2}gKL^2\right]\,,
\end{equation}
and with a last substitution to the couplings $\gamma'$, $g'$, and Hermitian matrices $U$ and $V$ given by
\begin{equation}\label{trf2}
    \gamma'^{-2}=\frac{\gamma-\gamma^{-1}}{\gamma+\gamma^{-1}},\;g'=\gamma'^{-3/2}\frac{\sqrt{2}g}{(1-\gamma^{-2})^{3/2}},\;U^2=\gamma'(1-\gamma^{-2})K^2,\; V^2=\gamma'(1-\gamma^{-2})L^2,  
\end{equation}
we obtain
\begin{equation}\label{imf}
    S_{\rm IM}=N \;\mathrm{Tr}\left[\frac{1}{2}\gamma'^{-1}U^2+\frac{1}{2}\gamma' V^2-g'U^3-3g'UV^2\right]\,,
\end{equation}
which is the Ising model where the spins,  therefore $\pm$ signs,
are assigned for each face of  Feynman ribbon graphs. Here, $U$ (resp. $V$) can be thought of as being associated with a ribbon graph edge  shared by ribbon graph faces of the same (resp. opposite) parity. For a third degree vertex, since there is a face between each neighboring edges, there are three faces (some may be the same) around it.
Either the three faces have the same parity, hence the term $U^3$, or one of the three has a parity different from the other two, hence the term $3 \, U \, V^2$. For topological reasons, this interpretation is only valid for planar graphs. For example, consider the torus graph shown in Fig.\,\ref{torcon}. This graph can be generated by \eqref{imf} if we do not restrict ourselves to planar graphs. Edges associated with $U$ are in green, and edges associated to $V$ are in orange. In this configuration, the placement of the green edges around almost all faces would imply that all faces have the same spin, while the existence of orange edges would imply that some faces have opposite spins, thus a contradiction.
\begin{figure}[htb]
\begin{minipage}[t]{0.6\textwidth}
    \centering
    \includegraphics[width=.5\linewidth]{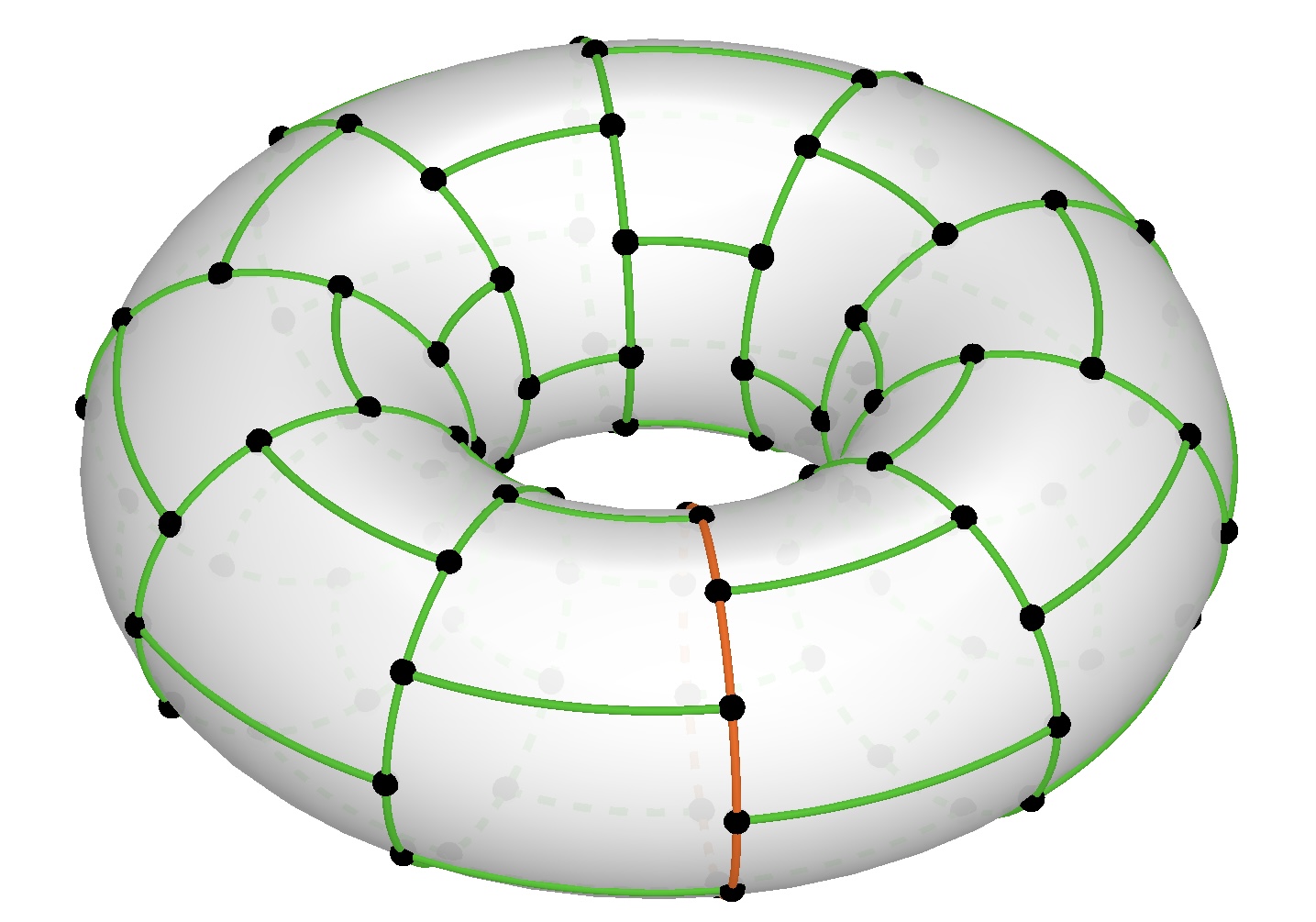}
    \caption{A torus graph with an edge configuration that does not allow an assignment of spin to the faces
    of ribbon graphs.
    }
    \label{torcon}
\end{minipage}
\end{figure}
Note however, that the action \eqref{imf} after simply the change of variables  should contain exactly the same information as the original action \eqref{IMaction}. The apparent restriction discussed above is only associated with the interpretation that we employed for $U$ and $V$ representing the same or opposite parities respectively.

\vskip 30pt

\section{Ising model coupled to CDT via matrix model}
\label{sec:CDTmmIsing}

We present a matrix model that describes Ising model on a two-dimensional manifold which has global foliation, by combining the ideas of the two models (CDT-like matrix model and Ising model) we introduced above in Sections \ref{Sect:CDTMM} and \ref{Sect:IM}. 
We would like to define an action of the matrix model which generates ribbon graphs satisfying desired properties of both CDT and Ising model. 
Hence, we require four types of half-edges to be considered: spacelike half-edges associated with spin-up vertices, spacelike half-edges associated with spin-down vertices, timelike half-edges associated with spin-up vertices, and timelike half-edges associated with spin-down vertices. 
Therefore, we introduce four matrices, $A_+$, $A_-$, $B_+$, and $B_-$ so that they induce notions of the space, time, and parity as desired. Each of these matrices is represented by a specific type of half-edges.
The following conditions are sufficient in order to implement the desired properties we described above:
To impose the existence of spacelike edges connecting equal parities, we set
\begin{equation}\label{eq1}
    \langle {A_{+}}_{ij}{A_{+}}_{kl} \rangle_0=\langle {A_{-}}_{ij}{A_{-}}_{kl} \rangle_0=\frac{1}{N}\frac{1}{1-\gamma^2}\delta_{il}\delta_{kj}\,.
\end{equation}
Similarly, for timelike edges connecting equal parities, we impose 
\begin{equation}\label{eq2}
    \langle {B_{+}}_{ij}{B_{+}}_{kl} \rangle_0=\langle {B_{-}}_{ij}{B_{-}}_{kl} \rangle_0=\frac{1}{N}\frac{1}{1-\gamma^2}{C_2}_{il}{C_2}_{kj}\,.
\end{equation}
As for spacelike edges connecting unequal parities, we write
\begin{equation}\label{eq3}
    \langle {A_{+}}_{ij}{A_{-}}_{kl} \rangle_0=\frac{1}{N}\frac{\gamma}{1-\gamma^2}\delta_{il}\delta_{kj}\,.
\end{equation}
Lastly, for timelike edges connecting unequal parities, we put
\begin{equation}\label{eq4}
    \langle {B_{+}}_{ij}{B_{-}}_{kl} \rangle_0=\frac{1}{N}\frac{\gamma}{1-\gamma^2}{C_2}_{il}{C_2}_{kj}\,.
\end{equation}
In order to make sure that only the four types of edges above \eqref{eq1}, \eqref{eq2}, \eqref{eq3}, and \eqref{eq4} exist, we set
\begin{equation}\label{eq5}
    \langle {A_{+}}_{ij}{B_{+}}_{kl}\rangle=\langle {A_{+}}_{ij}{B_{-}}_{kl}\rangle=\langle {A_{-}}_{ij}{B_{+}}_{kl}\rangle=\langle {A_{-}}_{ij}{B_{-}}_{kl}\rangle=0\,.
\end{equation}
The presence of spin-up and spin-down vertices is respectively imposed by introducing the terms $A_+^2B_+$ and $A_-^2B_-$ in the action. Therefore, 
\begin{multline}
    S_{\rm CDTIM}=\\
    N\;\mathrm{Tr}\left[\frac{1}{2}A_+^2+\frac{1}{2}\left({C_2}^{-1}B_+\right)^2+\frac{1}{2}A_-^2+\frac{1}{2}\left({C_2}^{-1}B_-\right)^2-\gamma A_+A_--\gamma({C_2}^{-1}B_+)({C_2}^{-1}B_-)-gA_+^2B_+-gA_-^2B_-\right ]\,.
\end{multline}
The partition function is then given by
\begin{equation}
    \mathcal{Z}=\lim_{N\rightarrow \infty}\ln\int dA_+dB_+dA_-dB_-\; e^{-S_{\rm CDTIM}}\;.
\end{equation}
We can also use similar transformations to \eqref{trf1} and \eqref{trf2}, that lead to \eqref{imf}, to arrive at the version where the spins are on the faces. After the transformation the action is
\begin{align}
    S_{\rm CDTIM}= N \tr\left[\frac{1}{2}\gamma^{-1}U_s^2+\frac{1}{2}\gamma^{-1}(C_2^{-1}U_t)^2+\frac{1}{2}\gamma V_s^2+\frac{1}{2}\gamma (C_2^{-1}V_t)^2
    -gU_s^2U_t-gV_s^2U_t-gU_sV_sV_t-gV_sU_sV_t\right]\;,
\end{align}
where we may interpret that $U_s$ generates spacelike edges between faces of the same spin, $U_t$ timelike edges between faces of the same spin, $V_s$ spacelike edges between faces of opposite spins, and $V_t$ timelike edges between faces of opposite spins.

\section{Properties of the matrices $C_m$}
\label{cmsec}

Let us study the properties of $C_m$ and find its expression explicitly.
 
\begin{theorem}
\label{thm:Cm}
Let $C_m$ be a $GL(N)$ matrix that satisfies $\mathrm{Tr}[(C_m)^q]=N\delta_{m,q}$ for $q=1,...,N$, with $N>0$ and $N=0 \;\mathrm{mod}\; m$. Its eigenvalues can be approximated by
\begin{equation}\label{rl}
    \lambda_{ts}=e^{\frac{2\pi}{m}is} \; \mathrm{W}(-e^{\frac{2\pi m}{N} i(t-1/2)-1})^{-\frac{1}{m}}\; \Big(1+{\mathcal O}(N^{-1/2})\Big)\,,\qquad t=1,...,N/m\;\; \text{and} \;\; s=1,...,m\;.
\end{equation}
\end{theorem}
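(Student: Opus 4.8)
The plan is to show that the hypothesis pins down the characteristic polynomial of $C_m$ \emph{exactly}, that this polynomial is (up to an elementary substitution) a Taylor section of $e^{z}$, and that \eqref{rl} is then just the classical asymptotics of the zeros of such sections rewritten through the Lambert function. \emph{Step 1 (power sums $\to$ characteristic polynomial).} Write $p_q=\mathrm{Tr}[(C_m)^q]$, so $p_q=N\delta_{m,q}$ for $1\le q\le N$, and let $e_k$ denote the elementary symmetric functions of the $N$ eigenvalues. Newton's identities $k\,e_k=\sum_{j=1}^{k}(-1)^{j-1}e_{k-j}\,p_j$ (with $e_0=1$) involve, for $k\le N$, only $p_1,\dots,p_N$, hence determine $e_1,\dots,e_N$ recursively. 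Since $p_j=0$ for $j\ne m$, induction gives $e_k=0$ whenever $m\nmid k$, while $jm\,e_{jm}=(-1)^{m-1}N\,e_{(j-1)m}$ yields $e_{jm}=\frac{(-1)^{(m-1)j}N^{j}}{m^{j}\,j!}$. Using $(-1)^{jm}(-1)^{(m-1)j}=(-1)^{j(2m-1)}=(-1)^{j}$, the characteristic polynomial becomes
\begin{equation}
P(\lambda)=\sum_{k=0}^{N}(-1)^{k}e_{k}\,\lambda^{N-k}
=\lambda^{N}\sum_{j=0}^{N/m}\frac{1}{j!}\Big(\!-\frac{N}{m\,\lambda^{m}}\Big)^{j},
\end{equation}
and the requirement $N\equiv 0\bmod m$ is precisely what keeps $\lambda=0$ from being a root, consistent with $C_m\in GL(N)$. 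Hence, with $n:=N/m$ and $z:=-n/\lambda^{m}$, the eigenvalues of $C_m$ are exactly the numbers $\lambda$ with $\lambda^{m}=-n/z_{n,k}$, where $z_{n,1},\dots,z_{n,n}$ are the zeros of the $n$-th Taylor section $s_n(z)=\sum_{j=0}^{n}z^{j}/j!$ of the exponential.

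\emph{Step 2 (zeros of $s_n$).} I would use $s_n(z)=e^{z}\Gamma(n+1,z)/n!$, together with $\Gamma(n+1,nw)=n^{n+1}\int_{w}^{\infty}e^{\,n(\ln\tau-\tau)}d\tau$, so that the zeros of $s_n$ in the variable $w=z/n$ are the zeros of $w\mapsto\int_{w}^{\infty}e^{n(\ln\tau-\tau)}d\tau$. A steepest-descent analysis balances the endpoint contribution at $\tau=w$ against that of the saddle at $\tau=1$ (where $\ln\tau-\tau=-1$, which feeds the Szegő curve $|we^{1-w}|=1$): a zero requires the two to cancel, forcing $\arg[(we^{1-w})^{n}]$ to be an odd multiple of $\pi$ to leading order, and produces the quantization
\begin{equation}
w\,e^{1-w}=\exp\!\Big(\frac{2\pi i}{n}\big(t-\tfrac12\big)\Big)\big(1+\mathcal{O}(n^{-1/2})\big),\qquad t=1,\dots,n,
\end{equation}
which is Szegő's theorem together with its refinement on the location of the individual zeros.

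\emph{Step 3 (solving for $\lambda$).} Setting $v=-w$ gives $we^{1-w}=-e\,(v e^{v})$, so the quantization reads $v e^{v}=-e^{\,\frac{2\pi i}{n}(t-1/2)-1}\big(1+\mathcal{O}(n^{-1/2})\big)$, i.e.\ $v=W\!\big(-e^{\,\frac{2\pi i}{n}(t-1/2)-1}\big)\big(1+\mathcal{O}(n^{-1/2})\big)$ with $W$ the principal Lambert function (its argument lying near the branch point $-e^{-1}$). Then $\lambda^{m}=-1/w=1/v$, and extracting the $m$ branches of the root gives $\lambda_{ts}=e^{2\pi i s/m}\,W\!\big(-e^{\,\frac{2\pi i}{n}(t-1/2)-1}\big)^{-1/m}\big(1+\mathcal{O}(n^{-1/2})\big)$ for $s=1,\dots,m$. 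Substituting $n=N/m$ turns $\frac{2\pi}{n}(t-\tfrac12)$ into $\frac{2\pi m}{N}(t-\tfrac12)$ and $\mathcal{O}(n^{-1/2})$ into $\mathcal{O}(N^{-1/2})$, which is exactly \eqref{rl}.

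\emph{Main obstacle.} The delicate point is the uniformity of the $\mathcal{O}(N^{-1/2})$ error, and it all concentrates at the tip $w=1$ of the Szegő curve — equivalently $t=\mathcal{O}(1)$ or $n-t=\mathcal{O}(1)$, equivalently $\lambda$ near an $m$-th root of $-1$. There the saddle $\tau=1$ coalesces with the endpoint $\tau=w$, the simple steepest-descent estimate degenerates into Airy-type behaviour, and the zeros sit at distance of order $n^{1/2}$ from $z=n$ (so $w-1=\mathcal{O}(n^{-1/2})$); since $W$ has a square-root branch point at precisely the corresponding argument $-e^{-1}$, this is what degrades the error of \eqref{rl} from the $\mathcal{O}(N^{-1})$ valid for $w$ bounded away from $0$ and $1$ down to $\mathcal{O}(N^{-1/2})$. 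Making this uniform rigorously requires the uniform asymptotics of the incomplete gamma function or, equivalently, the known sharp results on zeros of Taylor sections of $e^{z}$; one should also check that the bounded steepest-descent prefactors dropped in Step 2 affect only the remainder, not the displayed leading term.
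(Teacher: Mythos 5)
Your proposal is correct and follows essentially the same route as the paper: Newton's identities turn the prescribed power traces into the characteristic polynomial $\lambda^{N}s_{N/m}(-N/(m\lambda^{m}))$, the Szeg\H{o}-type asymptotics for the zeros of the partial exponential sums give the quantization $we^{1-w}=e^{\frac{2\pi i}{n}(t-1/2)}(1+\mathcal{O}(n^{-1/2}))$, and the Lambert $W$ inversion with the $m$-th roots yields \eqref{rl}. The only difference is cosmetic: you sketch the incomplete-gamma/steepest-descent derivation of the zero asymptotics and discuss the uniformity of the error near the branch point, whereas the paper simply cites the known results on zeros of the exponential's Taylor sections.
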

\begin{proof}
We show \eqref{rl} by finding the characteristic polynomial of $C_m$. 
Recall \cite{mehta2004random} that, if the characteristic polynomial of a $GL(N)$ matrix $C$ is given by
\begin{equation}\label{cmp}
    p(\lambda)=\det(C-\lambda\mathbb{1})=\sum_{k=0}^N \pi_{k}(-\lambda)^{N-k}\,,
\end{equation}
then the coefficients $\pi_k$ are related to the power traces of $C$, $t_k=\mathrm{Tr}\,(C^k)$, by Girard-Newton formulas
\begin{equation}
    \pi_k= (k!)^{-1}\det[a_{ij}]_{i,j=1,...,k}\,,
\end{equation}
where
\begin{equation}
a_{ij}=\begin{cases}
t_{j-i+1} & \text{ if } i\leq j \\
j & \text{ if } i=j+1\\
0 & \text{ if } i\geq j+1\;.
\end{cases}
\end{equation}
Evaluating for $C=C_m$, we have $\pi_k=0$ for $k\neq 0$ mod $m$, and for $k=mr$, with $r$ an integer satisfying $0\leq r\leq k/m$, we get
\begin{equation}
    \pi_{mr}=\frac{(-1)^{mr}N^r}{(-m)^r r!}\;.
\end{equation}
Therefore, imposing that $N$ is divisible by $m$,
\begin{equation}\label{cha}
    \begin{split}
        p_m(\lambda)&=\sum_{r=0}^{N/m}\pi_{mr}(-\lambda)^{N-mr}=\sum_{r=0}^{N/m}\frac{(-1)^{mr}N^r}{(-m)^r r!}(-\lambda)^{N-mr}\\
        &=(-\lambda)^N\sum_{r=0}^{N/m}\frac{1}{r!}\left(-\frac{N}{m}\lambda^{-m}\right)^{r}
        =(-\lambda)^N s_{N/m}\left(-\frac{N}{m}\lambda^{-m}\right)\,,
    \end{split}
\end{equation}
where $s_n$ is the nth partial exponential sum polynomial, and we only need to investigate its zeros because we are interested in eigenvalues $\lambda_{ts}$. In the large n limit, the polynomials $s_n(n z)$ have been shown \cite{Walker, Stephen} to have the roots lying on the curve $|z e^{1-z}|=1$ with $|z|\leq1$, and  $\mathrm{arg}(z e^{1-z})$ increases monotonically from each zero. The zeroes $z_t$, $t=1,...,n$, are approximately given by $z_t=\tilde{z}_t(1+{\mathcal O}(N^{-1/2}))$, with
\begin{equation}
    {\tilde z}_t e^{1-{\tilde z}_t}=e^{\frac{2\pi}{n} i(t-1/2)}\,,\qquad t=1,...,n\,.
    \label{eq:zeros}
\end{equation}
Using the principal branch of the Lambert $W$ function, $W$, defined by $W^{-1}(z)=z e^z$, we invert \eqref{eq:zeros}
\begin{equation}
    {\tilde z}_t=-\mathrm{W}(-e^{\frac{2\pi}{n} i(t-1/2)-1})\;.
    \label{eq:zt}
\end{equation}
Now, in order to go back to eigenvalues $\lambda_{ts}$, we use $n=N/m$ and $-\lambda_{ts}^{-m}={\tilde z}_t(1+{\mathcal O}(N^{-1/2}))$ in \eqref{eq:zt}, and
\begin{equation}
    \lambda_{ts}=e^{\frac{2\pi}{m}is}\mathrm{W}(-e^{\frac{2\pi m}{N} i(t-1/2)-1})^{-\frac{1}{m}}(1+{\mathcal O}(N^{-1/2}))\,,\qquad t=1,...,N/m\;\; \text{and} \;\; s=1,...,m\;.
    \label{eq:lambdatsapp}
\end{equation}
We note that the eigenvalues are discrete, and the set is finite, with $N$ eigenvalues in total. 
\end{proof}

We let Mathematica solve numerically the zeros of the polynomial \eqref{cha}. We compare those with \eqref{eq:lambdatsapp}. We show in Fig.\,\ref{fig:c1e} and in Fig.\,\ref{fig:C2} a comparison for different values of $N$ for $C_1$ and $C_2$.
\begin{figure}[t]
\centering
    \begin{subfigure}{0.24\textwidth}
    \centering
    \includegraphics[width=1\linewidth]{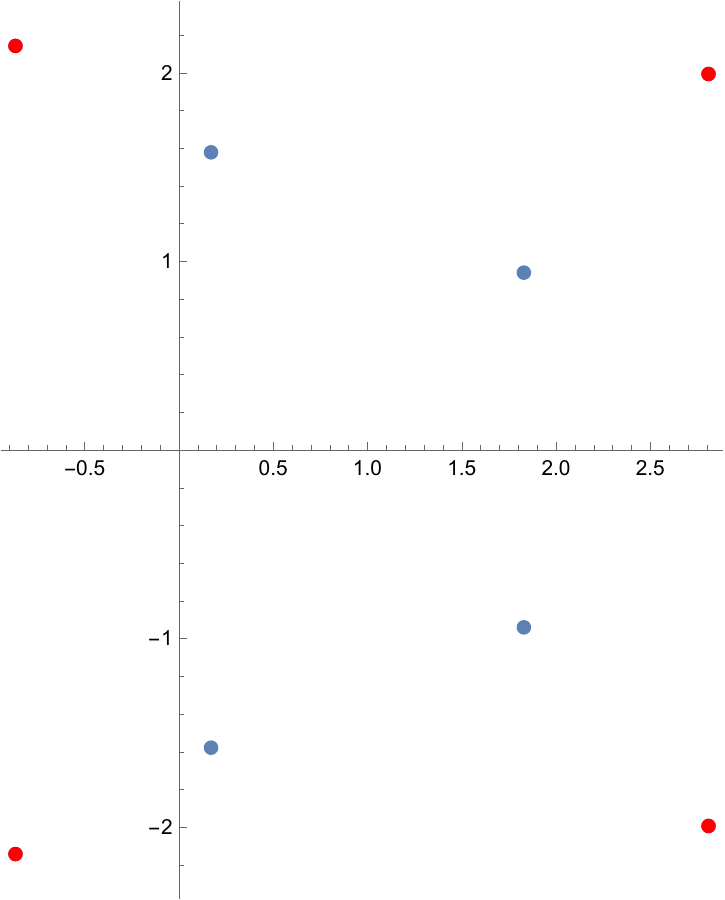}
    \caption{$N=4$}
    \label{fig:p4}
    \end{subfigure}
    \begin{subfigure}{0.24\textwidth}
    \centering
    \includegraphics[width=1\linewidth]{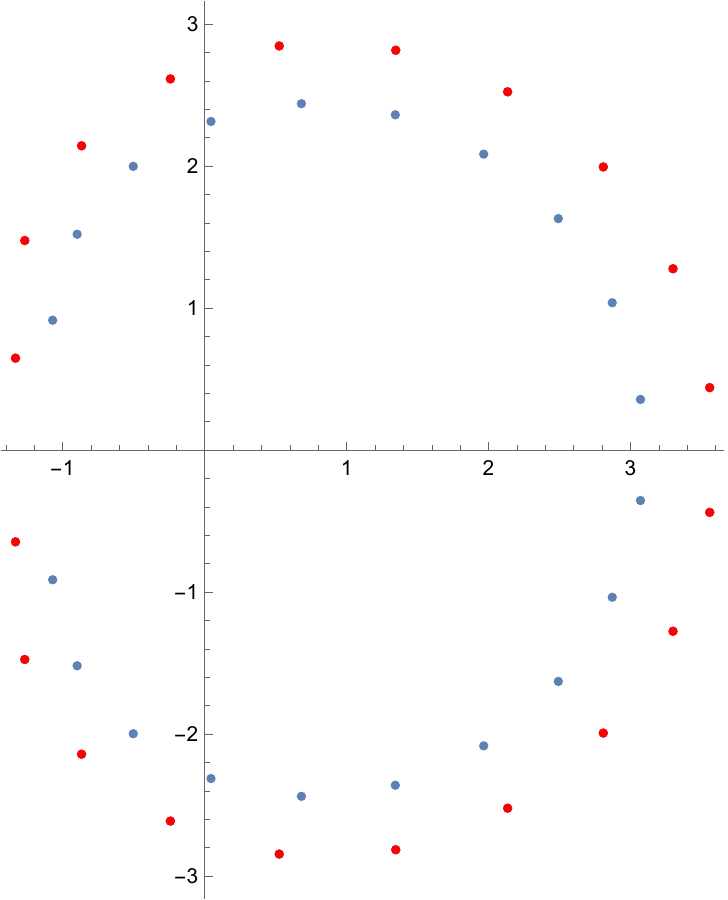}
    \caption{$N=20$}
    \label{fig:p20}
    \end{subfigure}
    \begin{subfigure}{0.24\textwidth}
    \centering
    \includegraphics[width=1\linewidth]{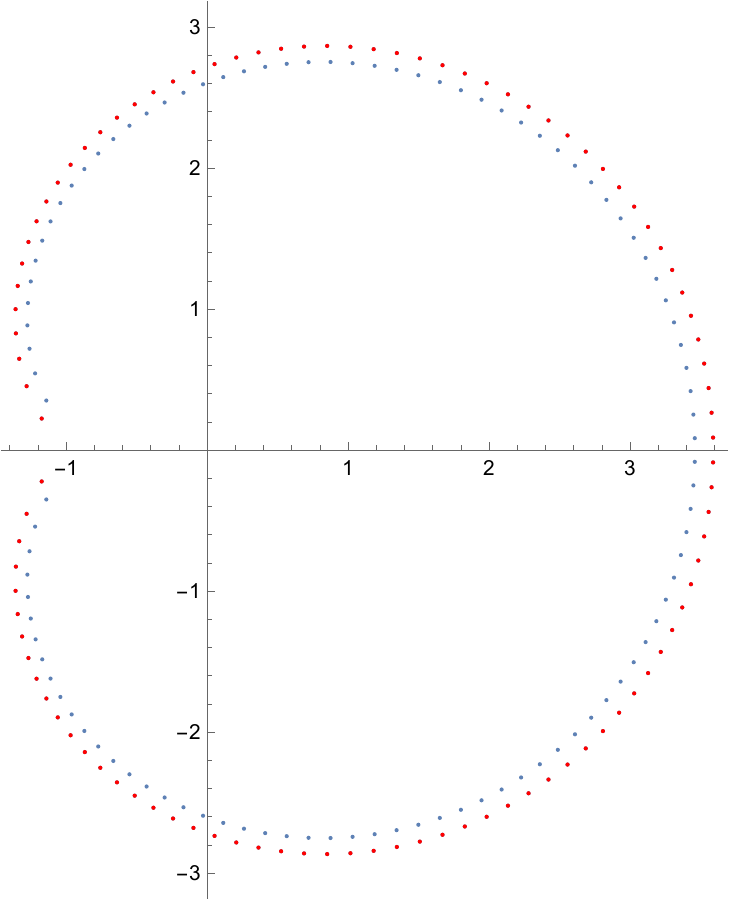}
    \caption{$N=100$}
    \label{fig:p100}
    \end{subfigure}
    \begin{subfigure}{0.24\textwidth}
    \centering
    \includegraphics[width=1\linewidth]{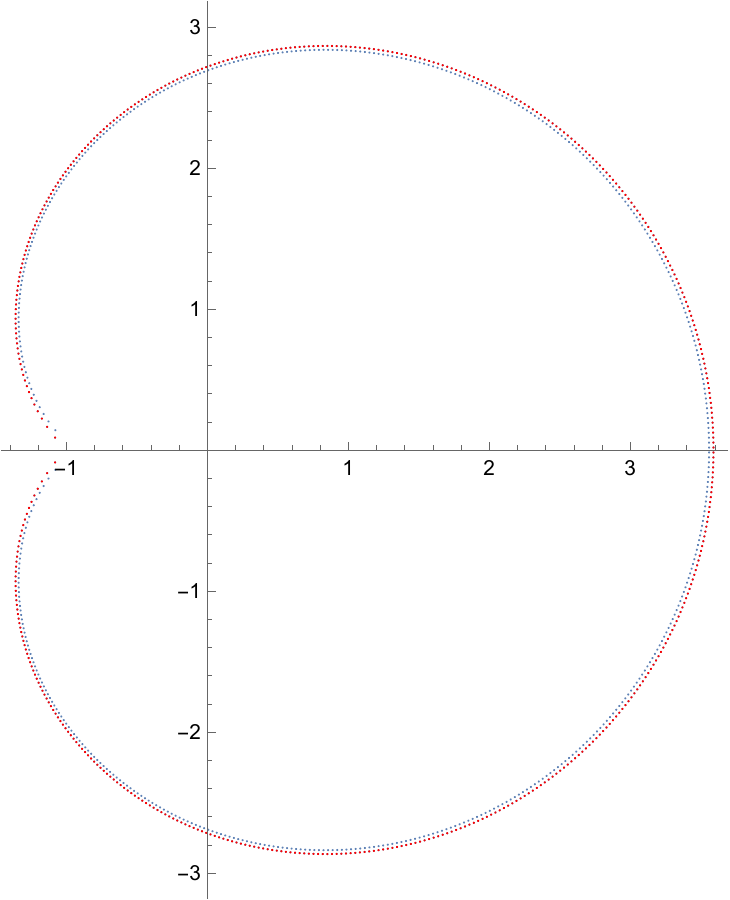}
    \caption{$N=500$}
    \label{fig:p500}
    \end{subfigure}
\caption{Eigenvalues of $C_1$ in their complex plane, with vertical axis being $\mathrm{Im}(\lambda)$ and the horizontal $\mathrm{Re}(\lambda)$. In blue, we plot the numerical values of the zeros of the polynomial $p_1(\lambda)$ in \eqref{cha} using Mathematica, and in red  we plot the values obtained from \eqref{eq:lambdatsapp} for large $N$.}
\label{fig:c1e}
\end{figure}

\begin{figure}[t]
\centering
\begin{minipage}[t]{0.8\textwidth}
    \centering
    \includegraphics[width=0.8\linewidth]{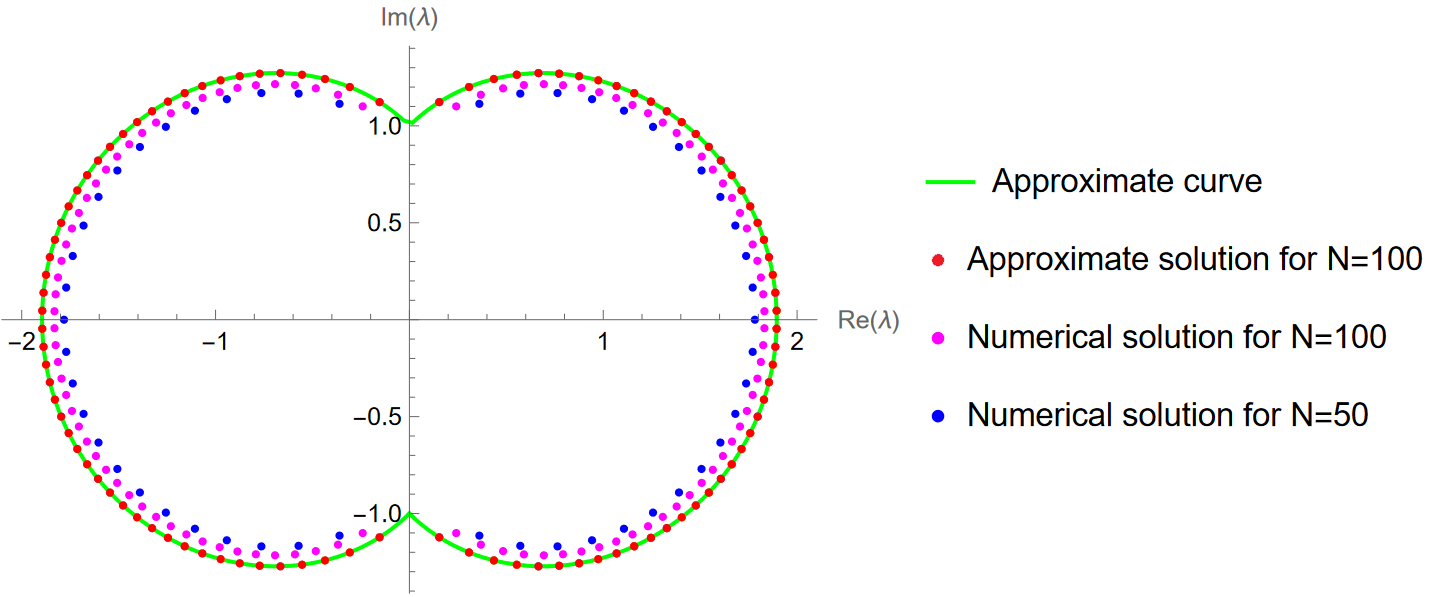}
\caption{Eigenvalues of $C_2$. We obtained the approximate curve in green by letting $t$ and $s$ continuous parameters in \eqref{eq:lambdatsapp}}.
\label{fig:C2}
\end{minipage}
\end{figure}
Taking the values of the roots of \eqref{cha} computed by Mathematica and evaluating the power traces $\tr (C_1)$ and $\tr (C_2)^2$, we indeed obtain the correct results within the machine error.
Now, let us also check the validity of the expression \eqref{eq:lambdatsapp}, by inserting the values of $\lambda_{ts}$ back to compute the power traces $\tr (C_1)$ and $\tr (C_2)^2$. We found that for $m \ne q$, we obtain $0$ up to certain power ($q_{max}$) of $C_m$, but the value of the power traces quickly increases after we reach $q_{max}$. For example, $q_{max}$ is about $20$ for $N = 10000$. We find that the value of $q_{max}$ also increases as $N$ increases.

\medskip
Let us consider the resolvent of the matrix $C_m$
\begin{equation}\label{inv1}
    \frac{1}{N}\tr\left[\frac{1}{1-\mu \, C_m}\right]=\frac{1}{N}\sum_{k=0}^\infty\tr[(\mu \, C_m)^k]=1+\mu^{m}\,.
\end{equation}
The resolvent \eqref{inv1} evaluates to 
\begin{equation}\label{inv2}
    \lim_{N \rightarrow \infty} 
    \sum_{t=1}^N \frac{1}{N} \frac{1}{1-\mu \lambda_t}
    =
    \int_0^1 dx \frac{1}{1-\mu\lambda(x)}=\int_{\gamma}d\lambda\frac{\rho(\lambda)}{1-\mu\lambda}\,, 
\end{equation}
where $x=t/N$ and $\rho(\lambda)=dx/d\lambda$ is called the spectral density, and the indices $s$ and $t$ are joined to only $t$. 
By equating \eqref{inv1}  and \eqref{inv2}, we can solve for $\rho(\lambda)$.

Let us change the variables, $\alpha=\lambda^{-1}$;
\begin{equation}\label{resv}
    \int_{\gamma}d\lambda\frac{\rho(\lambda)}{1-\mu\lambda}=\int_{\gamma'}d\alpha\,\frac{-\alpha^{-1}\rho(\alpha^{-1})}{\alpha-\mu}\,.
\end{equation}
Assuming $\gamma'$ is a simple closed curve enclosing $\mu$, using Cauchy's theorem we can use that
\begin{equation}
    1+\mu^m=\frac{1}{2\pi i}\int_{\gamma'} d\alpha \frac{1+\alpha^{m}}{\alpha-\mu}\;.
    \label{eq:cauchy}
\end{equation}
Equating \eqref{resv} and \eqref{eq:cauchy}, we find a solution
\begin{equation}
    -\alpha^{-1}\rho(\alpha^{-1})=\frac{1}{2\pi i}(1+\alpha^{m})\,, 
\end{equation}
which, in terms of $\lambda$, translates to
\begin{equation}
\label{dens}
    \rho(\lambda)=-\frac{1}{2\pi i}(\lambda^{-1}+\lambda^{-m-1})\;.
\end{equation}
At this point, let us check the validity of \eqref{eq:lambdatsapp}.
Taking \eqref{eq:lambdatsapp} to the power of $m$, 
\begin{equation}
    \lambda^m=\mathrm{W}(-e^{\frac{2\pi m}{N} i(t-1/2)-1})^{-1}\,,
    \label{eq:lambdam}
\end{equation}
to the leading order in large $N$.
Manipulating the expression \eqref{eq:lambdam}, 
\begin{equation}\label{psol}
    -\lambda^{-m} e^{1+\lambda^{-m}}=e^{\frac{2\pi m}{N} i(t-1/2)}= e^{2\pi m i x} \left(1+{\mathcal O}(N^{-1})\right)\,,
\end{equation}
where we change variables $x = t/N$.
Taking a derivative with respect to $\lambda$, 
\begin{equation}
    (-m\lambda^{-1}-m\lambda^{-m-1})(-\lambda^{-m} e^{1+\lambda^{-m}})=2\pi m i\frac{dx}{d\lambda}e^{2\pi m i x}\left(1+{\mathcal O}(N^{-1})\right)\;\,,
\end{equation}
and noting that $\rho(\lambda) = dx/d\lambda$, we again obtain the same solution \eqref{dens}.
This concludes checking the validity of the solution \eqref{eq:lambdatsapp}.

Let us go back to \eqref{dens} and continue solving for the distribution of $\lambda$.
Inserting $\rho(\lambda) = dx/d \lambda$, where $x \in {\mathbb {R}}$ with $0 \le x \le 1$ in \eqref{dens} and integrating both sides, we obtain
\begin{equation}
    \lambda^{-m}e^{\lambda^{-m}}=\kappa \, e^{2\pi m i x} \,,
    \label{eq:lambdakappa}
\end{equation}
where $\kappa$ is some integration constant.
In \eqref{eq:lambdatsapp}, $\kappa$ was determined to be $\kappa=-e^{-1}$.
Therefore, in Fig.\,\ref{fig:c1e} and  Fig.\,\ref{fig:C2} we only had one curve $\gamma$ each. However,  \eqref{eq:lambdakappa} gives us a series of solutions because $\kappa$ is free.
A larger class of solutions means that now we should have many curves representing families of eigenvalues, with each curve corresponding to a particular value of $\kappa$.
The observation of \eqref{eq:lambdakappa} is that, 
given two possible values of $m$, $m_1$ and $m_2$, their solutions are related by $\lambda_1^{m_1}=\lambda_2^{m_2}$. Therefore, we can focus on $m=1$ without losing any essential information.
For $m=1$, we have
\begin{equation}\label{ras}
    \lambda^{-1}e^{\lambda^{-1}}=\kappa \,  e^{2\pi i x}\;.
\end{equation}
Then, each value of $\kappa$ is associated with a distinct solution for $C_m$. 
The eigenvalues of $C_m$ are represented by a curve  $\gamma_\kappa$. 
Equation \eqref{ras} can be inverted with the help of the $W$ Lambert function, 
\begin{equation}
    \lambda=W_b(\kappa \, e^{2\pi i x})^{-1}\,,
\end{equation}
where $W_b$ is a branch of $W$ and $b$ is an integer we assign to each branch. 
In Fig.\,\ref{fig:w}, we show a few of the infinitely many branch regions of the $W$ function with $z=\lambda^{-1}$. 
Let us assign $b=0$ to the branch where the point $z=0$ belongs to. 
As is shown in Fig.\,\ref{fig:abs}, only curves around the origin can be closed. 
These closed curves are the contours associated with the values of $|\kappa|=|ze^z|$ with $0\leq |\kappa|\leq e^{-1}$. The maximum value $|\kappa| =e^{-1}$ is associated with the branch point of $W_0$. Our previous solution \eqref{psol} picked this particular value $|\kappa| =e^{-1}$.
However, our current analysis indicates that all these closed curves around the origin are valid not just $|\kappa| =e^{-1}$.
In the previous solution \eqref{psol}, the condition $\kappa=-e^{-1}$ appeared because of the polynomials $s_{N/m}$ in \eqref{cha}. We speculate that the reason for the existence of multiple solutions is related to the fact that the large $N$ limit condition $\lim_{N\rightarrow\infty}\frac{1}{N}\tr [(C_m)^p]=\delta_{p,m}$ can be approached through different ways.
That is, $\kappa$ may be a function of $\tr [C_m]$ and $\tr [(C_m)^m]$, and this function might not have trivial limits. For example, $\tr [{(\tilde{C}_m)}^p ]=N \delta_{p,m} +\delta_{p,1}$ has the same large $N$ limit, $\lim_{N\rightarrow\infty}\frac{1}{N}\tr ((\tilde{C}_m)^p)=\delta_{p,m}$. 
The value of $\kappa$ may vary when changing the condition on $\tr C_m$.

\begin{figure}[t]
\begin{minipage}[t]{0.9\textwidth}
    \begin{subfigure}{0.45\textwidth}
    \centering
    \includegraphics[width=\linewidth]{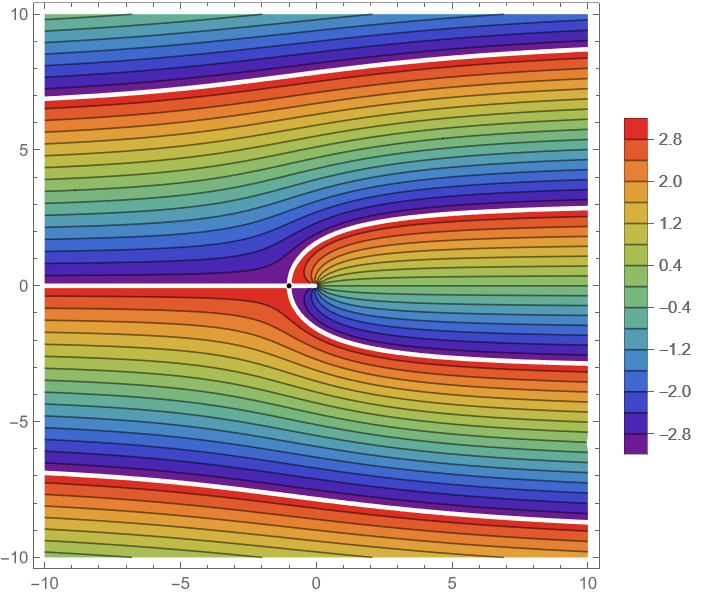}
    \caption{Contour lines of $\mathrm{arg}(ze^z)$. Complex plane of $z$ (${\mathrm {Re}}(z)$ on horizontal, and ${\mathrm {Im}}(z)$ on vertical).}
    \label{fig:arg}
    \end{subfigure}
    \begin{subfigure}{0.45\textwidth}
    \centering
    \includegraphics[width=0.97\linewidth]{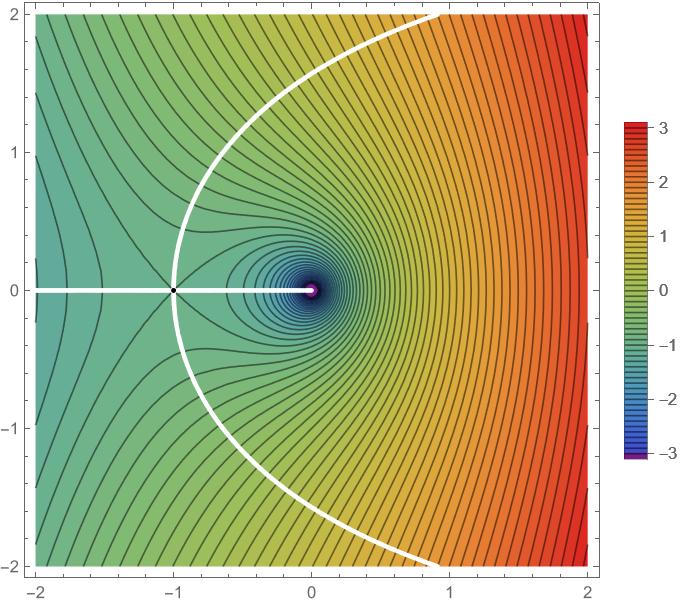}
    \caption{Contour lines of $\mathrm{ln}{|ze^z|}$. Complex plane of $z$ (${\mathrm {Re}}(z)$ on horizontal, and ${\mathrm {Im}}(z)$ on vertical).}
    \label{fig:abs}
    \end{subfigure}
\caption{Plots for the function $ze^z$. 
On the white curves, the image of $ze^z$ is on the negative real line. The black point is the branch point.}
\label{fig:w}
\end{minipage}
\end{figure}

\section{Character expansion of partition function}
\label{sec:characterexp}

For a $N \times N$ matrix $M$ with eigenvalues $M_k$, where $k \in \mathbb{Z}$ with its range $1 \le k \le N$ and given a set of $N$ increasing non-negative integers $\{h\}=\{h_1,...,h_N\}$,  
the character of this matrix can be given by
\begin{equation}\label{chard}
    \chi_{\{h\}}(M) = \frac{\det_{(k,l)}[(M_k)^{h_l}]}{\Delta (M)}\,,
\end{equation}
where
\begin{equation}\label{van}
    \Delta(M)= \det_{(k,l)}[(M_k)^{l-1}]= \prod_{1\leq i<j \leq N} (M_j-M_i)
\end{equation}
is the Vandermonde determinant. In general, the set $\{h\}$ can be used to uniquely label a representation of $\mathrm{GL}(N)$. We have as a result of character expansion 
\cite{Kazakov_1996}
\begin{equation}
    e^{\mathrm{Tr} M^2}\sim \sum_{\{h\}}c_{\{h\}}\chi_{\{h\}}(M) \,,
    \label{eq:characterexpansion}
\end{equation}
with a summation over all such sets $\{h\}$ and where, as we use from now on, $\sim$ indicates equality up to a constant. We separate the set $\{h\}$ into a set of only even integers $\{h\}^e$ and of only odd integers $\{h\}^o$. If the numbers of even and odd $h_l$ are equal or the numbers of even is one more than the number of odd, the coefficient $c_{\{h\}}$ is given by 
\begin{equation}
    c_{\{h\}}=\frac{\Delta (\{h\}^e)\Delta(\{h\}^o)}{\prod_{l=1}^{N} \left\lfloor \{h_l\}/2\right\rfloor ! }\,,
    \label{eq:charactercoefficent}
\end{equation}
where, in a similar manner to \eqref{van},  for a set of complex numbers $\{m\}=\{m_1,...,m_N\}$, the Vandermonde is defined as
\begin{equation}
    \Delta(\{m\})= \det_{(k,l)}[(m_k)^{l-1}]= \prod_{1\leq i<j \leq N} (m_j-m_i)\;.
\end{equation}
We use $\lfloor{\cdot}\rfloor$ to represent the floor function. When the condition above is not met, we simply have
\begin{equation}\label{c0}
    c_{\{h\}}=0\;.
\end{equation}
The character expansion for the more general case $e^{\mathrm{Tr}M^k}$, for integer $k\geq 1$ is also known \cite{Kazakov_1996}.

\subsection{Character expansion for the pure CDT-like matrix model of Benedetti-Henson}

The partition function \eqref{eq:CDTMMDarioABC2} for the pure CDT-like matrix model due to Benedetti-Henson studied in \cite{Benedetti:2008hc}, after integrating out $B$ is given by 
\begin{equation}
    Z = \int dA\; e^{-N \tr [\frac{1}{2}A^2-\frac{g^2}{2}(A^2C_2)^2]}\;.
    \label{eq:CDTMMDarioPFAC2}
\end{equation}
By applying the character expansion given in \eqref{eq:characterexpansion} with $M=\sqrt{\frac{N}{2}g^2}A^2C_2$, we expand the partition function \eqref{eq:CDTMMDarioPFAC2},
\begin{equation}
    Z \sim \sum_{\{h\}} \left (N\frac{g^2}{2}\right)^{\# h/2} c_{\{h\}} \int dA\; \chi_{\{h\}}(A^2C_2) e^{-N \tr [\frac{1}{2}A^2]}\,,
\end{equation}
where $\#h=\sum_{l=1}^N h_l -\frac{1}{2}N (N-1)$. With the diagonalization $A=\Omega\Lambda\Omega^\dagger$, with $\Omega \in \mathrm{U}(N)$ and $\Lambda$ a diagonal matrix, we preform a change of variables and rewrite
\begin{equation}
    Z \sim \sum_{\{h\}}\left (N\frac{g^2}{2}\right)^{\# h/2} c_{\{h\}} \int_{\mathbb{R}^N} d\Lambda\; \Delta(\Lambda)^2 e^{-N \tr [\frac{1}{2}\Lambda^2]} \int_{U(N)} d\Omega \;\chi_{\{h\}}(\Omega \Lambda^2 \Omega^\dagger C_2)\;.
\end{equation}
In order to factorize the dependence on $C_2$, let us 
use the Schur orthogonality relation, 
\begin{equation}\label{ort1}
    \int_{U(N)} d\Omega\; \chi_{\{h\}}(\Omega \Lambda^2 \Omega^\dagger C_2) = \frac{\chi_{\{h\}}(\Lambda^2)\chi_{\{h\}}(C_2)}{d_{h}}\,,
\end{equation}
where $d_{h}=\Delta(h)/\prod_{i=1}^{N-1}i!$ is the dimension of the representation given by $\{h\}$. We can then rewrite 
\begin{equation}
    Z \sim \sum_{\{h\}}\left (N\frac{g^2}{2}\right)^{\# h/2} \frac{c_{\{h\}}}{d_{h}} \, \chi_{\{h\}}(C_2) \int_{\mathbb{R}^N} d\Lambda \; \Delta(\Lambda)^2 \; \chi_{\{h\}}(\Lambda^2) \; e^{-N \tr [\frac{1}{2}\Lambda^2]} \,,
\label{eq:Zhy}
\end{equation}
now  with the integral independent of $C_2$ as desired, and $c_{\{h\}}$ is given in \eqref{eq:charactercoefficent} or \eqref{c0}, depending on which case $\{h\}$ satisfies. By changing the variables back to $A=\Omega\Lambda\Omega^\dagger$, the integral part of the expression  \eqref{eq:Zhy} can be written, up to a constant, as 
\begin{equation}
\label{in2}
\langle \chi_{\{h\}}(A^2)\rangle_0
=\frac{1}{Z_0}
\int dA\;\chi_{\{h\}}(A^2) \; e^{-N\tr \frac{1}{2} A^2}\;.
\end{equation} 
This integral had yet no known general solution in terms of the representation $\{h\}$. A conjecture was also given in \cite{Benedetti:2008hc}. It was proposed that
\begin{equation}\label{conj1}
    \langle \chi_{\{h\}}(A^2)\rangle_0 
    =
    k_N \frac{1}{N^{\#h}}\prod_{\epsilon=0}^{3}\Delta(2h^{(\epsilon)})^2\prod_i (2h_i^{(\epsilon)})!!\;,
\end{equation}
where the set of integers $\{h\}$ has been divided into four sets $\{h^{(0)}\},...,\{h^{(3)}\}$ according to equivalence modulo 4.

The conjecture stated in \eqref{conj1}, given in \cite{Benedetti:2008hc}, can be rewritten as
\begin{equation}\label{conj2}
k_N 
=
\frac
{ \langle \chi_{\{h\}}(A^2)\rangle_0 }
{ \frac{1}{N^{\#h}}\prod_{\epsilon=0}^{3}\Delta(2\{h\}^{(\epsilon)})^2\prod_i (2h_i^{(\epsilon)})!!}
     \;,
\end{equation}
where $\{h\}^{(\epsilon)}=\{a\in {\{h\}}\;|\; a=\epsilon \;\mathrm{mod}\; 4 \}$ and for a set $B$, we define $2B=\{2b\,|\,b\in B\}$. 
$k_N$ only depends on $N$, and not on $\{h\}$. 
We can test this conjecture by applying it to some representations which we already know the result. 
Fix $N=4$. For the trivial representation, which is given by $\{h\}=\{3,2,1,0\}$, the conjecture \eqref{conj2} tells us that the proportionality factor is $k_N=1/768$.
On the other hand, for the defining representation, which is given by $\{h\}=\{4,2,1,0\}$, the conjecture tells us that the proportionality factor should be $k_N=1/(8\cdot768)$.
Therefore, a contradiction.

Furthermore, let us look at representations that have size proportional to $N$. For example, the $k$th power determinant representation is one. The representation $\{h\}$ for which $\chi_{\{h\}}(A^2)=\mathrm{det}(A^{2q})$ is given by $\{h\}=\{q+N-1,q+N-2,...,q+1,q\}$. Evaluating the average integral \eqref{avdef} for $f(A)=\mathrm{det}(A^{2q})$ we obtain 
\begin{equation}
    \langle \mathrm{det}A^{2q}\rangle_0 = N^{-Nq} \prod_{k=0}^{\frac{N}{2}-1}\frac{(2q+2k+1)!!(2q+2k-1)!!}{(2k+1)!!(2k-1)!!}
\end{equation} 
According to \eqref{conj2}, we would have
\begin{equation}
k_N
=
\frac
{\prod_{k=0}^{\frac{N}{2}-1}\frac{(2q+2k+1)!!(2q+2k-1)!!}{(2k+1)!!(2k-1)!!}}
    {
     2^N2^{N\left(\frac{N}{4}-1\right)}\left(\prod_{i=0}^{\frac{N}{4}-1}i!\right)^8\prod_{k=1}^N (q+N-k)!
     }
     \;.
\end{equation}
We then see that $k_N$ depends on the representation through $q$, which again goes against the assumption that $k_N$ only depends on $N$.

\subsection{Character expansion for CDT-like matrix model coupled with Ising model}
For CDT-like matrix model coupled with Ising model, the expression is more involved. We write its partition function as
\begin{equation}
Z=\int dA_1dA_2 \;e^{-N\tr[\frac{1}{2}A_1^2+\frac{1}{2}A_2^2-\gamma A_1A_2-\frac{1}{2}\frac{g^2}{1-\gamma}(A_1^2C_2)^2-\frac{1}{2}\frac{g^2}{1-\gamma}(A_2^2C_2)^2-\frac{1}{2}\frac{\gamma g^2}{1-\gamma^2}((A_1^2+A_2^2)C_2)^2]}\,,
\end{equation}
and diagonalize the kinetic part,
\begin{equation}
\label{2nd}
Z=\int dUdV \;e^{-N\tr[\frac{1}{2}(1-\gamma)U^2+\frac{1}{2}(1+\gamma)V^2-\frac{1}{4}\frac{g^2}{1-\gamma}((U^2+V^2)C_2)^2-\frac{1}{4}\frac{g^2}{1+\gamma}((UV+VU)C_2)^2]}\,,
\end{equation}
where $U=(A_1+A_2)/\sqrt{2}$ and $V=(A_1-A_2)/\sqrt{2}$. 
The character expansion \eqref{eq:characterexpansion} for the latter partition function \eqref{2nd} 

\begin{equation}
\label{doubleexpansion}
    Z \sim  \sum_{\{h^1\},\{h^2\}}\left(\frac{N g^2}{4(1-\gamma)}\right)^{\# h^1/2}\left(\frac{N g^2}{4(1+\gamma)}\right)^{\# h^2/2} c_{\{h^1\}}c_{\{h^2\}} 
    \; 
    I_{\{h^1\},\{h^2\}} 
\end{equation}
where the sum is performed over representations labeled by $\{h^1\}$ and $\{h^2\}$, and we defined the integral
\begin{equation}
        I_{\{h^1\},\{h^2\}}=\int dUdV\;e^{-N\tr[\frac{1}{2}(1-\gamma)U^2+\frac{1}{2}(1+\gamma)V^2]}
        \; \chi_{\{h^1\}}((U^2+V^2)C_2) \; \chi_{\{h^2\}}((UV+VU)C_2)\;.
\end{equation}
Transform $U=\Omega_1\Lambda_1\Omega_1^\dagger$ and $V=\Omega_2\Lambda_2\Omega_2^\dagger$,
and furthermore, by substituting $\Omega_2=\Omega_1\Omega$, we have
\begin{equation}
    \begin{split}
        I_{\{h^1\},\{h^2\}}=&\int_{\mathbb{R}^N\times\mathbb{R}^N} d\Lambda_1d\Lambda_2\;\Delta(\Lambda_1)^2\Delta(\Lambda_2)^2e^{-N\tr[\frac{1}{2}(1-\gamma)\Lambda_1^2+\frac{1}{2}(1+\gamma)\Lambda_2^2]}\\
        &\int_{U(N)\times U(N)} d\Omega_1d\Omega\;\chi_{\{h^1\}}(\Omega_1(\Lambda_1^2+\Omega\Lambda_2^2\Omega^{\dag})\Omega_1^{\dag}C_2) \; \chi_{\{h^2\}}(\Omega_1(\Lambda_1\Omega\Lambda_2\Omega^{\dag}+\Omega\Lambda_2\Omega^{\dag}\Lambda_1)\Omega_1^{\dag}C_2)\,,
    \end{split}
    \label{eq:Ih1h2}
\end{equation}
that allows us to integrate over $\Omega_1$, however, the orthogonality relation  \eqref{ort1} is not enough, but one has to be equipped with a more general expression for Schur orthogonality relation.

\section{Unitary integral of degree $4$ monomial}
\label{sec:unitintmono}

A more general way to state the Schur orthogonality relation is that for a compact group $G$ and for matrix elements $\phi^\alpha_{ij}(g)$ of an element $g \in G$, where the upper index indicates the representation and the lower indices indicate the corresponding elements of a given orthogonal basis,
\begin{equation}
    \int_G \phi^{\alpha}_{i j}(g) \overline{\phi^{\beta}_{k l}(g)} dg = \frac{1}{d_{\alpha}}\delta_{\alpha \beta}\delta_{ik}\delta_{jl}.
    \label{eq:generalSchurOrtho}
\end{equation}
Here, with $G$ as $U(N)$ and for $\alpha=\beta=h$ by multiplying both sides by $\phi^h_{jl}(\Lambda^2)$ and $\phi^h_{ki}(C_2)$ and summing over the repeated lower indices, we recover \eqref{ort1}. Let us apply \eqref{eq:generalSchurOrtho} to the unitary integral of \eqref{eq:Ih1h2}, by defining the matrices which are independent of $\Omega_1$ as $M_1=\Lambda_1^2+\Omega\Lambda_2^2\Omega^{\dag}$ and $M_2=\Lambda_1\Omega\Lambda_2\Omega^{\dag}+\Omega\Lambda_2\Omega^{\dag}\Lambda_1$.
Then, we can factorize the expression
\begin{equation}
    \begin{split}
        \chi_{\{h^1\}}(\Omega_1M_1\Omega_1^{\dag}C_2)
        \;
        \chi_{\{h^2\}}(\Omega_1M_2\Omega_1^{\dag}C_2)=\;\phi^{h^1}_{ab}(\Omega_1)\phi^{h^1}_{bc}(M_1)\phi^{h^1}_{cd}(\Omega^{\dag}_1)\phi^{h^1}_{da}(C_2)\phi^{h^2}_{pq}(\Omega_1)\phi^{h^2}_{qr}(M_2)\phi^{h^2}_{rs}(\Omega_1^\dag)\phi^{h^2}_{sp}(C_2)\;.
    \end{split}
\end{equation}
Then, in \eqref{eq:Ih1h2}, we have an integral
\begin{equation}
   I_{ij}= \int_{U(N)} \phi^{\alpha}_{a b}(g) \phi^{\beta}_{c d}(g)\overline{\phi^{\alpha}_{\bar{a} \bar{b}}(g)\phi^{\beta}_{\bar{c} \bar{d}}(g)} dg\,,
   \label{eq:Iij}
\end{equation}
where $g =\Omega_1$.
This integral has some properties similar to the previous one \eqref{eq:generalSchurOrtho}, but has the double amount of matrix elements. This integral is also similar to the type of integral studied with the Weingarten calculus, however, here, more than one representations $\alpha$ and $\beta$ appear.

Putting all together, \eqref{eq:Ih1h2} becomes
\begin{multline}
    I_{\{h^1\},\{h^2\}}=I_{ij}\;
        \phi^{h^1}_{da}(C_2)\, 
        \phi^{h^2}_{sp}(C_2) \, \times\\\times
        \int_{\mathbb{R}^N\times\mathbb{R}^N} d\Lambda_1d\Lambda_2\;\Delta(\Lambda_1)^2\Delta(\Lambda_2)^2e^{-N\tr[\frac{1}{2}(1-\gamma)\Lambda_1^2+\frac{1}{2}(1+\gamma)\Lambda_2^2]}\int_{U(N)} d\Omega\;\phi^{h^1}_{bc}(M_1)\phi^{h^2}_{qr}(M_2)\,,
\end{multline}
and with the theorem \ref{unitaryint} below, we can evaluate $I_{ij}$.

\begin{proposition}
\label{unitaryint}
Given $g\in U(N)$ and a representation $\alpha$, set its matrix elements as $\phi^\alpha_{ab}(g)$, where the indices $a$ and $b$ are associated with some chosen basis. The following holds true for a Haar integral:
\begin{equation}
     \int_{U(N)} \phi^{\alpha}_{a b}(g) \phi^{\beta}_{c d}(g)\overline{\phi^{\alpha}_{\bar{a} \bar{b}}(g)\phi^{\beta}_{\bar{c} \bar{d}}(g)} dg=\sum_{r,m,n,p,q}{c^{rm*}_{acp}}c^{r n}_{\m a\m c p}{c^{rm}_{bdq}}c^{r n*}_{\m b\m d q}d_r^{-1}\;.
     \label{eq:haarint}
\end{equation}
\end{proposition}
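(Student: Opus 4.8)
\noindent\emph{Proof idea.} The plan is to diagonalize the product $\phi^{\alpha}_{ab}(g)\phi^{\beta}_{cd}(g)$ by passing to the Clebsch--Gordan decomposition of the tensor product representation $\alpha\otimes\beta$, and then to reduce the Haar integral to the general Schur orthogonality relation \eqref{eq:generalSchurOrtho}. Concretely, fix for each irreducible $r$ occurring in $\alpha\otimes\beta$ an orthonormal basis adapted to the decomposition, with multiplicity label $m$, and write
\[
|\alpha,a\rangle\otimes|\beta,c\rangle=\sum_{r,m,p}c^{rm}_{acp}\,|r,m,p\rangle ,
\]
with $c^{rm}_{acp}$ the (unitary) Clebsch--Gordan coefficients. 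Since $g\in U(N)$ acts on the tensor product as $g\otimes g$, which is block diagonal in this basis and trivial on the multiplicity spaces, one obtains the matrix-element identity
\[
\phi^{\alpha}_{ab}(g)\,\phi^{\beta}_{cd}(g)=\sum_{r,m,p,q}c^{rm*}_{acp}\,c^{rm}_{bdq}\,\phi^{r}_{pq}(g).
\]
The only conceptual input here is that the intertwiner realizing $\alpha\otimes\beta\cong\bigoplus_r r^{\oplus}$ does not depend on $g$; the rest is bookkeeping of indices.

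Next I would take the complex conjugate of the analogous expansion for the barred labels, which introduces a second, a priori independent, multiplicity label $n$:
\[
\overline{\phi^{\alpha}_{\bar a\bar b}(g)\,\phi^{\beta}_{\bar c\bar d}(g)}
=\sum_{r',n,p',q'}c^{r'n}_{\bar a\bar c p'}\,c^{r'n*}_{\bar b\bar d q'}\,\overline{\phi^{r'}_{p'q'}(g)} .
\]
Substituting both expansions into the left-hand side of \eqref{eq:haarint} and using \eqref{eq:generalSchurOrtho} in the form $\int_{U(N)}\phi^{r}_{pq}(g)\,\overline{\phi^{r'}_{p'q'}(g)}\,dg=d_r^{-1}\,\delta_{rr'}\delta_{pp'}\delta_{qq'}$ collapses the group integral: it forces $r'=r$ and identifies the running indices, leaving precisely the double sum over $r,m,n$ together with the two surviving Clebsch--Gordan indices. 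Relabelling the dummy indices then reproduces \eqref{eq:haarint} verbatim.

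The step I expect to require the most care is not an estimate but the fixing of conventions: the unitarity and conjugation properties of the Clebsch--Gordan coefficients (i.e.\ precisely where the bars sit), and the correct pairing of the multiplicity labels — the \emph{same} $m$ must appear in $c^{rm*}_{acp}$ and in $c^{rm}_{bdq}$ because both factors descend from a single copy of $r$ inside $\alpha\otimes\beta$, and likewise for $n$ in the conjugated pair. One must also make sure that the bases chosen for the decomposition are exactly those for which \eqref{eq:generalSchurOrtho} is stated. Once these conventions are pinned down, the evaluation is a one-line application of Schur orthogonality, and no genuinely hard analytic step is involved.
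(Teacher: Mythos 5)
Your proof is correct, and it reaches \eqref{eq:haarint} by a genuinely different and more elementary route than the paper. You expand the products $\phi^{\alpha}_{ab}(g)\phi^{\beta}_{cd}(g)$ and $\overline{\phi^{\alpha}_{\bar a\bar b}(g)\phi^{\beta}_{\bar c\bar d}(g)}$ directly in Clebsch--Gordan form, using that $g\otimes g$ is block diagonal in the coupled basis and that each copy of an irreducible $r$ carries the same matrix functions $\phi^{r}_{pq}$ (the multiplicity labels $m,n$ live only in the coefficients), and then a single application of the orthogonality relation \eqref{eq:generalSchurOrtho} to $\int \phi^{r}_{pq}\overline{\phi^{r'}_{p'q'}}\,dg=d_r^{-1}\delta_{rr'}\delta_{pp'}\delta_{qq'}$ collapses the integral to the stated double sum over $m$ and $n$. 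The paper instead runs the full Weingarten machinery: it decomposes $\alpha\otimes\beta\otimes\bar\alpha\otimes\bar\beta$, counts the invariants via $l_1^{\alpha\beta\bar\alpha\bar\beta}=\sum_r (l_r^{\alpha\beta})^2$, constructs the explicit invariant basis $T_r^{mn}$, computes the Gram matrix and its inverse (the Weingarten matrix), expresses $A_{i\mu}$ through Clebsch--Gordan coefficients, and assembles $I_{ij}=\sum A\,w\,A^\dagger$. Your route is shorter and makes transparent why only the two multiplicity labels survive and where the factor $d_r^{-1}$ comes from; the paper's route buys, as by-products, the dimension formula \eqref{dim} and the explicit orthogonal basis of the invariant space, which sit naturally in the Weingarten framework the authors invoke. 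Your cautionary remarks are exactly the right ones: the same $m$ must tie together $c^{rm*}_{acp}$ and $c^{rm}_{bdq}$ because both come from one copy of $r$ inside $\alpha\otimes\beta$ (likewise $n$ for the conjugated pair), and the matrix elements $\phi^{r}_{pq}$ must be taken in the same realization for every copy so that Schur orthogonality applies; with those conventions fixed, no step fails, and your final index structure agrees with the proposition as stated.
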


\begin{proof}
In order to apply the Weingarten calculus \cite{collins2021weingarten} to the left hand side of \eqref{eq:haarint}, we need to know a basis for the invariant space of $\alpha\,\otimes\,\beta\,\otimes\,\bar{\alpha}\,\otimes\,\bar{\beta}$.
Indeed, finding the basis for the invariant space amounts to identifying the trivial representations $1^{\oplus l_1^{\alpha\beta \bar\alpha \bar\beta}}$, upon decomposing the tensor product representation into irreducible representations, 
\begin{equation} \alpha\,\otimes\,\beta\,\otimes\,\bar{\alpha}\,\otimes\,\bar{\beta}=\bigoplus_{\mathrm{all}\,\mathrm{irrep}\,r}r^{\oplus l_r^{\alpha\beta \bar\alpha \bar\beta}}\,,
\end{equation}
where $l_r^{\alpha\beta\bar\alpha\bar\beta}$ is the number of times the representation $r$ appears in the expansion of the tensor product, known as Littlewood-Richardson coefficients. 
One can decompose separately the products $\alpha \,\otimes\,\beta$ and $\bar{\alpha} \,\otimes\,\bar{\beta}$, i.e., 
\begin{equation}\label{dec}
    \alpha \,\otimes\,\beta = \bigoplus_r r ^{\oplus l_r
    ^{\alpha\beta}}\,, \qquad
    \bar{\alpha} \,\otimes\,\bar{\beta} = \bigoplus_{\bar{r}} \bar{r} ^{\oplus l_{\bar{r}}^{\bar{\alpha}\bar{\beta}}}
    = \bigoplus_r \bar{r} ^{\oplus l_{\bar{r}}^{\bar{\alpha}\bar{\beta}}} \,.
\end{equation}
Then, using $l_{\bar{r}}^{\bar{\alpha}\bar{\beta}}=l_{r}^{\alpha\beta}$, 
we can write 
\bea
\label{de2}
    \alpha\,\otimes\,\beta\,\otimes\,\bar{\alpha}\,\otimes\,\bar{\beta}
    &=&
    \bigoplus_{r,s} r ^{\oplus l_r
    ^{\alpha\beta}}\,\otimes\,\bar{s} ^{\oplus l_{\bar{s}}^{\bar{\alpha}\bar{\beta}}}
    \crcr
    &=&
    \bigoplus_{r} r ^{\oplus l_r
    ^{\alpha\beta}}\,\otimes\,\bar{r} ^{\oplus l_r
    ^{\alpha\beta}}\,\oplus\,\bigoplus_{r\neq s} r ^{\oplus l_r
    ^{\alpha\beta}}\,\otimes\,\bar{s} ^{\oplus l_s
    ^{\alpha\beta}}\,,
\eea
but the representation orthogonality relation tells us that for two irreducible representations $r$ and $s$, $l_1^{r\bar{s}}=\delta_{r s}$.
Therefore, 
we notice that the second big sum contains no trivial representations and the first contains one trivial representation for each product.
In the end, we arrive at
\begin{equation}
    \alpha\,\otimes\,\beta\,\otimes\,\bar{\alpha}\,\otimes\,\bar{\beta}=\bigoplus_{r} 1^{\oplus(l_r^{\alpha\beta})^2}\oplus \bigoplus_{\substack{\mathrm{all}\,\mathrm{irrep}\,r\\r\neq 1}}r^{\oplus l_r^{\alpha\beta \bar\alpha \bar\beta}}\;.
\end{equation}
Therefore, we conclude
\begin{equation}\label{dim}
    l_1^{\alpha\beta \bar\alpha \bar\beta}=\sum_r (l_r^{\alpha\beta})^2\,.
\end{equation}
Remark that there are infinite number of representations to sum over, however, only finite number of them have nonzero $l_r^{\alpha \beta}$. 
Equation \eqref{dim} is important to us because it allows to verify that the set we find later in \eqref{td} for the invariant space is indeed a basis because it contains as many elements as the dimension of the invariant space.

Let us analyze a basis for the invariant space $\bigoplus_{r} 1^{\oplus(l_r^{\alpha\beta})^2}$, since it allows us to evaluate $I_{ij}$ in \eqref{eq:Iij}.
First, let us define a basis for the vector space of the tensor product representation $\alpha\,\otimes\,\beta\,\otimes\,\bar{\alpha}\,\otimes\,\bar{\beta}$,
\begin{equation}\label{ten}
    |E_i\rangle = |E_{ab\,\bar{a}\bar{b}} \rangle = |e_a\rangle |e_b\rangle \langle e_{\bar{a}}|\langle e_{\bar{b}}|  \,,
\end{equation}
 where $a$, $\bar a$, $b$, $\bar b$, and $i$ are integers and $i$ is a relabelling of indices $a b \bar a \bar b$ with $1\leq i \leq (d_\alpha)^2(d_\beta)^2$, and $|e_a\rangle $, $|e_b\rangle$, $\langle e_{\bar{a}}|$ and $\langle e_{\bar{b}}|$, with $1\leq a\leq d_\alpha$, $1\leq \bar a \leq d_\alpha$, $1\leq b \leq d_\beta$, $1\leq \bar b \leq d_\beta$, are respectively a basis for representations $\alpha$, $\beta$, $\bar\alpha$ and $\bar\beta$. 
Writing $T_\mu$ as a basis  for this invariant space, where $1 \le \mu \le l_1^{\alpha\beta \bar\alpha \bar\beta}$,
let us define the $A$ matrix
\begin{equation}\label{am}
    A_{i\mu}=\langle E_i|T_\mu\rangle \,,
\end{equation}
and the Weingarten matrix $w$ through the inverse of a matrix 
\begin{equation}
    w^{-1}_{\mu \nu} = \langle T_\mu | T_\nu \rangle\;.
    \label{eq:Weingartenmatrix}
\end{equation}
The Weingarten theorem states that \eqref{eq:Iij} can be expressed as
\begin{equation}\label{teo}
    I_{ij}=\sum_{\mu,\nu}A_{i\mu} w_{\mu\nu}A^\dag_{\nu j}\,, 
\end{equation}
which is a function of $T$.
$T$ being invariant under $\alpha\,\otimes\,\beta\,\otimes\,\bar{\alpha}\,\otimes\,\bar{\beta}$ 
is equivalent to 
\begin{equation}
    M T M^{-1} = T\,,
\end{equation}
for every $M\in \alpha\,\otimes\,\beta$.
Furthermore,  there exists a $V \in U(d_\alpha \, d_\beta)$ such that, for every $M$,  
$V$ transforms $M$ into a block diagonal matrix,
\begin{equation}
    VMV^{-1}=M_D,
\end{equation}
where, 
\begin{equation}
    M_D=\begin{pmatrix}
M_{r_1}&0&\cdots&0\\
0&M_{r_2}&\cdots&0\\
\vdots&\vdots&\ddots&\vdots\\
0&0&\cdots&M_{r_n}\\
\end{pmatrix}\,,
\end{equation}
with $n=\sum_r l_r^{\alpha\beta}$.
For every representation $r$ appearing in the decomposition of $\alpha\,\otimes\,\beta$ as in \eqref{dec}, $M_r$ is $M$ in the specific representation $r$ and appears as a block element in the diagonal of $M_D$, appearing as many times as it appears in the decomposition \eqref{dec}, i.e., $l_r^{\alpha\beta}$ times.
Now, transform $T$ by a change of basis by $V$
\begin{equation}\label{bas}
    V T V^{-1} = T_D \,,
\end{equation}
which implies
\begin{equation}\label{dein}
    M_DT_DM_D^{-1}=T_D\;.
\end{equation}
$T_D$ can be found by focusing on the sector of a representation $r$ that appears in the decomposition, where a sector is the block consisting of $l_r^{\alpha \beta}$ number of block diagonal $M_r$ matrices. Evaluating
\begin{equation}
    M_D^{-1}=\begin{pmatrix}
M_{r_1}^{-1}&0&\cdots&0\\
0&M_{r_2}^{-1}&\cdots&0\\
\vdots&\vdots&\ddots&\vdots\\
0&0&\cdots&M_{r_n}^{-1}\\
\end{pmatrix}\,,
\end{equation}
and using properties of block matrix multiplication, each section can act independently. 
As an illustration, suppose that $r$ appears twice, i.e.,  $l_r^{\alpha\beta}=2$,
\begin{equation}
    M_D=\begin{pmatrix}
\ddots&0&0&0\\
0&M_{r}&0&0\\
0&0&M_r&0\\
0&0&0&\ddots\\
\end{pmatrix}\;.
\end{equation}
Then, denoting $\mathbb{1}_r$ as the identity in the representation $r$, it is easy to see that 
\begin{equation}
    T_D
    =\begin{pmatrix}
\ddots&0&0&0\\
0&\gamma_{11}\mathbb{1}_r&\gamma_{12}\mathbb{1}_r&0\\
0&\gamma_{21}\mathbb{1}_r&\gamma_{22}\mathbb{1}_r&0\\
0&0&0&\ddots\\
\end{pmatrix}
\label{eq:TD2}
\end{equation}
satisfies \eqref{dein} for any complex numbers $\gamma_{11}$, $\gamma_{12}$, $\gamma_{21}$, and $\gamma_{22}$.
In general, \eqref{eq:TD2} can be written as 
\begin{equation}\label{td}
   T_D
    =\gamma_{mn}
    ^r T_r^{mn}\,,
\end{equation}
where 
for any representation $r$, that appears $l_r^{\alpha\beta}$ times, we can construct $(l_r^{\alpha\beta})^2$ independent matrices
$T_r^{mn}$ by putting an identity matrix in an element of the $l_r^{\alpha\beta}\times l_r^{\alpha\beta}$ matrix block corresponding to the representation $r$. If we consider all representations, we find as many independent invariants as \eqref{dim}, thus we have a complete basis for the invariant space. 
Using $T_r^{mn}T_s^{pq}=\delta_{rst}\delta^{np}T^{mq}_t$ and ${T_r^{mn}}^\dag=T_r^{nm}$ (where $\delta^{r s t} = 1$ for $r = s = t$ and zero otherwise), the graham matrix elements are given by
\begin{equation}\label{gra}
    \langle T_s^{pq}|T_r^{mn} \rangle = \tr\; {T_s^{pq}}^\dag T_r^{mn} = \tr\; {T_s^{qp}} T_r^{mn} = \delta_{srt}\delta^{pm} \tr\; T^{qn}_t = \delta_{srt}\delta^{pm} \delta^{qn} d_t \,,
\end{equation}
where repeated indices are summed over.
Consequently, $T_r^{mn}$ is an orthogonal basis and 
\begin{equation}
    \langle T_r^{mn}|T_r^{mn} \rangle = d_r\,,
\end{equation}
where $d_r$ is the dimension of the representation $r$. 
Recalling \eqref{eq:Weingartenmatrix}, we can readily invert \eqref{gra} to obtain
\begin{equation}
    w_{rs}^{mnpq}=\delta_{srt}\delta^{pm} \delta^{qn} d_t^{-1}\;. 
\end{equation}
We also need to calculate the $A$ matrix defined in \eqref{am}, but from \eqref{bas} we see that 
\begin{equation}
    |T_\mu\rangle = V^{-1}|T_r^{mn}\rangle\,.
\end{equation}
Therefore,  the $A$ matrix is given by
\begin{equation}
    A_{i\mu}=\langle E_{i} |V^{-1}|T_r^{mn}\rangle = \tr\; E_{i}^\dag V^{-1}T_r^{mn}V= \tr\left[\left(VE_{i}V^{-1}\right)^\dag T_r^{mn}\right]\,,
    \label{eq:Aimu}
\end{equation}
where in the last expression, we insert
\begin{equation}
    VE_{i}V^{-1} = V|e_a\rangle |e_b\rangle \langle e_{\bar{a}}|\langle e_{\bar{b}}|V^{-1}=V|e_a\rangle |e_b\rangle  (V| e_{\bar{a}}\rangle| e_{\bar{b}}\rangle)^\dag\;.
\end{equation}
Notice that we can also write, now being more explicit on the representation,
\begin{equation}
    V|e_a^\alpha\rangle |e^\beta_b\rangle = c^{rk}_{abp}|e^{rk}_p\rangle\,,
\end{equation}
where $c^{rk}_{abp}$ are the Clebsch-Gordan coefficients, and the representations $\alpha$ and $\beta$ are written explicitly on the left but not on the right. We also introduce a new index $k$, an integer which satisfies $1\leq k \leq l_r^{\alpha \beta}$ to distinguish the same representation $r$ which appears $l_r^{\alpha \beta}$ times in the decomposition of $\alpha\otimes\beta$.
Then, we can compute \eqref{eq:Aimu}
\begin{equation}
A_{i \mu} = 
    \tr\;{c^{sk*}_{abp}}c^{\bar{s}\bar{k}}_{\bar{a}\bar{b}\m p}|e^{\m s\m k}_{\m p}\rangle\langle e^{sk}_p|T_r^{mn}={c^{sk*}_{abp}}c^{\bar{s}\bar{k}}_{\bar{a}\bar{b}\m p}\langle e^{sk}_p|T_r^{mn}|e^{\m s\m k}_{\m p}\rangle
\end{equation}
and using the coefficients for $T_r^{mn}$ in this basis,  $\langle e^{sk}_p|T_r^{mn}|e^{\m s\m k}_{\m p}\rangle=\delta^{s\m s r}\delta_{p \m p} \delta^{km}\delta ^{\m k n}$, we further compute
\begin{equation}
    A_{i\mu}=c^{sk*}_{abp}c^{\bar{s}\bar{k}}_{\bar{a}\bar{b}\m p}\delta^{r s\m s}\delta_{p \m p} \delta^{km}\delta ^{\m k n}=\delta^{rs\m s }{c^{sm*}_{abp}}c^{\m s n}_{\m a\m b p}={c^{rm*}_{abp}}c^{rn}_{\m a\m b p}\,,
    \label{eq:AimuCG}
\end{equation}
where in the last expression there is a sum only over $p$. 
However, remark that there is no known formula for these coefficients for a general representation, therefore one cannot compute explicitly $A_{i \mu}$  with the expression above \eqref{eq:AimuCG}.
Nevertheless, inserting \eqref{eq:AimuCG} in the expression \eqref{teo}, we get
\begin{equation}
    I_{ij}
    =
    \sum_{\mu,\nu}A_{i\mu} w_{\mu\nu} A^\dag_{\nu j}
    =
    \sum_{r,m,n,t,u,v}\delta^{rs\m s }{c^{sm*}_{abp}}c^{\m s n}_{\m a\m b p}\delta^{t k\m k }{c^{ku}_{bdq}}c^{\m k v*}_{\m b\m d q}\delta_{rtg}\delta^{um} \delta^{vn} d_g^{-1}\,,
\end{equation}
Performing some of the sums, we find that $I_{ij}$ in \eqref{eq:Iij} can be expressed in terms of the  (unknown) Clebsch-Gordan coefficients as
\begin{equation}\label{iijr}
    I_{ij}=\sum_{r,m,n,p,q}{c^{rm*}_{abp}}c^{r n}_{\m a\m b p}{c^{rm}_{bdq}}c^{r n*}_{\m b\m d q}d_r^{-1}\,.
\end{equation}
\end{proof}
We comment that 
the result in \eqref{iijr} is expressed in terms of the Clebsch-Gordan coefficients which are base dependent. The possibility of choosing different bases shows that there is a degree of freedom in the summation. This freedom is manifest in the summation over the indices $p$ and $q$. 
Further simplification could be achieved by expressing \eqref{iijr} in a base invariant way.

\section{
Evaluation of $\langle \chi_r(A) \rangle_0$ and $\langle \chi_r(A^2) \rangle_0$
}
\label{sec:characters}

In this section, we study and present several computational results on the large $N$ limit of the Hermitian Gaussian matrix model averages $\langle \chi_r (A)\rangle_0$ and $\langle \chi_r (A^2)\rangle_0$ as the latter appeared as a key quantity as in \eqref{in2} and the former can be an useful exercise and preparation for computing the latter. The Gaussian average of a character of $A$ for a given representation $\{h\}$ is given by
\begin{equation}
\langle \chi_{\{h\}}(A) \rangle_0 
    = \frac{1}{Z_0}
    \int dA\; e^{-N\tr \frac{1}{2} A^2} \; \chi_{\{h\}}(A) \,,
    \label{chiA}
\end{equation}
where $Z_0$ is an Gaussian integral of $A$;
\begin{equation}
\label{Z0}
Z_0 = \int dA\; e^{-N\tr \frac{1}{2} A^2}\,.
\end{equation}
Similarly to \eqref{chiA}, the Gaussian average of a character of $A^2$ for a given representation, given by
\begin{equation}
\langle \chi_{\{h\}}(A^2) \rangle_0 
    = \frac{1}{Z_0}
    \int dA\; e^{-N\tr \frac{1}{2} A^2} \; \chi_{\{h\}}(A^2) \,.
    \label{chiA2}
\end{equation}
For $\{h\}^e=\{a\in \{h\}|\,a\;\mathrm{is\;even}\}$ and $\{h\}^o=\{a\in \{h\}|\,a\;\mathrm{is\;odd}\}$, \eqref{chiA} evaluates to \cite{difrancesco1992generating}
\begin{equation}\label{in11}
    \langle \chi_{\{h\}}(A) \rangle_0 = N^{-\frac{n}{2}}
   \; d_h \;
    \frac{\prod_i (h_i^e-1)!!h_i^o!!}{\prod_{i,j}(h^e_i-h^o_j)}
    =N^{-\frac{n}{2}}
     \; d_h \;
    \frac{\chi_{\{h\}}(C_2)}{\chi_{\{h\}}(C_1)}
    \,,
\end{equation}
where $n=\sum_i{h_i}-N(N-1)/2$ is the size of the representation, and $d_h$ is the dimension of the representation $h$, which can be evaluated as
\begin{equation}\label{dh}
    d_h=\chi_{\{h\}}(\mathbb 1)=\frac{\Delta(h)}{\prod_{k=1}^{N-1}k!}\,.
\end{equation} 
We also used  
\begin{equation}\label{c2c1}
    \frac{\prod_i (h_i^e-1)!!h_i^o!!}{\prod_{i,j}(h^e_i-h^o_j)}
    =
    \frac{\chi_{\{h\}}(C_2)}{\chi_{\{h\}}(C_1)}
    \,,
\end{equation}
as presented in more detail in the appendix of \cite{Kazakov_1996}.

\begin{lemma}
\label{lem:lemmaI1I2}

If there exists a $GL(N)$ matrix $M$ such that Di Francesco-Itzykson integral \eqref{chiA} can be computed as the character of $M$; 
\begin{equation}\label{lem1}
    \langle \chi_{\{h\}}(A) \rangle_0  =\chi_{\{h\}}(M)\,,
\end{equation}
then the integral of our interest \eqref{in2} can be solved as:
\begin{equation}
    \langle \chi_{\{h\}}(A^2) \rangle_0 =\chi_{\{h\}}(M^2)\;.
\end{equation}

\end{lemma}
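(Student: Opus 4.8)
The plan is to exploit the fact that, for $N\times N$ matrices, the character $\chi_{\{h\}}(X^2)$ decomposes as a \emph{matrix-independent} linear combination of the ordinary characters $\chi_{\{k\}}(X)$. First I would note that, writing $a_1,\dots,a_N$ for the eigenvalues of a generic $X$ and using \eqref{chard}, one has $\chi_{\{h\}}(X^2)=s_\lambda(a_1^2,\dots,a_N^2)$, where $s_\lambda$ is the Schur polynomial attached to the partition $\lambda$ encoded by $\{h\}$. Because $s_\lambda$ is a genuine polynomial (the Vandermonde denominators in \eqref{chard} cancel), $s_\lambda(a_1^2,\dots,a_N^2)$ is a symmetric polynomial in $a_1,\dots,a_N$ of degree $2|\lambda|$, and since the Schur polynomials $\{s_\mu(a_1,\dots,a_N):\ell(\mu)\le N\}$ form a $\mathbb Z$-basis of the ring of symmetric polynomials in $N$ variables, there is a \emph{finite} expansion
\begin{equation}
\chi_{\{h\}}(X^2)=\sum_{\{k\}}c^{\{h\}}_{\{k\}}\,\chi_{\{k\}}(X)
\label{eq:univexp}
\end{equation}
valid for \emph{every} $N\times N$ matrix $X$, with integer coefficients $c^{\{h\}}_{\{k\}}$ that depend only on $\{h\}$ and $N$, never on $X$. (These are the $2$-quotient signs of Littlewood, but only the existence of \eqref{eq:univexp} is needed.)

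With \eqref{eq:univexp} in hand the rest is a two-line substitution. I would apply it first with $X=A$ and take the Gaussian average; since the sum is finite, $\langle\,\cdot\,\rangle_0$ passes through it termwise, giving $\langle\chi_{\{h\}}(A^2)\rangle_0=\sum_{\{k\}}c^{\{h\}}_{\{k\}}\,\langle\chi_{\{k\}}(A)\rangle_0$. The hypothesis \eqref{lem1}, which I read as the assertion that one fixed $M$ satisfies $\langle\chi_{\{k\}}(A)\rangle_0=\chi_{\{k\}}(M)$ for all $\{k\}$, then turns each term into $\chi_{\{k\}}(M)$, so $\langle\chi_{\{h\}}(A^2)\rangle_0=\sum_{\{k\}}c^{\{h\}}_{\{k\}}\,\chi_{\{k\}}(M)$. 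Applying \eqref{eq:univexp} a second time, now with $X=M$, identifies this sum as $\chi_{\{h\}}(M^2)$, which is exactly the claim.

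The one genuinely substantive point is establishing \eqref{eq:univexp} with matrix-independent coefficients, and the hard part there is simply recognizing that $\chi_{\{h\}}(X^2)$ is polynomial — not merely rational — in the eigenvalues of $X$, so that one may legitimately re-expand in the Schur basis; uniqueness of that expansion over $\mathbb Z$ then pins down the $c^{\{h\}}_{\{k\}}$ once and for all. A second point worth flagging in the write-up is that the argument uses the hypothesis for \emph{all} labels $\{k\}$ occurring on the right of \eqref{eq:univexp}, not just for the original $\{h\}$, so Lemma~\ref{lem:lemmaI1I2} is to be understood as positing a single $M$ that represents $\langle\chi_{\{h\}}(A)\rangle_0$ simultaneously for every $\{h\}$. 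I expect no further obstacle; an alternative, more computational route would feed the closed form \eqref{in11} into an explicit $2$-quotient identity, but that introduces parity casework that \eqref{eq:univexp} sidesteps entirely.
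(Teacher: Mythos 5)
Your proposal is correct and follows essentially the same route as the paper: expand $\chi_{\{h\}}(A^2)$ as a finite, matrix-independent linear combination of characters $\chi_{\{k\}}(A)$, average termwise, apply the hypothesis to each term, and resum with $X=M$ (the paper justifies finiteness via $\chi_{\{h\}}(A^2)=\chi_{\mathrm{Sym}^2\{h\}}(A)-\chi_{\mathrm{Alt}^2\{h\}}(A)$ rather than your Schur-basis argument, but this is a cosmetic difference). Your remark that the hypothesis must be read as a single $M$ representing $\langle\chi_{\{k\}}(A)\rangle_0$ for all labels $\{k\}$ is accurate and is indeed how the paper implicitly uses it.
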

\begin{proof}
The function $\chi_{\{h\}}(A^2)$ is a class function of $A$, therefore there exists an expansion
\begin{equation}\label{expa2}
    \chi_{\{h\}}(A^2)=\sum_{\{h'\}}c_{\{h\},\{h'\}}\chi_{\{h'\}}(A)\;.
\end{equation}
This expansion is finite, since it is the same as the present in the identity $\chi_{\{h\}}(A^2)=\chi_{\mathrm{Sym}^2\{h\}}(A)-\chi_{\mathrm{Alt}^2\{h\}}(A)$. Thus, 
\begin{equation}
\begin{split}
   \langle \chi_{\{h\}}(A^2) \rangle_0 
    &=
    \frac{1}{Z_0}\int dA \;\chi_{\{h\}}(A^2) \; e^{-N\tr \frac{1}{2} A^2} \;\;
    =\frac{1}{Z_0}\int dA\;\sum_{\{h'\}}c_{\{h\},\{h'\}}\chi_{\{h'\}}(A)\;e^{-N\tr \frac{1}{2} A^2}\;\\
    &=\frac{1}{Z_0}\sum_{\{h'\}}c_{\{h\},\{h'\}}\int dA\;\chi_{\{h'\}}(A)\;e^{-N\tr \frac{1}{2} A^2}
    =\frac{1}{Z_0}\sum_{\{h'\}}c_{\{h\},\{h'\}}I^{(1)}_{\{h'\}} (\mathbb{1})
    \\
    &=\sum_{\{h'\}}c_{\{h\},\{h'\}}\chi_{\{h'\}}(M)=\chi_{\{h\}}(M^2)\;.
\end{split}
\end{equation}
We note that the exchange between the summation and the integral is legal, since the summation is finite.
\end{proof}

Lemma \ref{lem:lemmaI1I2} suggests that we shall study the expression in \eqref{in11} in order to see if we can find such a $M$ as in \eqref{lem1}. 
Let us present a more general normalized version of \eqref{chiA};
{\color{blue}\begin{equation}\label{gen}
    {I^{(1)}_{\{h\}} (B) =\ \int dA\;\chi_{\{h\}}(A)e^{-N\tr \frac{1}{2} AB^{-1}AB^{-1}}\;}
\;.
\end{equation}}
The generalized normalized Di Francesco-Itzykson integral 
\eqref{gen} can be expressed as a product of characters 
\begin{equation}\label{gen2}
    \frac{I^{(1)}_{\{h\}} (B)}{Z(B)} 
    =  
    \frac{\chi_{\{h\}}(B)\chi_{\{h\}}(C_2)}{\chi_{\{h\}}(C_1)} \,,
\end{equation}
where we used \eqref{c2c1} and $Z(B) = \int dA\;e^{-N\tr \frac{1}{2} AB^{-1}AB^{-1}}$ is a normalization factor.
Remark that on the right hand side of \eqref{gen2}, the  only part that depends on the choice of $B$ is $\chi_{\{h\}}(B)$.

If we set $B = \mathbb{1}$ in \eqref{gen2} , we indeed recover \eqref{in11}.
By setting $B=C_1$ in \eqref{gen2}, we find that
\begin{equation}
\label{rawc1c2}
I^{(1)}_{\{h\}} (C_1) 
= 
    \int dA\;\chi_{\{h\}}(A)
    \;
    e^{-N\tr \frac{1}{2} AC_1^{-1}AC_1^{-1}}\;
    =
    Z(C_1) \; 
    \chi_{\{h\}}(C_2)\;.
\end{equation}
Since $(C_2)^2=C_1$ in the large $N$ limit, by summing over all representations as in \eqref{expa2} on both sides of \eqref{rawc1c2} we obtain
\begin{equation}
    I^{(1)}_{\{h\}} (C_1)  \;
    \sim \;
    \chi_{\{h\}}(C_1)\,,
\end{equation}
or performing  change of variables,
\begin{equation}
I^{(1)}_{\{h\}} (C_1) 
\; \sim \;
    \int dA\;\chi_{\{h\}}((C_1A)^2)e^{-N\tr \frac{1}{2} A^2}\;
     \sim \; 
    \chi_{\{h\}}(C_1)\,,
\end{equation}
which can also be written as
\begin{equation}
    \langle \chi_{\{h\}}(C_1 A)^2 \rangle_0 
    \sim \; 
    \chi_{\{h\}}(C_1)\,,
\end{equation}
where $\sim$ denotes equal up to a constant in $N$, and $\langle \rangle_0$ denotes the Gaussian average over $A$ as defined in \eqref{avdef}.
This result is nearly the expression we aim at computing in \eqref{in2}, except for the extra $C_1$ in the left hand side.

On this note, let us now investigate further what  the condition in \eqref{in11} tells us about the matrix $M$ in \eqref{lem1}. 
\begin{proposition}
Given a representation $r$ of $\mathrm{GL}(N)$ defined through the set of shifted weights ${h}_i$, $i=1,...N,$, consider $\chi_r$ the character in that representation $r$. If $M$ is a $GL(N)$ matrix that satisfies
\begin{equation}
\int dA\;\chi_{\{h\}}(A)\; e^{-N\tr \frac{1}{2} A^2}
=
\chi_{\{h\}}(M)\,,
\end{equation}
then, for a positive integer $q$,
\begin{equation}
    \frac{1}{N}\tr (M^{2q}) = q^q\;.
\end{equation}
\end{proposition}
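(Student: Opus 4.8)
The plan is to feed the hypothesis into the explicit formula \eqref{in11}, promote it to a statement about $M^2$ with Lemma \ref{lem:lemmaI1I2}, and then read off the power traces of $M^2$ from a generating function assembled out of one convenient family of representations.

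First I would use the hypothesis together with \eqref{in11} to write, for every $\{h\}$,
\begin{equation}
\chi_{\{h\}}(M)=\langle\chi_{\{h\}}(A)\rangle_0=N^{-n/2}\,d_h\,\frac{\chi_{\{h\}}(C_2)}{\chi_{\{h\}}(C_1)}\,,\qquad n=\textstyle\sum_i h_i-N(N-1)/2\,,
\end{equation}
and Lemma \ref{lem:lemmaI1I2} to obtain in addition $\chi_{\{h\}}(M^2)=\langle\chi_{\{h\}}(A^2)\rangle_0$ for every $\{h\}$. I would then specialize to the totally symmetric representations, for which $\chi_{(\ell)}(X)=h_\ell(X)$ is the complete homogeneous symmetric polynomial, $d_{(\ell)}=\binom{N+\ell-1}{\ell}$, and $\sum_{\ell\ge0}h_\ell(X)\,z^\ell=\det(\mathbb1-zX)^{-1}=\exp\!\big(\sum_{k\ge1}\tfrac1k\tr(X^k)z^k\big)$. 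Evaluating this at $X=M^2$, the key object is the series $\sum_\ell\langle h_\ell(A^2)\rangle_0\,z^\ell=\sum_\ell h_\ell(M^2)z^\ell$; since $M^2$ has no odd spectral moments, its logarithm is $\sum_{q\ge1}\tfrac1q\tr(M^{2q})z^q$, so the coefficient of $z^q$ gives $\tr(M^{2q})$. (Equivalently one can run the argument through the antisymmetric powers $\chi_{(1^k)}=e_k$, rewriting the $M^2$-identity as the two-point characteristic-polynomial average $\det(x\mathbb1-M^2)=\langle\det(\sqrt x\,\mathbb1-A)\det(\sqrt x\,\mathbb1+A)\rangle_0$ and then extracting the resolvent $\tfrac1N\tr(x\mathbb1-M^2)^{-1}$.)

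The heart of the argument is the large-$N$ evaluation. Using \eqref{in11}, the closed forms of $\sum_\ell h_\ell(C_1)z^\ell$ and $\sum_\ell h_\ell(C_2)z^\ell$ (exponentials of a linear and a quadratic polynomial, dictated by the defining traces of the $C_m$), and the asymptotics $\binom{N+\ell-1}{\ell}\sim N^\ell/\ell!$, I would obtain a closed form for $\sum_\ell h_\ell(M^2)z^\ell$ in the large-$N$ limit; taking its logarithm and comparing powers of $z$ via Newton's identities then yields $\tfrac1N\tr(M^{2q})=q^q$ for every positive integer $q$. The resulting exponential generating function $\sum_q q^q z^q/q!$ is the tree-function/Lambert-$W$ combination that also governs the spectra of the $C_m$ in Theorem \ref{thm:Cm}, which is the structural reason the moments come out as $q^q$.

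I expect the main obstacle to be exactly this large-$N$ step: one must control $\langle h_\ell(A^2)\rangle_0$ (equivalently the two-point characteristic-polynomial average) to the accuracy needed for each fixed $q$, and justify interchanging the large-$N$ limit with the extraction of the $z^q$ coefficient. It should also be stressed that the statement is a large-$N$ (formal) identity: the growth $q^q$ exceeds that of the moments of any compactly supported measure, so no honest finite-$N$ matrix $M$ can satisfy the hypothesis exactly, consistent with the contradictions already exhibited for the conjecture \eqref{conj1}. The remaining ingredients — the (anti)symmetrization identities for $\chi_{\{h\}}(A^2)$ and Newton's identities — are routine.
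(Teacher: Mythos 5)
Your plan's decisive step is asserted rather than carried out, and when one carries it out it gives a \emph{different} answer from the statement. The hypothesis $\langle\chi_{\{h\}}(A)\rangle_0=\chi_{\{h\}}(M)$ for all $\{h\}$ is overdetermined and inconsistent at finite $N$ (e.g.\ $h_2$ and $e_2$ already force $\tr M^2=N+1$ and $N-1$ respectively), so the value of $\tfrac1N\tr(M^{2q})$ one extracts depends on \emph{which family of representations} is used to probe $M$, and for $q\ge 2$ the discrepancy survives the large-$N$ limit. Your probes are the fixed-size symmetric/antisymmetric powers, equivalently the averaged characteristic polynomial $\langle\det(x\mathbb{1}-A^2)\rangle_0=\langle\det(\sqrt{x}\mathbb{1}-A)\det(\sqrt{x}\mathbb{1}+A)\rangle_0$, and these reproduce the ordinary GUE (Catalan) moments, not $q^q$. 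Concretely, at $q=2$: $\langle\tr A^2\rangle_0=N$, $\langle(\tr A^2)^2\rangle_0=N^2+2$, $\langle\tr A^4\rangle_0=2N+1/N$, so via Lemma \ref{lem:lemmaI1I2} your route gives $h_1(M^2)=N$, $h_2(M^2)=\tfrac12\bigl(N^2+2+2N+1/N\bigr)$, hence $\tr(M^4)=2h_2(M^2)-h_1(M^2)^2=2N+2+1/N$ and $\tfrac1N\tr(M^4)\to 2$, whereas the proposition claims $2^2=4$. More generally the two-point characteristic-polynomial average is a product of two Hermite polynomials (Brezin--Hikami), whose zeros are semicircle-distributed, so your generating-function/Newton-identity extraction yields $\tfrac1N\tr(M^{2q})\to\frac{1}{q+1}\binom{2q}{q}$ for every $q$; the claimed appearance of the tree function/$q^q$ is not a calculation and is in fact false along this route. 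Your own (correct) observation that no honest matrix can have moments $q^q$ should have been the warning sign: a route whose output is by construction the moment sequence of an actual spectral measure (roots of $\langle\det(x\mathbb{1}-A^2)\rangle_0$) cannot produce the super-exponential $q^q$.

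The paper's proof avoids this by never probing with small representations and never invoking Lemma \ref{lem:lemmaI1I2}. It uses the determinantal identity \eqref{trf}, $\tr(M^p)\,\chi_{\{h\}}(M)=\sum_{k=1}^{N}\chi_{\{h^{p,k}\}}(M)$ with $h^{p,k}_i=h_i+p\,\delta_{k,i}$, applied to a specific representation whose shifted weights grow like $p(N-i)/2$ (size $n\sim N^2$, the regime that actually dominates the character expansion of the partition function), with each character evaluated through the hypothesis and the Di Francesco--Itzykson formula \eqref{in11}. Only the $k=1,2$ terms survive, giving the exact ratio $\tr(M^p)=N^{1-p/2}\,\bigl(p\tfrac N2+1\bigr)!!/\bigl(p(\tfrac N2-1)+1\bigr)!!\to N(p/2)^{p/2}$, i.e.\ $q^q$. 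So the proposition's content is tied to extracting the moments from these $N$-dependent representations; to repair your argument you would have to let the representation size scale with $N$ inside your generating functions, at which point you are effectively redoing the paper's computation rather than the routine symmetric-function bookkeeping you describe.
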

\begin{proof}
From \eqref{chard} we can deduce that
\begin{equation}\label{trf}
 \tr (M^p) =\sum_{k=1}^N\frac{\chi_{\{h^{p,k}\}}(M)}{\chi_{\{h\}}(M)}    
\end{equation}
by using a representation $\{h\}$ where $\chi_{\{h\}}(M)\neq 0$ and the modified representations $\{h^{p,k}\}$ where the shifted weights are related to the ones of $\{h\}$ by $h_i^{p,k}=h_i+p\,\delta_{k,i}$, for $i=1,...,N$. 

Using \eqref{dh} in \eqref{in11} we can deduce that, for some constant $c_N$,
\begin{equation}\label{chm}
    \chi_{\{h\}}(M)=c_N\,N^{-\frac{n}{2}} \Delta(\{h\}^e)\Delta(\{h\}^o)\prod_i (h_i^e-1)!!h_i^o!!\,.
\end{equation}
We assume $N$ and $p$ even. For $i$ even, we set $h_i=p (N-i)/2+1$, therefore $h_i$ is always odd for $i even$. For $i$ odd, we set $h_i=p {(N-i-1)}/{2}+2$, therefore $h_i$ is always even for $i$ odd. Consequently, for $i$ even, $h_i^{p,k}=p {(N-i)}/{2}+1+p\,\delta_{k,i}$, making $h_i^{p,k}$ odd and, for $i$ odd, $h_i^{p,k}=p (N-i-1)/2+2+p\,\delta_{k,i}$, making $h_i^{p,k}$ even. Notice that $\{h\}^e=\{a+1\,|\,a\in\{h\}^o\}$, thus
the Vandermonde determinants $\Delta(\{h\}^e)$ and $\Delta(\{h\}^o)$ are
\begin{equation}
    \Delta(\{h\}^o)= \Delta(\{h\}^e)=\prod_{i<j}({h}_i-{h}_j)=\prod_{i<j}\left(p \frac{N-i}{2}+1-p \frac{N-j}{2}-1\right)=\prod_{i<j}p \frac{j-i}{2},
\end{equation}
where the indices $i$ and $j$ are always even. Because of this, by setting $i=2m$ and $j=2n$ we find
\begin{equation}
   \Delta(\{h\}^o)= \Delta(\{h\}^e) =p^{\frac{1}{2}\frac{N}{2}\left(\frac{N}{2}-1\right)}\prod_{m<n}(n-m)=p^{\frac{1}{2}\frac{N}{2}\left(\frac{N}{2}-1\right)}\prod_{m=1}^{\frac{N}{2}-1{}}m!\;.
\end{equation}
We then get that
\begin{equation}\label{normde}
    \Delta(\{h\}^o)\Delta(\{h\}^e)=p^{\frac{N}{2}\left(\frac{N}{2}-1\right)}\prod_{m=1}^{\frac{N}{2}-1}(m!)^2\;,
\end{equation}
which will be necessary for the normalization in \eqref{trf}. The terms involving double factorials in \eqref{chm} are:
\begin{equation}
    \prod_i (h_i^e-1)!!=\prod_i h_i^o!!= \prod_{m=1}^{\frac{N}{2}}\left(p\left(\frac{N}{2}-m\right)+1\right)!!\;.
\end{equation}
The size of the representation is
\begin{equation}
    n=\sum_ih_i-\frac{N(N-1)}{2}=\frac{1}{2}\frac{N}{2}\left(p(N-2)+6\right)-\frac{N(N-1)}{2}\;.
\end{equation}
This way the character of $M$ in the representation $\{h\}$ is 
\begin{equation}\label{chp0}
    \chi_{\{h\}}(M)=c_N\;N^{-\frac{n}{2}}\;p^{\frac{N}{2}\left(\frac{N}{2}-1\right)}\prod_{m=1}^{\frac{N}{2}-1}(m!)^2\prod_{m=1}^{\frac{N}{2}}\left(p\left(\frac{N}{2}-m\right)+1\right)!!^2\,.
\end{equation}
Now, for the representations $\{{h^{p,k}}\}$, set $\{{h^{p,k}}\}^e=\{a\in\{h^{p,k}\}|\,a\;\mathrm{is\;even}\}$ and $\{{h^{p,k}}\}^o=\{a\in\{h^{p,k}\}|\,a\;\mathrm{is\;odd}\}$. For $k$ odd, $\{{h^{p,k}}\}^e=\{{h}\}^e$, then
\begin{equation}\label{eko}
    \Delta(\{{h^{p,k}}\}^e)=\Delta(\{{h}\}^e)=p^{\frac{1}{2}\frac{N}{2}\left(\frac{N}{2}-1\right)}\prod_{m=1}^{\frac{N}{2}-1{}}m!\;.
\end{equation}
For $k$ even, $\{{h^{p,k}}\}^o=\{{h}\}^o$, then
\begin{equation}\label{oke}
    \Delta(\{{h^{p,k}}\}^o)=\Delta(\{{h}\}^o)=p^{\frac{1}{2}\frac{N}{2}\left(\frac{N}{2}-1\right)}\prod_{m=1}^{\frac{N}{2}-1{}}m!\;.
\end{equation}
Another consequence for $k$ even is
\begin{equation}
    \Delta(\{{h^{p,k}}\}^e)
    =\prod_{\substack{i<j\\i,j\;\mathrm{even}}}({h^{p,k}}_i-{h^{p,k}}_j)
    =\prod_{\substack{i<j\\i,j\;\mathrm{even}}}(p \frac{N-i}{2}+2+p\,\delta_{k,i}-p \frac{N-j}{2}-2-p\,\delta_{k,j})
    =\prod_{\substack{i<j\\i,j\;\mathrm{even}}}\left(p \frac{j-i}{2}+p(\delta_{k,i}-\delta_{k,j})\right)\,.
\end{equation}
Setting $i=2m$, $j=2n$, and $k=2r$ we find
\begin{equation}
    \Delta(\{{h^{p,k}}\}^e)=p^{\frac{1}{2}\frac{N}{2}\left(\frac{N}{2}-1\right)}\prod_{m<n}\left( n-m+(\delta_{r,m}-\delta_{r,n})\right)\;.
\end{equation}
Let us emphasize that $n>1$ since $n>m\geq 1$. 
For $k=2$,
\begin{equation}\label{ek2}
    \Delta(\{{h^{p,2}}\}^e)=p^{\frac{1}{2}\frac{N}{2}\left(\frac{N}{2}-1\right)}\prod_{m<n}\left( n-m+\delta_{1,m}\right)=p^{\frac{1}{2}\frac{N}{2}\left(\frac{N}{2}-1\right)}\frac{N}{2}\prod_{i=1}^{\frac{N}{2}-1}i!\;.
\end{equation}
For $k> 2$, $k$ even, since the factor for $m=r-1$ and $n=r$  in the product  is zero,
\begin{equation}\label{ekg}
    \Delta(\{{h^{p,k}}\}^e)=0\;.
\end{equation}
It is also true that $\{{h^{p,k}}\}^o=\{a-1|\;a\in \{{h^{p,k+1}}\}^e\}$, then $\Delta(\{{h^{p,1}}\}^o)=\Delta(\{{h^{p,2}}\}^e)$, therefore
\begin{equation}\label{ok1}
    \Delta(\{{h^{p,1}}\}^o)=p^{\frac{1}{2}\frac{N}{2}\left(\frac{N}{2}-1\right)}\frac{N}{2}\prod_{i=1}^{\frac{N}{2}-1}i!\,.
\end{equation}
For $k$ odd with $k>2$, since $\Delta(\{{h^{p,k}}\}^o)=\Delta(\{{h^{p,k+1}}\}^e)$,
\begin{equation}\label{okg}
    \Delta(\{{h^{p,k}}\}^o)=0\;.
\end{equation}
Using \eqref{eko}, \eqref{oke}, \eqref{ek2}, \eqref{ekg}, \eqref{ok1} and \eqref{okg} we get 
\begin{equation}
    \Delta(\{{h^{p,k}}\}^e)\Delta(\{{h^{p,k}}\}^o)=\begin{cases}
p^{\frac{N}{2}\left(\frac{N}{2}-1\right)}\frac{N}{2}\prod_{i=m}^{\frac{N}{2}-1} (m!)^2 & \text{ if } k=1,2 \\
0 & \text{ if } k>2
\end{cases}\;.
\label{eq:vdmds}
\end{equation}
The consequence of \eqref{eq:vdmds} in the sum in \eqref{trf} is that only the terms for $k=1$ and $k=2$ are nonzero. 
For $k=1$ or $k=2$, the double factorial terms in \eqref{chm} is
\begin{equation}
    \prod_i (h_i^e-1)!! h_i^o!! =\frac{(p \frac{N}{2} +1)!!}{\left(p\left(\frac{N}{2}-1\right)+1\right)!!}\prod_{m=1}^{\frac{N}{2}} \left ( p \left(\frac{N}{2}-m\right)+1\right)!!^2\;.
\end{equation}
The size of the representation is
\begin{equation}
    n^{p,k}=\sum_ih^{p,k}_i-\frac{N(N-1)}{2}=p+\frac{1}{2}\frac{N}{2}\left(p(N-2)+6\right)-\frac{N(N-1)}{2}\;.
\end{equation}
This way the character of $M$ in the representation $\{{h^{p,k}}\}$ is, for $k=1$ or $k=2$,
\begin{equation}\label{chpk}
    \chi_{\{{h^{p,k}}\}}(M)=c_N\,N^{-\frac{n^{p,k}}{2}}\;p^{\frac{N}{2}\left(\frac{N}{2}-1\right)}\frac{N}{2}\prod_{m=1}^{\frac{N}{2}-1}(m!)^2\frac{(p \frac{N}{2} +1)!!}{\left(p\left(\frac{N}{2}-1\right)+1\right)!!}\prod_{m=1}^{\frac{N}{2}}\left(p\left(\frac{N}{2}-m\right)+1\right)!!^2\,,
\end{equation}
and zero for $k\geq 3$.
Hence, inserting \eqref{chp0} and \eqref{chpk} in \eqref{trf}, we find that
\begin{equation}
    \tr (M^p) 
    =N^{1-\frac{p}{2}}\frac{\left(p\frac{N}{2}+1\right)!!}{\left(p\left(\frac{N}{2}-1\right)+1\right)!!}\;.
\end{equation}
We evaluate a large $N$ limit by keeping only the largest order in $N$,
\begin{equation}
    \tr (M^p) = N^{1-\frac{p}{2}}\prod_{j=1}^{p/2}\left(p\left(\frac{N}{2}-1\right)+1+2j\right)= 
    N\left(\frac{p}{2}\right)^{\frac{p}{2}}\left(1+{\mathcal O}(N^{-1})\right)\;.
\end{equation}
Setting the integer $q=p/2$, we get the trace property
\begin{equation}
    \frac{1}{N}\tr (M^{2q}) = q^q\;.
\end{equation}
\end{proof}
In principle, this information should be enough to find $M$ up to matrix conjugation, similarly to what is done in \eqref{cmp} for the matrix $C_m$. We leave this possibility of computation of $M$ to future studies.

Let us now evaluate $\langle \chi_r(A^2)\rangle_0$ for a finite $N$. In particular, in Theorem \ref{thm:chiA2finiteN}, the integral over matrices has been evaluated and replaced with a summation over integers.

\begin{theorem}
\label{thm:chiA2finiteN}
Let $A$ be a random variable for a $N \times N$ Hermitian matrix under the Gaussian measure. Given a representation $r$ of $\mathrm{GL}(N)$ defined through the set of shifted weights $ h_i $, $i=1,...N,$ and considering $\chi_r$ the character in that representation $r$ the following holds true:
\begin{equation}\label{resfinn}
    \langle \chi_r(A^2)\rangle_0 =\frac{N^{\frac{N(N-1)}{2}}}{\prod_{k=0}^{N-1}k!}\frac{\prod_{i}(2h_i)!}{(2N)^{\sum_i h_i}}\underset{i,j}{\mathrm{Pf}}\sum_{\substack{k+l=2h_i\\u+v=2h_j}}\frac{(-1)^u-(-1)^{k}}{2}\frac{(k+u)!!(l+v-2)!!}{k!u!l!v!}\,.
\end{equation}
\end{theorem}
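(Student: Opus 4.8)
The plan is to pass to eigenvalues, expose a Pfaffian structure through a classical identity of Schur, collapse the $N$-fold integral with a de Bruijn–type formula, and evaluate the resulting matrix entries as elementary Gaussian moments. I will work with $N$ even (as elsewhere in this section), where $\mathrm{Pf}$ is unambiguous; for $N$ odd one replaces de Bruijn's identity by its bordered-Pfaffian version. First I would diagonalize $A=\Omega\Lambda\Omega^{\dagger}$; since $\chi_{\{h\}}(A^{2})$ is a class function the Weyl/Harish--Chandra constant cancels against $Z_{0}$, so $\langle\chi_{\{h\}}(A^{2})\rangle_{0}=\tilde Z_{0}^{-1}\int_{\mathbb{R}^{N}}d\Lambda\,\Delta(\Lambda)^{2}\chi_{\{h\}}(\Lambda^{2})\,e^{-\frac{N}{2}\sum_{i}\lambda_{i}^{2}}$ with $\tilde Z_{0}=\int d\Lambda\,\Delta(\Lambda)^{2}e^{-\frac{N}{2}\sum_{i}\lambda_{i}^{2}}=(2\pi)^{N/2}N^{-N^{2}/2}\prod_{j=1}^{N}j!$ the standard Mehta integral. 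Using the bialternant formula \eqref{chard} for $\chi_{\{h\}}(\Lambda^{2})$ together with $\Delta(\Lambda^{2})=\Delta(\Lambda)\prod_{i<j}(\lambda_{i}+\lambda_{j})$, the integrand becomes $\frac{\Delta(\Lambda)}{\prod_{i<j}(\lambda_{i}+\lambda_{j})}\det_{i,j}[\lambda_{i}^{2h_{j}}]\,e^{-\frac{N}{2}\sum\lambda_{i}^{2}}$, which is still a polynomial because the $i$-th and $j$-th rows of $\det_{i,j}[\lambda_{i}^{2h_{j}}]$ coincide when $\lambda_{i}=-\lambda_{j}$; Schur's Pfaffian identity $\prod_{i<j}\frac{\lambda_{j}-\lambda_{i}}{\lambda_{j}+\lambda_{i}}=\mathrm{Pf}_{i,j}\!\big[\frac{\lambda_{j}-\lambda_{i}}{\lambda_{j}+\lambda_{i}}\big]$ then rewrites the prefactor as a Pfaffian.

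Next I would apply the de Bruijn integration formula $\int_{\mathbb{R}^{N}}\mathrm{Pf}[A(x_{i},x_{j})]\det[\phi_{i}(x_{j})]\prod_{k}d\mu(x_{k})=\kappa_{N}\,\mathrm{Pf}_{i,j}\!\big[\tfrac{1}{2}\iint A(x,y)\big(\phi_{i}(x)\phi_{j}(y)-\phi_{i}(y)\phi_{j}(x)\big)d\mu(x)d\mu(y)\big]$ with $A(x,y)=\frac{y-x}{y+x}$, $\phi_{j}(x)=x^{2h_{j}}$, $d\mu(x)=e^{-\frac{N}{2}x^{2}}dx$. The pole of $A$ on $y=-x$ is harmless: each antisymmetrized bracket $x^{2h_{i}}y^{2h_{j}}-x^{2h_{j}}y^{2h_{i}}$ vanishes there, so every matrix entry is a convergent integral, this step being merely the Pfaffian shadow of the polynomial identity above. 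Hence $\langle\chi_{\{h\}}(A^{2})\rangle_{0}=\frac{\kappa_{N}}{\tilde Z_{0}}\mathrm{Pf}_{i,j}[M_{ij}]$ with $M_{ij}=\tfrac{1}{2}\iint\frac{y-x}{y+x}\big(x^{2h_{i}}y^{2h_{j}}-x^{2h_{j}}y^{2h_{i}}\big)e^{-\frac{N}{2}(x^{2}+y^{2})}dxdy$.

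It then remains to evaluate $M_{ij}$ and strip the scalar factors. Rotate $x=(p-q)/\sqrt{2}$, $y=(p+q)/\sqrt{2}$, so that $\frac{y-x}{y+x}=\frac{q}{p}$ and $x^{2}+y^{2}=p^{2}+q^{2}$; expand $x^{2h_{i}}$ and $y^{2h_{j}}$ by the binomial theorem, which introduces the summation indices $k+l=2h_{i}$, $u+v=2h_{j}$ and the signs $(-1)^{l}=(-1)^{k}$, $(-1)^{v}=(-1)^{u}$ (recall $2h_{i},2h_{j}$ are even); multiply by $q/p$ and integrate each monomial against the factorized Gaussian using $\int_{\mathbb{R}}x^{2m}e^{-\frac{N}{2}x^{2}}dx=(2m-1)!!\,N^{-m}\sqrt{2\pi/N}$, so that only the terms with $k+u$ odd survive. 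Collecting gives $M_{ij}=-\frac{2\pi}{N}D_{i}D_{j}\sum_{k+l=2h_{i},\,u+v=2h_{j}}\frac{(-1)^{u}-(-1)^{k}}{2}\frac{(k+u-2)!!\,(l+v)!!}{k!\,l!\,u!\,v!}$ with $D_{i}=(2h_{i})!\,(2N)^{-h_{i}}$; the involution $(k,l,u,v)\mapsto(l,k,v,u)$, which fixes the sign and the factor $k!l!u!v!$ while swapping $(k+u-2)!!(l+v)!!\leftrightarrow(k+u)!!(l+v-2)!!$, lets me rewrite the summand exactly as the bracket $B_{ij}$ in \eqref{resfinn}. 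Finally $\mathrm{Pf}[\mu D_{i}D_{j}B_{ij}]=\mu^{N/2}\big(\prod_{i}D_{i}\big)\mathrm{Pf}[B_{ij}]$ extracts $\frac{\prod_{i}(2h_{i})!}{(2N)^{\sum_{i}h_{i}}}$ along with powers of $N$ and $2\pi$, which combine with $\kappa_{N}$ and $\tilde Z_{0}$ to produce the prefactor $N^{N(N-1)/2}/\prod_{k=0}^{N-1}k!$.

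I expect the main obstacles to be: (i) making the de Bruijn step rigorous in the presence of the singular kernel $\frac{y-x}{y+x}$ — the safe route is never to leave the polynomial form $\det_{i,j}[\lambda_{i}^{2h_{j}}]/\prod_{i<j}(\lambda_{i}+\lambda_{j})$ until inside the already-regular antisymmetrized entries, or to work with a regulator $\frac{y-x}{y+x+\varepsilon}$ and send $\varepsilon\to0$ at the end; and (ii) the bookkeeping — producing the precise binomial/double-factorial structure via the $(k,l,u,v)\mapsto(l,k,v,u)$ symmetry, pinning down the Pfaffian sign convention and the ordering of $\{h\}$ in $\det_{i,j}[\lambda_{i}^{2h_{j}}]$, and tracking every factor of $2$, $N$, $2\pi$ and the de Bruijn constant $\kappa_{N}$ so that the stated prefactor emerges verbatim.
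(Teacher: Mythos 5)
Your proposal is correct and follows essentially the same route as the paper: reduce to eigenvalues, use the bialternant character formula together with the Schur Pfaffian identity \eqref{pffpr}, collapse the $N$-fold integral with de Bruijn's formula \eqref{bru}, and evaluate the resulting two-dimensional Gaussian entries (your direct rotation-plus-binomial-expansion of each entry, and the involution $(k,l,u,v)\mapsto(l,k,v,u)$, reproduce what the paper obtains via source terms and the imaginary error function, and your antisymmetrized-entry/regulator treatment of the pole at $x+y=0$ matches the paper's $\epsilon$-damping with dominated convergence). The only outstanding work is the bookkeeping of constants you yourself flag, which proceeds exactly as in the paper's proof.
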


\begin{proof}
For a Hermitian matrix $A$ of size $N$ whose eigenvalues are denoted by $x$, we write $X$ as a diagonal matrix whose diagonal elements are the eigenvalues $x$'s. For a given representation $r$, we are interested in:
\bea
\langle \chi_r (A^2) \rangle_0
&=&
\frac{1}{Z_0}\int dA \, \chi_r(A^2) \; e^{-\frac{N}{2} {\rm Tr} A^2}
\crcr
&=&
\frac {{\tilde c}_N}{N!}
\int_{\mathbb{R}^N} d X\prod_{i <j} (x_i - x_j)^2
\frac{\underset{i,j}{\rm det} \, \left(x_i^{2 h_j}\right)}{\prod (x_i^2 - x_j^2)} 
e^{-\frac{N}{2} {\rm Tr} X^2}
\crcr
&=&
\frac {{\tilde c}_N}{N!}\int_{\mathbb{R}^N}  d X \,\underset{i,j}{\rm det} \, \Big(x_i^{2 h_j}\Big)
\prod_{i <j}\frac{ x_i - x_j}{x_i + x_j}
e^{-\frac{N}{2} {\rm Tr} X^2}
\, ,
\label{eq:avechiA2whatwewant}
\eea
where
\begin{equation}\label{ctil}
{\tilde c}_N=\frac{1}{Z_0}\frac {{\rm vol} (U(N))}{(2 \pi)^N}=N^{\frac{N^2}{2}}(2\pi)^{-\frac{N}{2}}(\prod_{k=0}^{N-1}k!)^{-1}.
\end{equation}
See \cite{Eynard:2015aea},
\cite{Anninos:2020ccj},
and \cite{zhang2017volumes} for the change of variables from $A$ to $X$.
We use de Bruijn's formula \cite{deBruijn}, 
\begin{equation}\label{bru}
   \frac{1}{N!} \int_{\mathbb{R}^N} d\mu (X) 
   \; \underset{i,j}{\mathrm{det}}f_i(x_j)
   \; \underset{i,j}{\mathrm{Pf}}A(x_i,x_j)
   =
   \underset{i,j}{\mathrm{Pf}}\int_{\mathbb{R}^2} d\mu(x)d\mu(y)\; f_i(x)\; A(x,y)\; f_j(y)\,,
\end{equation}
so we can reduce the integration over the N variables on the left hand side to a Pfaffian of an integral over two variables on the right hand side. $d\mu(x)$ sets the measure on $x$. Here, it is a Gaussian measure $d\mu(x)=e^{-\frac{N}{2}x^2}$. 
Since \cite{macdonald1998symmetric}
\begin{equation}\label{pffpr}
    \prod_{i<j}\frac{ x_i - x_j}{x_i + x_j}
    =
    \underset{i,j}{\mathrm{Pf}} \Bigg(\frac{ x_i - x_j}{x_i + x_j} \Bigg) \,,
\end{equation}
let us first rewrite \eqref{eq:avechiA2whatwewant}, 
\begin{equation}\label{wpf}
    \langle \chi_r(A^2)\rangle_0 
    =
\frac{{\tilde c}_N}{N!}\int_{\mathbb{R}^N}  d X \,\underset{i,j}{\rm det} \big(x_i^{2 h_j}\big)
\;
\underset{i,j}{\mathrm{Pf}} \Bigg(\frac{ x_i - x_j}{x_i + x_j} \Bigg)
\;
e^{-\frac{N}{2} {\rm Tr} X^2}\;.
\end{equation}
An important thing to notice is that when $x_i+x_j=0$, the Pfaffian has a divergence that is controlled by the zero in the determinant.
However, when we use the de Bruijn's formula the determinant is removed. Therefore, it is best to deal with this divergence already here.
We introduce a damping with a small constant $\epsilon$ to prevent a divergence from appearing. We take care of this by multiplying the elements in the Pfaffian by the term $\frac{(x_i+x_j)^2}{(x_i+x_j)^2+\epsilon^2}$. 
We regularize \eqref{wpf} by defining, for $\epsilon >0$,
\begin{equation}\label{chep}
    \langle \chi_r(A^2)\rangle_\epsilon 
    =
\frac{{\tilde c}_N}{N!}\int_{\mathbb{R}^N}  d X \,\underset{i,j}{\rm det} \big(x_i^{2 h_j}\big)
\;
\underset{i,j}{\mathrm{Pf}}\Bigg(\frac{ x_i^2 - x_j^2}{(x_i + x_j)^2+\epsilon^2}\Bigg) 
\;
e^{-\frac{N}{2} {\rm Tr} X^2}\,.
\end{equation}
We expand the Pfaffian in \eqref{chep}, a polynomial, into its monomials. We do the same for the Pfaffian in \eqref{wpf}. By comparing the absolute values of these integrands, term by term, we see that the ones from \eqref{chep} are bounded by the ones from \eqref{wpf}. Therefore, the dominated convergence theorem \cite{schilling2017measures} tells us that
\begin{equation}\label{lim}
    \langle \chi_r(A^2)\rangle_0=\lim_{\epsilon\rightarrow 0}\langle \chi_r(A^2)\rangle_\epsilon\;.
\end{equation}
Now we can use the de Bruijn's formula in \eqref{bru} and we obtain the damped average
\begin{equation}\label{2int}
    \langle \chi_r(A^2)\rangle_\epsilon 
    = 
    {\tilde c}_N 
    \; \underset{i,j}{\mathrm{Pf}}\int_{\mathbb{R}^2} dxdy\; \frac{x^2-y^2}{(x+y)^2+\epsilon^2}x^{2h_i}y^{2h_j} 
    \; e^{-\frac{N}{2}(x^2+y^2)}\;.
\end{equation}
We remark that at this point, if we take the $\epsilon \rightarrow 0$ limit, the integral in \eqref{2int} turns into a principal value integral. Let us define
\begin{equation}
    T_{ij}=\int_{\mathbb{R}^2} dxdy\; \frac{x^2-y^2}{(x+y)^2+\epsilon^2}x^{2h_i}y^{2h_j} e^{-\frac{N}{2}(x^2+y^2)}\;.
\end{equation}
Then, according to \eqref{2int},
\begin{equation}\label{Tf}
     \langle \chi_r(A^2)\rangle_\epsilon = {\tilde c}_N \underset{i,j}{\mathrm{Pf}}\;T_{ij}\;.
\end{equation}
We can simplify the $N$ dependence by changing variables $x,y\rightarrow N^{-\frac{1}{2}}x,N^{-\frac{1}{2}}y$,
\begin{equation}\label{Tij}
    T_{ij}=\frac{1}{N^{h_i+h_j+1}}\int_{\mathbb{R}^2} dxdy\; \frac{x^2-y^2}{(x+y)^2+N\epsilon^2}x^{2h_i}y^{2h_j} e^{-\frac{1}{2}(x^2+y^2)}\;.
\end{equation}
Introducing sources $\alpha$ and $\beta$ through the terms $\alpha x$ and $\beta y$ in the exponential, we can turn the factors $x^{2h_i}$ and $y^{2h_j}$ into derivatives:
\begin{equation}
    T_{ij}=\frac{1}{N^{h_i+h_j+1}}\left. \frac{d^{2h_i}}{d\alpha^{2h_i}}\frac{d^{2h_j}}{d\beta^{2h_j}}\int_{\mathbb{R}^2} dxdy\; \frac{x^2-y^2}{(x+y)^2+N\epsilon^2} e^{-\frac{1}{2}(x^2+y^2)+\alpha x+\beta y}\right|_{\alpha,\beta=0}\;.
\end{equation}
Let us also define
\begin{equation}
    {\rm II}(\alpha,\beta)=\int_{\mathbb{R}^2} dxdy\; \frac{x^2-y^2}{(x+y)^2+N\epsilon^2} e^{-\frac{1}{2}(x^2+y^2)+\alpha x+\beta y}\,,
\end{equation}
thus
\begin{equation}\label{If}
    T_{ij}
    =
    \frac{1}{N^{h_i+h_j+1}}\left. \frac{d^{2h_i}}{d\alpha^{2h_i}}\frac{d^{2h_j}}{d\beta^{2h_j}} {\rm II}(\alpha,\beta)\right|_{\alpha,\beta=0}\;.
\end{equation}
By changing integration variables to $u=(x+y)/\sqrt{2}$ and $v=(x-y)/\sqrt{2}$, and also defining
$a=(\alpha+\beta)/\sqrt{2}$ and $b=(\alpha-\beta)/\sqrt{2}$, it becomes
\begin{equation}
    {\rm II}(\alpha,\beta)=\int_{\mathbb{R}^2} dudv\; \frac{uv}{v^2+2N\epsilon^2} e^{-\frac{1}{2}(u^2+v^2)+au+bv}\;.
\end{equation}
Here we notice that the integration over $u$ and over $v$ are independent, hence we can separate them, 
\begin{equation}\label{iab}
    {\rm II}(\alpha,\beta)=\int_{\mathbb{R}} du\;u\;e^{-\frac{1}{2}u^2+au} \int_{\mathbb{R}} dv\; \frac{v}{v^2+2N\epsilon^2} e^{-\frac{1}{2}v^2+bv} 
    =
    {\rm II}^{(1)}(a)
    {\rm II}^{(2)}_{N\epsilon}(b)\,,
\end{equation}
where we define the integrals
\begin{equation}
    {\rm II}^{(1)}(a)
    =
    \int_{\mathbb{R}} du\;u\;e^{-\frac{1}{2}u^2+au}\qquad \mathrm{and} \qquad {\rm II}^{(2)}_\epsilon(b)=\int_{\mathbb{R}} dv\; \frac{v}{v^2+2N\epsilon^2} e^{-\frac{1}{2}v^2+bv}\;.
\end{equation}
The first integral is easily evaluated as
\begin{equation}\label{ia}
    {\rm II}^{(1)}(a)=\sqrt{2\pi}\,a\;e^{\frac{1}{2}a^2}\,.
\end{equation}
The second integral we solve by introducing another integral,
\begin{equation}\label{blim}
   {\rm II}^{(2)}_\epsilon(b)=\int_0^b d\tilde b\int_{\mathbb{R}} dv\; \frac{v^2}{v^2+2N\epsilon^2} e^{-\frac{1}{2}v^2+{\tilde b}v}\;.
\end{equation}
At this point we can evaluate the $\epsilon \rightarrow 0$ limit due to the dominated convergence theorem. Thus, by defining
\begin{equation}
    {\rm II}^{(2)}_0(b)=\lim_{\epsilon\rightarrow 0} {\rm II}^{(2)}_\epsilon(b)\,,
\end{equation}
we find that
\begin{equation}\label{aflim}
    {\rm II}^{(2)}_0(b)=\int_0^b d\tilde b\int_{\mathbb{R}} dv\; e^{-\frac{1}{2}v^2+{\tilde b}v}\;.
\end{equation}
The integral over $v$ in \eqref{aflim} is a simple Gaussian, and we can evaluate it to find
\begin{equation}\label{ib}
    {\rm II}^{(2)}_0(b)  =\sqrt{2\pi} \int_0^bd{\tilde b}\; e^{\frac{1}{2}{\tilde b}^2}\;.
\end{equation}
This function is, up to normalization conventions, the imaginary error function. Joining \eqref{ia} and \eqref{ib} in \eqref{iab}, we deduce that
\begin{equation}
    {\rm II}(\alpha,\beta)= 2\pi\,a\;e^{\frac{1}{2}a^2}\int_0^bd{\tilde b}\; e^{\frac{1}{2}{\tilde b}^2}\;.
\end{equation}
Going back through \eqref{If} and \eqref{Tf}, we obtain
\begin{equation}\label{alf}
    \langle \chi_r(A^2)\rangle_0 = \left.{\tilde c}_N\underset{i,j}{\mathrm{Pf}}\frac{2\pi}{N^{h_i+h_j+1}}\frac{\partial^{2h_i}}{\partial \alpha ^{2h_i}}\frac{\partial^{2h_j}}{\partial \beta ^{2h_j}}\,a \, e^{\frac{1}{2}a^2}\int_0^b d\tilde{b}\; e^{\frac{1}{2}{\tilde b}^2}\right|_{\alpha,\beta=0}\,,
\end{equation}
and by using that
\begin{equation}
    a \, e^{\frac{1}{2}a^2}=\sum_{k=0}^\infty \frac{a^{2k+1}}{2^k k!}\qquad\mathrm{and}\qquad \int_0^b d\tilde{b}\; e^{\frac{1}{2}{\tilde b}^2}=\sum_{k=0}^\infty \frac{b^{2k+1}}{(2k+1) 2^k k!}
\end{equation}
and that $a=(\alpha+\beta)/\sqrt{2}$ and $b=(\alpha-\beta)/\sqrt{2}$, we can evaluate the derivatives in \eqref{alf} to find that

\begin{equation}\label{chiA20final}
    \langle \chi_r(A^2)\rangle_0 ={\tilde c}_N\underset{i,j}{\mathrm{Pf}}\frac{2\pi}{N^{h_i+h_j+1}}\frac{(2h_i)!(2h_j)!}{2^{h_i+h_j}}\sum_{\substack{k+l=2h_i\\u+v=2h_j\\k+u\mathrm{\,is\,odd}}}(-1)^u\frac{(k+u)!!(l+v-2)!!}{k!u!l!v!}\,,
\end{equation}
which becomes \eqref{resfinn} by using \eqref{ctil}, Pfaffian properties and that $(-1)^k=-(-1)^u$ for $k+u$ odd and $(-1)^k=(-1)^u$ for $k+u$ even.
\end{proof}

We wish to remark here about the quantity 
$ \langle \chi_r(A^2)\rangle_0 $ which we computed by taking the limit of $\epsilon \rightarrow 0$ in the expression $ \langle \chi_r(A^2)\rangle_\epsilon $ given in \eqref{Tf}.
For a given (in other words, finite) $N$, the expression we obtained in \eqref{chiA20final} is valid and therefore Theorem \ref{thm:chiA2finiteN}.
However, if we pay attention the expression \eqref{Tij}, we notice that $N$ comes with $\epsilon$. 
Then, one notices that once we send $N$ to infinity, this procedure becomes sensitive to the ratio in which $N \rightarrow \infty$ and $\epsilon \rightarrow 0$ are sent.

\medskip

One naturally wonders if the expression  obtained in Theorem \ref{thm:chiA2finiteN} may become simpler in  large $N$ limit.
Let us explore this possibility in Proposition \ref{thm:epilonN}.

\begin{proposition}
\label{thm:epilonN}
Let $A$ be a random variable for a $N \times N$ Hermitian matrix under the Gaussian measure. Given a representation $r$ of $\mathrm{GL}(N)$ defined through the set of normalized shifted weights ${\tilde h}_i = h_i/N$, $i=1,...N,$. Consider $\chi_r$ the character in that representation $r$. Defining
\begin{equation}
    \langle \chi_r(A^2)\rangle_\epsilon =\frac{1}{Z_0}
\frac {{\rm vol} (U(N))}{N! \, (2 \pi)^N}
\int  d X \,\underset{i,j}{\rm det} \, x_i^{2 N \tilde{h}_j}
\underset{i,j}{\mathrm{Pf}}\frac{ x_i^2 - x_j^2}{(x_i + x_j)^2+\epsilon^2}
e^{-\frac{N}{2} {\rm Tr} X^2}\,,
\end{equation}
which satisfies $ \langle \chi_r(A^2)\rangle_0=\lim_{\epsilon\rightarrow 0}\langle \chi_r(A^2)\rangle_\epsilon$, the following holds true:
\begin{equation}\label{lnap}
    \lim_{N\rightarrow \infty}
    \frac{\langle \chi_r(A^2)\rangle_\epsilon}
    {{\tilde c}_N\prod_k 2e^{-N {\tilde h}_k}(2 {\tilde h}_k)^{N {\tilde h}_k}
    \underset{i,j}{
    \rm Pf}\left[\frac{( {\tilde h}_i- {\tilde h}_j)( {\tilde h}_i+ {\tilde h}_j+\epsilon^2/2)}{( {\tilde h}_i- {\tilde h}_j)^2+\epsilon^2( {\tilde h}_i+ {\tilde h}_j)+\epsilon^4/4}\right]} = 1\;,
\end{equation}
where ${\tilde c}_N=N^{\frac{N^2}{2}}(2\pi)^{-\frac{N}{2}}(\prod_{k=0}^{N-1}k!)^{-1}$.
\end{proposition}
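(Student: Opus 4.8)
The plan is to return to the exact finite-$N$ representation established in the proof of Theorem~\ref{thm:chiA2finiteN}, \emph{before} the limit $\epsilon\to0$: combining \eqref{2int} and \eqref{Tf},
\begin{equation}
\langle\chi_r(A^2)\rangle_\epsilon={\tilde c}_N\,\underset{i,j}{\mathrm{Pf}}\,{\hat T}_{ij},\qquad
{\hat T}_{ij}=\int_{\mathbb R^2}dx\,dy\,\frac{x^2-y^2}{(x+y)^2+\epsilon^2}\,x^{2h_i}y^{2h_j}\,e^{-\frac N2(x^2+y^2)}.
\end{equation}
One cannot shortcut through the closed form \eqref{resfinn}, since that is the $\epsilon\to0$ object and, as the remark after Theorem~\ref{thm:chiA2finiteN} stresses, the two limits do not commute. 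Substituting $h_i=N{\tilde h}_i$ with the ${\tilde h}_i$ fixed, I would extract the leading $N\to\infty$ asymptotics of each ${\hat T}_{ij}$ by Laplace's method and then of the Pfaffian.

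\textbf{Laplace per entry, then algebra.} Since $x^{2h_i}e^{-\frac N2 x^2}=e^{N(2{\tilde h}_i\ln|x|-\frac12 x^2)}$, the $x$--integrand concentrates near the two symmetric saddles $x=\pm\sqrt{2{\tilde h}_i}$ (both contribute, the power being even), where the normalized phase has second derivative $-2$; similarly $y$ near $\pm\sqrt{2{\tilde h}_j}$. For fixed $\epsilon>0$ the rational factor $\frac{x^2-y^2}{(x+y)^2+\epsilon^2}$ is smooth and varies only on an $O(1)$ scale around each of the four combined saddles, so a two-dimensional Laplace estimate yields, with $a_i:=(2{\tilde h}_i)^{N{\tilde h}_i}e^{-N{\tilde h}_i}$,
\begin{equation}
{\hat T}_{ij}\;\sim\;\frac{2\pi}{N}\,a_i a_j\,({\tilde h}_i-{\tilde h}_j)\Big[\frac{1}{(\sqrt{{\tilde h}_i}+\sqrt{{\tilde h}_j})^2+\epsilon^2/2}+\frac{1}{(\sqrt{{\tilde h}_i}-\sqrt{{\tilde h}_j})^2+\epsilon^2/2}\Big]=\frac{4\pi}{N}\,a_i a_j\,R_{ij},
\end{equation}
where putting the two fractions over a common denominator and using $(\sqrt{{\tilde h}_i}+\sqrt{{\tilde h}_j})^2+(\sqrt{{\tilde h}_i}-\sqrt{{\tilde h}_j})^2=2({\tilde h}_i+{\tilde h}_j)$ together with $(\sqrt{{\tilde h}_i}+\sqrt{{\tilde h}_j})^2(\sqrt{{\tilde h}_i}-\sqrt{{\tilde h}_j})^2=({\tilde h}_i-{\tilde h}_j)^2$ identifies $R_{ij}$ with exactly the entry in \eqref{lnap}. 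Then I would pull the diagonal matrix $\mathrm{diag}(a_i)$ out of the Pfaffian via $\mathrm{Pf}(DMD)=\det(D)\,\mathrm{Pf}(M)$ (for $N$ even; the odd case reduces to the bordered Pfaffian already used through \eqref{bru}), giving $\underset{i,j}{\mathrm{Pf}}\,{\hat T}_{ij}\sim\big(\tfrac{4\pi}{N}\big)^{N/2}\big(\prod_i a_i\big)\,\underset{i,j}{\mathrm{Pf}}\,R_{ij}$; the scalar $(\tfrac{4\pi}{N})^{N/2}$ combines with ${\tilde c}_N$ into the normalization written in \eqref{lnap}, so the ratio there tends to $1$.

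\textbf{The hard part.} The formal saddle computation above is routine; the genuine difficulties are twofold. First, the Laplace estimate must be made \emph{uniform} over the $\sim N^2$ pairs $(i,j)$, which is delicate near the ``diagonal'' ${\tilde h}_i\to{\tilde h}_j$ (there the numerator and one denominator factor of $R_{ij}$ both degenerate, but $\epsilon>0$ keeps $(\sqrt{{\tilde h}_i}+\sqrt{{\tilde h}_j})^2+\epsilon^2$ bounded away from zero, so the entry stays finite) and near ${\tilde h}_i\to0$, where the saddle collides with the origin and the phase has no interior maximum, so a separate elementary bound is needed there. Second --- and this is the crux --- one must propagate the per-entry relative error, generically only $O(1/N)$ from the subleading Laplace term, through the Pfaffian: since $\mathrm{Pf}$ is a degree-$N/2$ polynomial in the entries, the naive bound $(1+O(1/N))^{N/2}=O(1)$ does \emph{not} suffice to conclude that the ratio converges to $1$. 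I expect this to be the main obstacle, resolvable either by sharpening the remainder estimate for ${\hat T}_{ij}-\tfrac{4\pi}{N}a_i a_j R_{ij}$ (to $o(1/N)$ entrywise, or to a structured form that drops out of the Pfaffian expansion), or by a direct analysis exploiting the Cauchy-type structure $R_{ij}=\tfrac12({\tilde h}_i-{\tilde h}_j)\big[(\sqrt{{\tilde h}_i}-\sqrt{{\tilde h}_j})^2+\tfrac{\epsilon^2}{2}\big]^{-1}+\tfrac12({\tilde h}_i-{\tilde h}_j)\big[(\sqrt{{\tilde h}_i}+\sqrt{{\tilde h}_j})^2+\tfrac{\epsilon^2}{2}\big]^{-1}$ to control $\mathrm{Pf}(R)$ and its perturbation directly.
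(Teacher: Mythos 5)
Your route is essentially the paper's own: starting from \eqref{2int}, rescale $h_i=N\tilde h_i$, apply Laplace's method \eqref{sadd} entrywise with the four saddles $x=\pm\sqrt{2\tilde h_i}$, $y=\pm\sqrt{2\tilde h_j}$, sum the rational prefactor over the saddles to recover exactly the entry $R_{ij}$ of \eqref{lnap}, and pull the diagonal factors $a_i=(2\tilde h_i)^{N\tilde h_i}e^{-N\tilde h_i}$ out of the Pfaffian — this is precisely the computation in the paper's proof. Two caveats. First, your bookkeeping keeps the Laplace prefactor $2\pi/N$ per entry, giving $\tilde c_N\,(4\pi/N)^{N/2}\prod_k a_k\,\mathrm{Pf}[R]$, whereas the paper's \eqref{eq:chisaddlept}--\eqref{eq:limN} carry $2^{N/2}\prod_k a_k$ and the statement \eqref{lnap} carries $\prod_k 2\,a_k=2^N\prod_k a_k$; these three normalizations are mutually inconsistent, so your closing claim that your scalar ``combines with $\tilde c_N$ into the normalization written in \eqref{lnap}, so the ratio tends to $1$'' does not follow as written — since the proposition asserts an exact limit of a ratio, you must either reconcile the constant or state your own (your careful version suggests the paper's prefactor is off by $(2\pi/N)^{N/2}$, and its statement and proof already disagree by $2^{N/2}$). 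Second, the issues you isolate as the ``hard part'' — uniformity of the Laplace estimate near $\tilde h_i\to\tilde h_j$ (where the entry is itself $O(1/N)$ so the relative error is not small) and near $\tilde h_i\to 0$, and the propagation of entrywise $O(1/N)$ errors through a degree-$N/2$ Pfaffian with sign cancellations — are not resolved in the paper either, which simply attaches a single global factor $(1+{\mathcal O}(N^{-1}))$ to the Pfaffian; so you are not missing an idea the paper supplies, but a complete proof would indeed require the sharpened remainder or structured-perturbation argument you sketch.
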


\begin{proof}
We apply the saddle point method to compute \eqref{2int}. Let us first rescale  integers $h_i$ to ${\tilde h}_i = h_i/N$;
\begin{equation}
    \langle \chi_r(A^2)\rangle_\epsilon 
    = 
    {\tilde c}_N\; \underset{i,j}{\mathrm{Pf}}\int_{\mathbb{R}^2} dxdy\; \frac{x^2-y^2}{(x+y)^2+\epsilon^2}x^{2N {\tilde h}_i}y^{2N {\tilde h}_j} 
    \; e^{-\frac{N}{2}(x^2+y^2)}
\end{equation}
and prepare in a form proper to use the saddle point approximation,
\begin{equation}
    \langle \chi_r(A^2)\rangle_\epsilon 
    = 
    {\tilde c}_N
    \;
    {\rm Pf}
    \int_{\mathbb{R}^2} dxdy\; 
    \frac{x^2-y^2}{(x+y)^2+\epsilon^2}
    \;
    e^{-N(\frac{1}{2}x^2+\frac{1}{2}y^2-2 {\tilde h}_i{\rm ln}|x|-2 {\tilde h}_j{\rm ln}|y|)}\,.
    \label{eq:chisaddle}
\end{equation}
Laplace's method of integration \cite{bleistein1986asymptotic} can be expressed as
\begin{equation}\label{sadd}
    \int_{\mathbb{R}^d} dX \, g(X) \, e^{-N f(X)}
    =
    \sum_{X_0}
    \left(\frac{2\pi}{N}\right)^{d/2}
    \frac{g(X_0)e^{-Nf(X_0)}}{\sqrt{{\rm det}\big(H(f)(X_0)\big)}}
    (1+{\mathcal O}(N^{-1}))
    \,,
\end{equation}
where $X$ is a set of $d$ real variables, $f$ is a twice-differentiable complex valued function of $X$, $H(f)$ is the Hessian matrix of $f$, the points $X_0$ are local maxima of $f$, and  $g$ is a complex valued function of $X$ nonzero at $X_0$. Comparing \eqref{eq:chisaddle} and \eqref{sadd} we identify
\begin{equation}
    f(X)
    =
    \frac{1}{2}x^2+\frac{1}{2}y^2-2 {\tilde h}_i{\rm ln}|x|-2 {\tilde h}_j{\rm ln}|y|
    \quad 
    \mathrm{and}
    \quad 
    g(X)
    = 
    \frac{x^2-y^2}{(x+y)^2+\epsilon^2}\;.
\end{equation}
Computing the saddle point equations $x-2 {\tilde h}_i x^{-1}=0$ and $y-2 {\tilde h}_j y^{-1}=0$,
we find four saddle points:
\begin{equation}
    x=\pm \sqrt{2 {\tilde h}_i}\quad \mathrm{with}\quad y=\pm \sqrt{2 {\tilde h}_j}\;.
\end{equation}
Therefore, we see that the term associated with $f$ is the same for any $X_0$ and is
\begin{equation}
    e^{-Nf(X_0)}=(2 {\tilde h}_i)^{N {\tilde h}_i}(2 {\tilde h}_j)^{N {\tilde h}_j}e^{-N( {\tilde h}_i+ {\tilde h}_j)}\,,
\end{equation}
and therefore, for the Hessian,
\begin{equation}
    H(f)(X_0)=
    \begin{pmatrix}
        2&0\\
        0&2
    \end{pmatrix}\;
\quad 
{\rm {and}} 
\quad 
    \sqrt{ {\rm det}\big(H(f)(X_0)\big)}=2\;.
\end{equation}
Additionally, 
\begin{equation}
    \sum_{X_0}g(X_0)=2\frac{2 {\tilde h}_i-2 {\tilde h}_j}{(\sqrt{2 {\tilde h}_i}+\sqrt{2 {\tilde h}_j})^2+\epsilon^2}+2\frac{2 {\tilde h}_i-2 {\tilde h}_j}{(\sqrt{2 {\tilde h}_i}-\sqrt{2 {\tilde h}_j})^2+\epsilon^2}    =4\frac{( {\tilde h}_i- {\tilde h}_j)( {\tilde h}_i+ {\tilde h}_j+\epsilon^2/2)}{( {\tilde h}_i- {\tilde h}_j)^2+\epsilon^2( {\tilde h}_i+ {\tilde h}_j)+\epsilon^4/4} 
    \,.
\end{equation}
Joining everything, we obtain the saddle point approximate for the regularized character of $A^2$ as 
\begin{equation}
        \langle \chi_r(A^2)\rangle_\epsilon 
        = 
        {\tilde c}_N \underset{i,j}{
    \rm Pf}\left[\frac{1}{2}e^{-N( {\tilde h}_i+ {\tilde h}_j)}(2 {\tilde h}_i)^{N {\tilde h}_i}(2 {\tilde h}_j)^{N {\tilde h}_j}4\frac{( {\tilde h}_i- {\tilde h}_j)( {\tilde h}_i+ {\tilde h}_j+\epsilon^2/2)}{({\tilde h}_i-{\tilde h}_j)^2+\epsilon^2({\tilde h}_i+{\tilde h}_j)+\epsilon^4/4}\right] (1+{\mathcal O}(N^{-1}))\;.
    \label{eq:chisaddlept}
\end{equation}
Using some Pfaffian properties we can simplify the expression \eqref{eq:chisaddlept} and get
\begin{equation}
    \langle \chi_r(A^2)\rangle_\epsilon 
    = 
    {\tilde c}_N2^{\frac{N}{2}}\prod_k e^{-N {\tilde h}_k}(2 {\tilde h}_k)^{N {\tilde h}_k}
    \underset{i,j}{
    \rm Pf}\left[\frac{( {\tilde h}_i- {\tilde h}_j)( {\tilde h}_i+ {\tilde h}_j+\epsilon^2/2)}{( {\tilde h}_i- {\tilde h}_j)^2+\epsilon^2( {\tilde h}_i+ {\tilde h}_j)+\epsilon^4/4}\right](1+{\mathcal O}(N^{-1}))\;.
\end{equation}
Applying the limit $N\rightarrow \infty$, we find
\begin{equation}
    \lim_{N\rightarrow \infty}
    \frac{\langle \chi_r(A^2)\rangle_\epsilon}
    {{\tilde c}_N2^{\frac{N}{2}}\prod_k e^{-N {\tilde h}_k}(2 {\tilde h}_k)^{N {\tilde h}_k}
    \underset{i,j}{
    \rm Pf}\left[\frac{( {\tilde h}_i- {\tilde h}_j)( {\tilde h}_i+ {\tilde h}_j+\epsilon^2/2)}{( {\tilde h}_i- {\tilde h}_j)^2+\epsilon^2( {\tilde h}_i+ {\tilde h}_j)+\epsilon^4/4}\right]} = 1\;.
\label{eq:limN}
\end{equation}
\end{proof}

One may wish to apply $\epsilon \rightarrow 0$ limit to \eqref{eq:limN} and if the limits $\epsilon \rightarrow 0$ and $N \rightarrow \infty$ commute, then, \eqref{eq:limN} can be manipulated to say that in the large $N$ limit, 
 ${\langle \chi_r(A^2)\rangle_0}$ is equal to
\beq
\label{eq:conj}
   {{\tilde c}_N2^{\frac{N}{2}}\prod_k e^{-N {\tilde h}_k}(2 {\tilde h}_k)^{N {\tilde h}_k}
    \underset{i,j}{
   \rm Pf}\left[\frac{ {\tilde h}_i+ {\tilde h}_j}{ {\tilde h}_i- {\tilde h}_j}\right]} \;.
\eeq

We computed $\langle \chi_r(A^2)\rangle_0$ using \eqref{resfinn} and \eqref{eq:conj} for $1 \le N \le 30$ (we should note here that if we use integral form for $\langle \chi_r(A^2)\rangle_0$, then even $N=6$ the computation becomes slow) using Mathematica for trivial, defining, and determinant representations.
However, for the above values of $N$ that we tested, the values of the expression \eqref{eq:conj} do not converge to the values computed using the expression achieved in \eqref{resfinn}.

\medskip

\vskip 20pt

Now, we present first a new way of computing $\langle \chi_R(A)\rangle_0$ below in Proposition \ref{theorem:chiAR}, whose similar technique is used to compute $\langle \chi_R(A^2)\rangle_0$ in Theorem \ref{thm:chiA2R}.

\begin{proposition}
\label{theorem:chiAR}
Let $A$ be an $N \times N$ Hermitian matrix under the Gaussian measure. Given a representation $R$ of $\mathrm{GL}(N)$ defined through the set of shifted weights $h_i$, $i=1,...N,$ and considering $\chi_R$ the character in the representation $R$, the following holds true:
\begin{equation}
    \langle \chi_R(A)\rangle_0 
    = 
    \frac{\chi_R(\mathbb{1})\chi_R(C_2)}{\chi_R(C_1)}= (-1)^{\frac{1}{2}\frac{N}{2}\left(\frac{N}{2}-1\right)}N^{-\frac{n}{2}}\frac{\Delta(h)}{\prod_{i=0}^{N-1}i!}\frac{\prod_i(h_i^e-1)!!h_i^o!!}{\prod_{i,j}(h_i^o-h_j^e)}\,,
\end{equation}
where the numbers $h$ are separated in a set of $\lceil N/2\rceil$ even numbers $h^e$ and $\lfloor N/2 \rfloor$ odd numbers $h^o$. If such a separation is not possible, then the average is $0$.
\end{proposition}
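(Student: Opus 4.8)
The natural route is to run the baby version of the computation carried out for $\langle\chi_r(A^2)\rangle_0$ in Theorem~\ref{thm:chiA2finiteN}, with the de~Bruijn (Pfaffian) identity replaced by its determinantal ancestor, the Andr\'eief--Heine identity. Writing the character through the Weyl formula \eqref{chard}, $\chi_{\{h\}}(A)=\det_{k,l}[x_k^{h_l}]/\Delta(x)$, and passing to eigenvalues exactly as in \eqref{eq:avechiA2whatwewant}, one power of the Vandermonde cancels; since here the integrand is a polynomial times a Gaussian no regularization is needed, and
\begin{equation}
\langle\chi_{\{h\}}(A)\rangle_0
=\frac{{\tilde c}_N}{N!}\int_{\mathbb R^N}dX\;\Delta(x)\,\det_{k,l}\big[x_k^{h_l}\big]\,e^{-\frac N2\tr X^2}
={\tilde c}_N\,\det_{1\le i,j\le N}\big[\,G_{i-1+h_j}\,\big]\,,
\end{equation}
where ${\tilde c}_N$ is as in \eqref{ctil}, in the second step I used $\Delta(x)=\det_{k,l}[x_k^{l-1}]$ together with Andr\'eief's identity, and $G_m=\int_{\mathbb R}x^m e^{-\frac N2 x^2}dx$ vanishes for $m$ odd and equals $\sqrt{2\pi/N}\,(m-1)!!\,N^{-m/2}$ for $m$ even.

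The next step is to exploit the parity structure of this $N\times N$ determinant: the entry $G_{i-1+h_j}$ is nonzero only when $i$ and $h_j+1$ have the same parity, so $i$ odd is paired with even weights and $i$ even with odd weights. Reordering the rows into odd-indices-first and the columns into even-weights-first makes the matrix block diagonal with blocks of sizes $\lceil N/2\rceil$ and $\lfloor N/2\rfloor$. If the number of even (resp.\ odd) weights is not $\lceil N/2\rceil$ (resp.\ $\lfloor N/2\rfloor$), one block is rectangular and the determinant vanishes, which establishes the last sentence of the statement. When the split exists, I would pull $(h^e_b-1)!!$ out of the $b$-th column of the even block and $h^o_c!!$ out of the $c$-th column of the odd block, and a power of $N$ out of every row and column; what remains in the even block is $\det_{a,b}\big[\prod_{l=1}^{a-1}(h^e_b+2l-1)\big]$, whose $a$-th row is a monic polynomial of degree $a-1$ in $h^e_b$, hence equals the Vandermonde $\Delta(h^e)$, and likewise the odd block gives $\Delta(h^o)$. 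Accounting the powers of $N$ against the $N^{N^2/2}$ in ${\tilde c}_N$ reproduces $N^{-n/2}$ with $n=\sum_i h_i-N(N-1)/2$, the $(2\pi)$ factors cancel, and the surviving $\big(\prod_{k=0}^{N-1}k!\big)^{-1}$ is the claimed normalization, leaving $\langle\chi_{\{h\}}(A)\rangle_0=(\mathrm{sign})\,N^{-n/2}\big(\prod_{k=0}^{N-1}k!\big)^{-1}\Delta(h^e)\Delta(h^o)\prod_i(h^e_i-1)!!\,h^o_i!!$.

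It then remains to convert this into the advertised form and to pin down the overall sign. Sorting $\{h\}$ increasingly, the full Vandermonde factorizes as $\Delta(h)=(-1)^{K}\,\Delta(h^e)\Delta(h^o)\prod_{i,j}(h^o_i-h^e_j)$, where $K$ is the number of pairs in which an odd weight precedes an even one; crucially, the \emph{same} integer $K$ is the parity of the column permutation used above to block-diagonalize, so the two $\{h\}$-dependent signs cancel and the total sign collapses to the parity of the row permutation $(1,2,\dots,N)\mapsto(1,3,\dots;2,4,\dots)$, which counts $\tfrac12\tfrac N2\big(\tfrac N2-1\big)$ inversions. Assembling the pieces gives
\begin{equation}
\langle\chi_{\{h\}}(A)\rangle_0
=(-1)^{\frac12\frac N2\left(\frac N2-1\right)}\,N^{-\frac n2}\,\frac{\Delta(h)}{\prod_{i=0}^{N-1}i!}\,
\frac{\prod_i (h^e_i-1)!!\,h^o_i!!}{\prod_{i,j}(h^o_i-h^e_j)}\,,
\end{equation}
and the first equality of the Proposition is its character-theoretic repackaging, writing $\Delta(h)/\prod_{i=0}^{N-1}i!=\chi_R(\mathbb{1})$ via the Weyl dimension formula \eqref{dh} and collecting the remaining double-factorial factors as $\chi_R(C_2)/\chi_R(C_1)$ through \eqref{c2c1} (and $(C_2)^2=C_1$ at large $N$).

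I expect the only genuinely delicate point to be this sign bookkeeping: matching the row-permutation sign, the column-permutation sign, and the interleaving sign in the factorization of $\Delta(h)$, and verifying that the net result is indeed independent of $\{h\}$ (the displayed exponent being written for even $N$, with the $\lceil\cdot\rceil/\lfloor\cdot\rfloor$ count handling the general case). Everything else is the routine Andr\'eief-plus-Gaussian-moment computation, which is also the blueprint for the parallel evaluation of $\langle\chi_R(A^2)\rangle_0$ in Theorem~\ref{thm:chiA2R}; a shorter but less self-contained route to the first equality alone would simply invoke \eqref{in11} together with $d_h=\chi_R(\mathbb{1})$ and \eqref{c2c1}.
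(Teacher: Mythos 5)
Your proposal is correct, but it takes a genuinely different route from the paper. You compute the average directly in the eigenvalue representation: Weyl's character formula, one Vandermonde cancelling, Andr\'eief's identity reducing the integral to $\det_{i,j}[G_{i-1+h_j}]$ with Gaussian moments $G_m$, and then the parity block structure of that determinant. Your key steps check out: the rectangular-block argument gives the vanishing statement when the even/odd split is not $\lceil N/2\rceil/\lfloor N/2\rfloor$; factoring the double factorials and recognizing the monic polynomials $\prod_{l=1}^{a-1}(h^e_b+2l-1)$ correctly produces $\Delta(h^e)\Delta(h^o)$; the $N$-power bookkeeping against ${\tilde c}_N$ indeed yields $N^{-n/2}$; and your sign argument is sound --- the column-permutation parity is exactly the number of pairs with $h^o_i<h^e_j$, which is the same parity appearing in $\Delta(h)=(-1)^K\Delta(h^e)\Delta(h^o)\prod_{i,j}(h^o_i-h^e_j)$, so only the row-permutation sign $(-1)^{\frac12\frac N2(\frac N2-1)}$ survives (for even $N$; the paper's own \eqref{chchch} writes $\lceil N/2\rceil$ for the general case). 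The paper instead gives a purely algebraic derivation: it expands $\chi_R(A)$ via Schur--Weyl duality as in \eqref{eq:chiATrsigmaAn}, applies Wick's theorem to $\langle A^{\otimes n}\rangle_0$, uses $S_n$ character orthogonality \eqref{ort}, and then converts the class sum over $[2^{n/2}]$ and the dimension $s_R$ into $\chi_R(C_2)$ and $\chi_R(C_1)$ via the trace conditions defining those matrices (\eqref{s2n2}, \eqref{lrr}), landing directly on the first equality; the explicit weight formula then follows from the known formula \eqref{eq:chiCm} for $\chi_R(C_m)$. What each buys: your determinantal computation gives the finite-$N$ weight formula, the vanishing condition and the sign in one self-contained stroke, but it is essentially the classical Di Francesco--Itzykson evaluation, i.e.\ a re-derivation of \eqref{in11} followed by the repackaging through \eqref{dh} and \eqref{c2c1} (your parenthetical appeal to $(C_2)^2=C_1$ at large $N$ is not needed for that step); the paper's group-theoretic route is the one it advertises as new and, more importantly, is the technique that generalizes to $\langle\chi_r(A^2)\rangle_0$ in Theorem \ref{thm:chiA2R} via the Appendix lemma, whereas pushing your approach to $A^2$ leads instead to the Pfaffian/de Bruijn analysis of Theorem \ref{thm:chiA2finiteN}.
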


\begin{proof}
For $A\in \mathrm{GL}(N)$, using character orthogonality for the symmetric group $S_n$, we can write the character of $A$ as 
\bea
      \chi_R(A) 
       &=&
       \sum_{r\vdash n} \delta_{R,r}\chi_r(A)
       \crcr
       &=&
       \sum_{r\vdash n} 
       \Big(
       \sum_{\sigma\in S_n}\frac{1}{n!}\bar{\chi}_R(\sigma)\chi_r(\sigma)
       \Big)
       \chi_r(A)
       \,,
\eea
where the bar on
$\bar{\chi}_R(\sigma)$ denotes complex conjugate.
Interchanging the sums and  using Schur-Weil duality,
\begin{equation}\label{swd}
   \sum_{r\vdash n} \chi_r(\sigma)\chi_r(A)=\mathrm{Tr}(\sigma A^{\otimes n})
   \,,
\end{equation}
we obtain
\begin{equation}
   \chi_R(A) =\sum_{\sigma\in S_n}\frac{1}{n!}\bar{\chi}_R(\sigma)\mathrm{Tr}(\sigma A^{\otimes n})
   \,.
   \label{eq:chiATrsigmaAn}
\end{equation}
Now, we take the average of the above quantity. Wick's probability theorem tells us that
\begin{equation}
    \langle A^{\otimes n} \rangle_0 = N^{-\frac{n}{2}}\sum_{\gamma\in [2^{\frac{n}{2}}]}\gamma
    \,,
\end{equation}
where $[2^\frac{n}{2}]$ is the conjugacy class of permutations in $S_n$ with $n/2$ 2-cycles. Therefore we obtain
\bea
   \langle \chi_R(A) \rangle_0
   &=&
   \sum_{\sigma\in S_n}\sum_{\gamma\in [2^{\frac{n}{2}}]}N^{-\frac{n}{2}}\frac{1}{n!}\bar{\chi}_R(\sigma)\mathrm{Tr}(\sigma \gamma)
   \crcr
   &=&
   \sum_{\sigma\in S_n}\sum_{\gamma\in [2^{\frac{n}{2}}]}\sum_{r\vdash n}N^{-\frac{n}{2}}\frac{1}{n!}\bar{\chi}_R(\sigma)\chi_r(\sigma \gamma)\chi_r(\mathbb{1})
   \,,
\eea
where we used \eqref{swd} in the last equality, with $A ={\mathbb 1}$.
Again using orthogonality relations, rewrite
\begin{equation}\label{ort}
    \sum_{\sigma\in S_n}\frac{1}{n!}\bar{\chi}_R(\sigma)\chi_r(\sigma \gamma)=\delta_{R,r}\frac{1}{s_R}\chi_R(\gamma)
    \,,
\end{equation}
where we denote the dimension of the $S_n$ representation $s_R=\chi_R({\rm id})$, where ${\rm id}$ is the identity permutation, and 
\bea
   \langle \chi_R(A) \rangle_0
   &=&
   \sum_{\gamma\in [2^{\frac{n}{2}}]}\sum_{r\vdash n}N^{-\frac{n}{2}}\delta_{R,r}\frac{1}{s_R}\chi_R(\gamma) \; d_r
   \crcr
   &=&
   N^{-\frac{n}{2}}\frac{d_R}{s_R}\sum_{\gamma\in [2^{\frac{n}{2}}]}\chi_R(\gamma)
   \,.
   \label{avc}
\eea
In order to perform $\sum_{\gamma\in [2^{\frac{n}{2}}]}$, we use the following trick.
Let us first return to the relation \eqref{eq:chiATrsigmaAn} for some matrix $M$,
\begin{equation}\label{chch2}
   \chi_R(M) =\sum_{\sigma\in S_n}\frac{1}{n!}\bar{\chi}_R(\sigma)\mathrm{Tr}(\sigma M^{\otimes n})\;.
\end{equation}
We observe that \eqref{chch2} sums over all elements in $S_n$, whereas \eqref{avc} sums over a subset of $S_n$.
We aim to extract \eqref{avc} from \eqref{chch2}. 
We choose $M$ such that the summation in $S_n$ is restricted to elements of $[2^{\frac{n}{2}}]$, achieved by  $\mathrm{Tr}(\sigma M^{\otimes n})= 0$ for $\sigma$ not in $[2^\frac{n}{2}]$ and constant for when $\sigma$ is in $[2^\frac{n}{2}]$. Using the notation $[\prod_k k^{c_k}]$ for the cycle $[\sigma]$ which $\sigma$ belongs to, we wish to find $M$ such that
\begin{equation}
    \mathrm{Tr}(\sigma M^{\otimes n})= a \;\delta_{[\sigma],[2^{\frac{n}{2}}]}=a\, \delta_{c_2,\frac{n}{2}}\prod_{k\neq 2} \delta_{c_k,0}
    \,.
\end{equation}
for some constant $a$ (that might depend on $N$ or $n$ but not on $\sigma$). But also, we can write
\begin{equation}\label{cytr}
    \mathrm{Tr}(\sigma M^{\otimes n})=\prod_k \mathrm{Tr}(M^k)^{c_k}
    \,,
\end{equation}
leading us to the identification, for $k\neq 2$ ,
\begin{equation}
    \mathrm{Tr}(M^k)^{c_k}= \delta_{c_k,0}
    \,.
\end{equation}
Hence, the possibility of a nonzero $c_k$ with a general permutation $\sigma$ tells us that
\begin{equation}
    \mathrm{Tr}(M^k)=0\;
\end{equation}
for $k\neq2$. Then, recalling 
\begin{equation}\label{c2r}
    \mathrm{Tr}(C_2^k)=N\delta_{k,2}
    \,,
\end{equation}
we conclude $M=C_2$ with $a=N^\frac{n}{2}$ is a possible solution. Then, setting $M=C_2$ in \eqref{chch2} with \eqref{cytr}, we find that 
\begin{equation}\label{c2sl}
   \chi_R(C_2) =\sum_{\sigma\in S_n}\frac{1}{n!}\bar{\chi}_R(\sigma)\prod_k \mathrm{Tr}(C_2^k)^{c_k}\;.
\end{equation}
Finally, using \eqref{c2r} 
\begin{equation}
   \chi_R(C_2) =\sum_{\sigma\in [2^{\frac{n}{2}}]}\frac{1}{n!}\bar{\chi}_R(\sigma)N^{\frac{n}{2}}\,,
\end{equation}
we succeeded in restricting the sum to $[2^\frac{n}{2}]$. 
Identifying the last summation in \eqref{avc} as
\begin{equation}\label{s2n2}
   \sum_{\sigma\in [2^{\frac{n}{2}}]}\chi_R(\sigma)
   =
   n!N^{-\frac{n}{2}}\bar{\chi}_R(C_2)\,,
\end{equation}
we achieve an expression for the average of character,
\begin{equation}\label{drlr}
   \langle \chi_R(A) \rangle_0 = n!N^{-n}\frac{d_R}{s_R}\bar{\chi}_R(C_2)\;.
\end{equation}

Let us now compute $s_R$.
We perform a similar trick as above, by considering $[1^n]$ instead of $[2^\frac{n}{2}]$.
This time there is no summation in $[1^n]$ because the only element in this class is the identity permutation. 
We can show in a similar fashion that we find the result through the matrix $C_1$ instead of $C_2$. 
Skipping to the equivalent of \eqref{c2sl}, just changing the condition from $k=2$ to $k=1$ in relevant places, we are able to isolate the character of the identity permutation,
\begin{equation}
    \chi_R(C_1)=\frac{1}{n!}\bar{\chi}_R(id)N^n\;.
\end{equation}
Identifying $s_R = \bar{\chi}_R(id) = {\chi}_R(id)$, 
the dimension of the $S_n$ representation can be written in terms of $\chi_R(C_1)$ by
\begin{equation}\label{lrr}
    s_R=n!N^{-n}\bar{\chi}_R(C_1)\;.
\end{equation}
Finally, using \eqref{lrr} in \eqref{drlr} also with $d_R=\chi_R(\mathbb{1})$ we find that 
\begin{equation}
   \langle \chi_R(A) \rangle_0 = \frac{\chi_R(\mathbb{1})\chi_R(C_2)}{\chi_R(C_1)}\;.
   \label{eq:chiAprodchi}
\end{equation}

Additionally, 
with a formula for the character of $C_m$ 
\cite{Kazakov_1996}
\begin{equation}
    \chi_R(C_m)=c\;\; \left(\frac{N}{m}\right)^{\frac{1}{m}\sum_i h_i}\;\prod_{\epsilon=0}^{m-1}\frac{\Delta(h^{(\epsilon)})}{\prod_i\left(\frac{h_i^{(\epsilon)}-\epsilon}{m}\right)!}\,\mathrm{sgn}\left[\prod_{0\leq \epsilon_1<\epsilon_2\leq m-1}\prod_{i,j}(h^{(\epsilon_2)}_i-h^{(\epsilon_1)}_j)\right]
    \,,
    \label{eq:chiCm}
\end{equation}
where $\{h^{(\epsilon)}\}=\{a\in {\{h\}}\;|\; a=\epsilon \;\mathrm{mod}\; m \}$,
we can also find a different expression for $\langle \chi_R(A) \rangle_0$, by inserting \eqref{eq:chiCm} into \eqref{eq:chiAprodchi}
\begin{equation}
\label{chchch}
    \langle\chi_R(A)\rangle_0
    =
    (-1)^{\frac{1}{2}\lceil\frac{N}{2}\rceil(\lceil\frac{N}{2}\rceil-1)}N^{-\frac{n}{2}}\frac{\Delta(h)}{\prod_{i=0}^{N-1}i!}\frac{\prod_i (h^{(e)}_j-1)!!h^{(o)}!!}{\prod_{i,j}(h^{(o)}_i-h^{(e)}_j)}\;.
\end{equation}
This expression is written in terms of the highest weights $\{h\}$.
\end{proof}

Following a similar argument as Proposition \ref{theorem:chiAR}, we compute $\langle\chi_r(A^2)\rangle_0$.

\begin{theorem}
\label{thm:chiA2R}
Let $A$ be an $N \times N$ Hermitian matrix under the Gaussian measure. Given a representation $r$ of $\mathrm{GL}(N)$ defined through the set of shifted weights $h_i$, $i=1,...N,$ and considering $\chi_r$ the character in that representation, the following holds true:
\begin{equation}
   \langle\chi_r(A^2)\rangle_0 =\frac{\chi_r(\mathbb{1})^2}{\chi_r(C_1)}\;\big(1+{\mathcal O}(N^{-2})\big)=N^{-n}\prod_i h_i!\,\Delta(h)\;\big(1+{\mathcal O}(N^{-2})\big)\;.
\end{equation}
\end{theorem}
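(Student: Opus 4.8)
The plan is to run the argument of Proposition~\ref{theorem:chiAR} one level up, with $A$ replaced by $A^{2}$; the price is that $2n$ strands appear instead of $n$, and the relevant combinatorics becomes that of the Brauer algebra / the $2$‑quotient (domino) calculus. The cleanest route reuses the \emph{result} of Proposition~\ref{theorem:chiAR}. Since $\chi_{W}(A^{2})=\chi_{\mathrm{Sym}^{2}W}(A)-\chi_{\Lambda^{2}W}(A)$ for any $\mathrm{GL}(N)$‑representation $W$, taking $W=V_{r}$ with $r\vdash n$ and decomposing into irreducibles gives a \emph{finite} expansion $\chi_{r}(A^{2})=\sum_{s\vdash 2n}c_{r,s}\,\chi_{s}(A)$, whose coefficients are the Schur expansion of the plethysm $s_{r}[p_{2}]$. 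Littlewood's $2$‑quotient formula identifies these as $c_{r,s}=\mathrm{sgn}_{2}(s)\,c^{r}_{s^{(0)},s^{(1)}}$, where $(s^{(0)},s^{(1)})$ is the $2$‑quotient of $s$ (the coefficient vanishes unless $s$ has empty $2$‑core), $c^{r}_{\cdot,\cdot}$ is a Littlewood–Richardson coefficient, and $\mathrm{sgn}_{2}$ is the domino sign.

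Next I would apply $\langle\chi_{s}(A)\rangle_{0}=\chi_{s}(\mathbb{1})\chi_{s}(C_{2})/\chi_{s}(C_{1})$ (eq.~\eqref{eq:chiAprodchi}) to every summand, and insert the closed forms already established in the paper: $\chi_{s}(C_{1})=f_{s}N^{2n}/(2n)!$ (from \eqref{lrr}), $\chi_{s}(\mathbb{1})=\frac{f_{s}}{(2n)!}\prod_{\square\in s}(N+c(\square))$, and — from \eqref{eq:chiCm} rewritten through the $2$‑quotient, or equivalently from the $[2^{n}]$‑restriction trick of \eqref{s2n2} applied in $S_{2n}$ — $\chi_{s}(C_{2})=\frac{N^{n}}{2^{n}n!}\,\mathrm{sgn}_{2}(s)\binom{n}{|s^{(0)}|}f_{s^{(0)}}f_{s^{(1)}}$. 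Here $f_{\lambda}$ is the dimension of the $S_{|\lambda|}$‑irreducible $\lambda$ (the number of standard Young tableaux) and $c(\square)$ is a box content. The two factors $\mathrm{sgn}_{2}(s)$ then cancel, and because $s\mapsto(s^{(0)},s^{(1)})$ is a bijection onto pairs with $|s^{(0)}|+|s^{(1)}|=n$ one lands on
\begin{equation*}
\langle\chi_{r}(A^{2})\rangle_{0}=\frac{N^{n}}{2^{n}\,n!}\sum_{|\mu|+|\nu|=n}c^{r}_{\mu,\nu}\binom{n}{|\mu|}f_{\mu}f_{\nu}\prod_{\square\in s(\mu,\nu)}\!\Bigl(1+\tfrac{c(\square)}{N}\Bigr),
\end{equation*}
where $s(\mu,\nu)\vdash 2n$ is the partition whose $2$‑quotient is $(\mu,\nu)$.

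The remaining work is the large‑$N$ extraction. To leading order $\prod_{\square}(1+c(\square)/N)\to1$, and the sum collapses by a one‑line Hopf‑algebra identity: using $\sum_{\mu\vdash a}f_{\mu}s_{\mu}=p_{1}^{a}$ one gets $\sum_{|\mu|+|\nu|=n}c^{r}_{\mu,\nu}\binom{n}{|\mu|}f_{\mu}f_{\nu}=\langle\Delta s_{r},\Delta p_{1}^{n}\rangle=2^{n}\langle s_{r},p_{1}^{n}\rangle=2^{n}f_{r}$, so the leading term is $N^{n}f_{r}/n!$. This is precisely the leading term of $\chi_{r}(\mathbb{1})^{2}/\chi_{r}(C_{1})=\frac{N^{n}f_{r}}{n!}\prod_{\square\in r}(1+c(\square)/N)^{2}$, and substituting \eqref{dh} and \eqref{lrr} turns the latter into the second displayed form of the statement (up to the $N$‑dependent overall normalization kept implicit throughout). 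For the claimed accuracy $1+\mathcal{O}(N^{-2})$ one must also match the $\mathcal{O}(1/N)$ terms: $\prod_{\square}(1+c(\square)/N)=1+\frac1N\sum_{\square\in s}c(\square)+\mathcal{O}(N^{-2})$, the content sum of $s(\mu,\nu)$ obeys the abacus identity $\sum_{\square\in s}c(\square)=4(\Sigma_{\mu}+\Sigma_{\nu})+|\nu|-|\mu|$ with $\Sigma_{\lambda}:=\sum_{\square\in\lambda}c(\square)$, and resumming this — the "$|\nu|-|\mu|$" part drops out because $\sum c^{r}_{\mu\nu}\binom{n}{|\mu|}f_{\mu}f_{\nu}|\mu|=\sum c^{r}_{\mu\nu}\binom{n}{|\mu|}f_{\mu}f_{\nu}|\nu|$, and the "$4(\Sigma_{\mu}+\Sigma_{\nu})$" part is handled by the same Hopf‑algebra computation with the content operator (the central element $\sum_{i<j}(ij)$ of $\mathbb{C}[S_{\bullet}]$, whose eigenvalue on $\lambda$ is $\Sigma_{\lambda}$) inserted in one tensor slot of $\Delta s_{r}$ — reproduces exactly $\frac{2}{N}\sum_{\square\in r}c(\square)$ times the leading term, i.e.\ the $\mathcal{O}(1/N)$ term of $\chi_{r}(\mathbb{1})^{2}/\chi_{r}(C_{1})$. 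Hence the first genuine discrepancy sits at relative order $N^{-2}$.

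The main obstacle is exactly this subleading bookkeeping: the leading collapse is immediate, but confirming that the $\mathcal{O}(N^{-1})$ pieces cancel requires the correct content‑sum–versus–$2$‑quotient identity together with the reweighted Hopf‑algebra computation, and one must check that nothing of order $N^{-1}$ survives from the interplay of box contents with the Littlewood–Richardson weights. An equivalent and more literally "similar" derivation repeats the Schur–Weyl computation of Proposition~\ref{theorem:chiAR} verbatim — writing $\chi_{r}(A^{2})=\sum_{\sigma\in S_{n}}\frac1{n!}\bar\chi_{r}(\sigma)\,\mathrm{Tr}(\hat\sigma A^{\otimes 2n})$ with $\hat\sigma\in S_{2n}$ obtained by doubling each cycle of $\sigma$, averaging via $\langle A^{\otimes 2n}\rangle_{0}=N^{-n}\sum_{\gamma\in[2^{n}]}\gamma$, and evaluating $\sum_{\gamma\in[2^{n}]}\mathrm{Tr}(\hat\sigma\gamma)$ by the $C_{2}$‑trick of \eqref{s2n2} — and this is where the Brauer algebra enters directly, since one is summing a permutation against all fixed‑point‑free involutions; it leads to the same $2$‑quotient sum as above.
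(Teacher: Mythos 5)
Your argument is correct, but it follows a genuinely different route from the paper's. The paper proves the theorem by staying inside the Schur--Weyl/Wick framework of Proposition \ref{theorem:chiAR}: it writes $\mathrm{Tr}(\sigma (A^2)^{\otimes n})=\mathrm{Tr}((\sigma\otimes 1)A^{\otimes 2n}\alpha)$, averages with Wick's theorem, and then invokes a dedicated combinatorial lemma (Lemma \ref{lemapp} in the appendix, i.e.\ \eqref{app2}), which evaluates the sum over pairings $\gamma\in[2^n]$ of the partial trace $\mathrm{P_2Tr}(\gamma\alpha)$ directly, by classifying the strands of each pairing; the $1+\mathcal{O}(N^{-2})$ control comes out of that lemma uniformly, and the theorem then follows in two lines from the orthogonality relations \eqref{ort} and \eqref{lrr}. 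You instead work at the level of symmetric functions: expand $\chi_r(A^2)=s_r[p_2]$ in Schur functions via Littlewood's $2$-quotient (SXP) rule, apply the exact formula \eqref{eq:chiAprodchi} of Proposition \ref{theorem:chiAR} term by term with the closed forms for $\chi_s(\mathbb{1})$, $\chi_s(C_1)$ (from \eqref{lrr}) and $\chi_s(C_2)$ (the $S_{2n}$ version of \eqref{s2n2}, equivalently $\chi_s([2^n])$ through the $2$-quotient), observe the cancellation of the two domino signs, and then resum. Your leading-order collapse ($\sum c^{r}_{\mu\nu}\binom{n}{|\mu|}f_\mu f_\nu=2^n f_r$) and, more importantly, your subleading bookkeeping are sound: the content-sum identity for a partition with empty $2$-core, the symmetry argument killing the $|\nu|-|\mu|$ piece, and the central-element (content operator) computation giving $\sum c^{r}_{\mu\nu}\binom{n}{|\mu|}f_\mu f_\nu(\Sigma_\mu+\Sigma_\nu)=2^{n-1}f_r\Sigma_r$ do reproduce the $\mathcal{O}(N^{-1})$ term of $\chi_r(\mathbb{1})^2/\chi_r(C_1)$, so the discrepancy is indeed relative order $N^{-2}$ (both arguments, like the paper's, require $N\gg n$ with $n$ fixed). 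What each approach buys: the paper's lemma localizes the error estimate in a single pairing/partial-trace count (Brauer-type contractions), is more elementary, and extends more readily to higher powers $A^k$; your route yields an exact finite-$N$ identity (the displayed $2$-quotient sum) from which higher-order corrections are in principle extractable, at the cost of needing the SXP rule and the explicit verification that the $1/N$ terms cancel. Your closing sketch of a "more literal" derivation is essentially the paper's proof, except that where you would evaluate $\sum_{\gamma\in[2^n]}\mathrm{Tr}(\hat\sigma\gamma)$ by the class-sum/$C_2$ trick (which again lands on the same $2$-quotient sum), the paper instead proves the operator identity \eqref{app2} combinatorially; only the minor normalization issue you flag (the paper's implicit $N$-dependent constants in \eqref{eq:chiCm} when passing to the second displayed form) is left at the same level of informality as in the paper itself.
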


\begin{proof}
We start from the relation
\begin{equation}
   \chi_r(A^2) =\sum_{\sigma\in S_n}\frac{1}{n!}\bar{\chi}_R(\sigma)\mathrm{Tr}(\sigma (A^2)^{\otimes n})\;.
   \label{eq:chiA2start}
\end{equation}
We wish to express in terms of $A$ instead of $A^2$. 
We use the relation
\begin{equation}
    \mathrm{Tr}\big(\sigma (A^2)^{\otimes n}\big)=\mathrm{Tr}\big((\sigma\otimes 1) A^{\otimes 2n} \alpha\big)
\end{equation}
where $\alpha=(1,1+n)(2,2+n)\cdots (n,2n)$ and on the left the trace considers the vector space $\mathbb{C}_N^{\otimes n}$, while on the right the vector space considered is $\mathbb{C}_N^{\otimes 2n}$. This relation can be shown by using the definition of trace on both sides.
We then just write \eqref{eq:chiA2start},
\begin{equation}
\label{eq:chaA2n}
   \chi_r(A^2) =\sum_{\sigma\in S_n}\frac{1}{n!}\bar{\chi}_R(\sigma)\mathrm{Tr}((\sigma\otimes 1) A^{\otimes 2n} \alpha)\,,
\end{equation}
and just like before in Theorem \ref{theorem:chiAR}, taking the average and using Wick's theorem we obtain
\begin{equation}
   \langle\chi_r(A^2)\rangle_0 =\sum_{\sigma\in S_n}\sum_{\gamma\in [2^n]}N^{-n}\frac{1}{n!}\bar{\chi}_R(\sigma)\mathrm{Tr}((\sigma\otimes 1)\, \gamma\, \alpha)\;.
\end{equation}
At this point we need a relation that we prove in Appendix \ref{sec:app}, which is
\begin{equation}\label{app2}
    \sum_{\gamma \in [2^n]}\mathrm{Tr}((\sigma\otimes 1)\,\gamma\,\alpha)=\sum_{\rho \in S_n}\mathrm{Tr}(\rho)\mathrm{Tr}(\sigma\,\rho)\;\big(1+{\mathcal O}(N^{-2})\big)\,,
\end{equation}
where on the right hand side, the traces act on $\mathbb{C}_N^{\otimes n}$. Using this equation we then get
\begin{equation}
   \langle\chi_r(A^2)\rangle_0 =\sum_{\sigma\in S_n}\sum_{\rho \in S_n}N^{-n}\frac{1}{n!}\bar{\chi}_R(\sigma)\mathrm{Tr}(\rho)\mathrm{Tr}(\sigma\,\rho)\;\big(1+{\mathcal O}(N^{-2})\big)\;.
\end{equation}
The summation over $\sigma$ can then be evaluated by using the summation over representations \eqref{swd} with the identity matrix in place of $A$ and the summation over permutation \eqref{ort};
\begin{equation}
   \langle\chi_r(A^2)\rangle_0 
   =
   \sum_{\rho \in S_n}N^{-n}\mathrm{Tr}(\rho)\frac{d_R}{s_R}\chi_r(\rho)\;\big(1+{\mathcal O}(N^{-2})\big)\;.
\end{equation}
Repeating for the summation in $\rho$, and using $\chi_r(id)=s_R$ and \eqref{lrr}, we find
\begin{equation}
   \langle\chi_r(A^2)\rangle_0 
   =
   n!N^{-n}\frac{d_R^2}{s_R}\;(1+{\mathcal O}(N^{-2}))=
   \frac{\chi_r(\mathbb{1})^2}{\chi_r(C_1)}\;(1+{\mathcal O}(N^{-2}))\;.
\end{equation}
Finally, using \eqref{dh} and \eqref{eq:chiCm} we find that
\begin{equation}
   \langle\chi_r(A^2)\rangle_0 
   =
   N^{-n}\prod_i h_i!\,\Delta(h)\;\big(1+{\mathcal O}(N^{-2})\big)\;.
\end{equation}
This interesting new result is valid considering $N\gg n$.
\end{proof}

\section{Conclusions}
\label{sec:conc}

A comprehensive understanding of quantum gravity-matter systems is not only important for phenomenological applications but also to check the consistency of the the underlying quantum theory. A fundamental theory that encompasses quantum-gravitational degrees of freedom as well as matter fluctuations depends on the synergy between the collective quantum fluctuations of the full theory. A paradigmatic example is QCD that loses  asymptotic freedom and, thus, UV-completeness when the number of fermions is sufficiently large. Likewise, an asymptotically safe theory of quantum gravity can lose its UV fixed point for a sufficiently intricate matter content. Therefore, understanding the impact of matter degrees of freedom to the continuum limit of a candidate theory of quantum gravity is an essential ingredient in its construction. Conversely, quantum-gravity fluctuations can affect the dynamics of the underlying matter fields and trigger non-trivial effects that can leave imprints in the infrared. Furthermore, the large body of evidence reported by the CDT community reveals that the implementation of causality constraints can be central for a suitable continuum limit of the lattice-regularized path integral of quantum gravity. Thus, in view of what was said above, it is natural to investigate the continuum limit of CDT coupled to matter degrees of freedom. Actually, it is an active field of interest to understand the relation, if any, between the Euclidean and Lorentzian formalisms, see, e.g., \cite{Hamber:2012zm,Baldazzi:2018mtl,Visser:2021ucg,Gerard:2022fgu,Kontsevich:2021dmb}.

In the present work, we have defined a toy model that generated two-dimensional CDT configurations coupled to Ising model degrees of freedom. This was proposed under the framework of dually-weighted multi-matrix models. 
Matrix models enjoy rich analytical tools and, in many occasions, allow for the establishment of mathematical rigorous results such as 
eigenvalue decomposition techniques to explicitly compute matrix-integrals, large $N$ expansion where the sphere topology dominates at leading order in a Feynman expansion of the partition function \cite{tHooft:1973alw,Francesco_1995,Brezin:1977sv,KAZAKOV1985295,DAVID1985543}, double-scaling limit in which all topologies are taken into account\cite{BREZIN1990144,Shenker1991,Gross:1989vs}, topological recursion which lets us recursively solve the loop equations of matrix models and bridges combinatorial maps with algebraic and enumerative geometry \cite{Eynard:2007kz}, Weingarten calculus to compute matrix integrals \cite{collins2021weingarten}, etc.
Matrix models indeed generate the Brownian sphere at criticality and rigorously proven to be equivalent to 2d Liouville gravity \cite{miller2021liouville,Miller:2016dez,Marckert_2006,gall2006scaling,Le_Gall_2007,gall2012scaling}. Equipped with such abundant and rich tools, they therefore provide with us an enticing platform to further explore interesting mathematics and physics, with also possibly more physically relevant (higher dimensional) generalization in the context of quantum gravity; Matrix models have a higher dimensional generalization, i.e., tensor models \cite{Gurau:2009tw,Gurau:2011xp,Bonzom:2011zz,Rivasseau:2011hm,Rivasseau:2012yp,Rivasseau:2013uca,Rivasseau_2016,Rivasseau:2016wvy} where it is of interest to understand how to incorporate causality.

The initial hope of formulating CDT coupled to the Ising model as a multi-matrix model was to find an exact/analytical solution for it (in other words compute exactly the partition function), possibly using new techniques developed in the last decade since the work by Benedetti-Henson in \cite{Benedetti:2008hc}. For example, one new interesting result \cite{collins2022tensor} in recent works is the generalization of the Harish-Chandra–Itzykson–Zuber (HCIZ) integral, where the generalization considers integrals over tensor powers of $U(N)$. They studied the integral $\int dU\;e^{t\tr (AUBU^*)}$, where $U\in U(N)^{\otimes D}$, $A$ and $B$ are self-adjoint operators in $(\mathbb{C}^N)^{\otimes D}$, and $N$ and $D$ are positive integers. The HCIZ result is recovered by setting $D=1$. 
For our purposes, a generalization of the HCIZ integral for a trace with a different power would have been useful. That is, $\int dU\;e^{t\tr (AUBU^*)^k}$, where $U\in U(N)$, $A$ and $B$ are self-adjoint operators in $\mathbb{C}^N$, and $N$ and $k$ are positive integers.
Instead, we made use of Weingarten calculus (in Proposition \ref{unitaryint}), Schur-Weyl duality (in Proposition \ref{theorem:chiAR}, Theorem \ref{thm:chiA2R}), theory of symmetric group algebra (in Proposition \ref{theorem:chiAR}, Theorem \ref{thm:chiA2R}, and Lemma \ref{lemapp}), etc to achieve our results in this paper.

\medskip

Even though matrix models enjoy vast amount of analytical results, it seems that solving exactly the partition functions of the pure CDT-like matrix model of Benedetti-Henson and the CDT-like matrix model coupled with Ising models remains quite difficult.

One problem we faced was the necessity to do the character expansion more than once for the CDT-like matrix model with Ising model. See \eqref{2nd} and \eqref{doubleexpansion}.
In Proposition \ref{unitaryint}, the appearance of multiple representations from the expansion required us to use the theory of decomposition of reducible representations. The current state of this theory is not sufficient for our purposes, since it lacks closed formulas in terms of the representations involved; the existence of algorithms is not enough. 
For example, given a representation $R$ in terms of a set of shifted highest weights $\{h\}$, the decomposition of the representation $R\otimes R$ is doable through the Littlewood-Richardson rule. 
However, this process is not written in terms of functions which we could apply the standard theory of 
calculus (e.g., derivatives and integrals) to; for example,  closed formulas for the Littlewood-Richardson rule, as well as for the Clebsch-Gordan coefficients are not known.

Indeed, we foresee a similar problem when we consider solving for $\langle \chi_{\{h\}}(A^2)\rangle$, by expressing the character in terms of the symmetric square and alternating representations (which are, in general, reducible). 
There exists a relation $\{h\}\otimes\{h\}=\mathrm{Sym}^2\{h\}\oplus\mathrm{Alt}^2\{h\}$. 
This expression tells us that $\chi_{\{h\}}(A)^2=\chi_{\mathrm{Sym}^2\{h\}}(A)+\chi_{\mathrm{Alt}^2\{h\}}(A)$. Another known formula is $\chi_{\{h\}}(A^2)=\chi_{\mathrm{Sym}^2\{h\}}(A)-\chi_{\mathrm{Alt}^2\{h\}}(A)$. 
This last decomposition can be done with the Carré--Leclerc domino tableaux algorithm, in a similar way the Littlewood-Richardson rule gives an algorithm to decompose the product of irreducible representations into a direct sum of irreducible representations. 
Nevertheless, this is just an algorithm, and does not give a closed formula to which representations contribute, and it seems that such formula is not known.
Therefore, we seem to encounter the similar problem as we did in the Proposition \ref{unitaryint}, where Clebsch-Gordan coefficients are unknown. In order to solve for the partition functions of the CDT-like matrix models, we do need general explicit formulas for such coefficients.

Another difficulty lies in  the fact that the representations that contribute most to the character expansion are  expected to have the size proportional to $N^2$. 
More specifically, the number of shifted weights is $N$ and each shifted weight grows proportionally to $N$. See Theorem \ref{thm:chiA2R}.
When applying the theory of symmetric group algebra, one of the symmetric groups considered is $S_n$, where $n$ is the size of the representation (corresponding to the number of boxes of Young diagram of a given representation). 
In principle, the size of the representation $n$ should be proportional to $N^2$ \cite{Kazakov_1999} \cite{Douglas_1993}. 
Therefore, if we want to compute the partition functions \eqref{doubleexpansion} and \eqref{eq:Zhy}, then we need to take care of the $N$ dependence in $n$.
However, in this current work, our results in Theorem \ref{thm:chiA2R} and Lemma \ref{lemapp} relied on us taking large $N$ limit but keeping $n$ finite.
Working with $S_n$ for large $n$ is challenging because the dependence of $n$ is involved in expressions like \eqref{eq:chaA2n}.

\medskip

Nevertheless, we have achieved several new results concerning the averages of character of Hermitian matrix or Hermitian matrix squared for a given representation, which may be of interest in the mathematics community and alike.

In particular, the results in Theorems \ref{thm:epilonN}, \ref{thm:chiA2finiteN} and \ref{thm:chiA2R} add to the known properties of the decomposition of the product of representations $r\otimes r$ in terms of irreducible representations.

In Theorem \ref{thm:chiA2finiteN}, the expression of $\langle\chi_R(A^2)\rangle$ in \eqref{resfinn}  is for finite $N$ and $n$. This result reduces the average integral in \eqref{avdef} to a finite sum over integers, therefore, more computable. The next step is either evaluate this sum, that has only products and ratios of factorials as terms, or possibly evaluating the Pfaffian without computing the sum. 

We also found an expression for $\langle \chi_R(A^2)\rangle$ at large $N$ and $n$  in Proposition \ref{thm:epilonN}. 
The expression itself of the Pfaffian ${\rm Pf}\left[\frac{{\tilde h}_i+{\tilde h}_j}{{\tilde h}_i-{\tilde h}_j}\right]$ is similar to the known result in \eqref{pffpr}, except that in \eqref{lnap} each matrix element is inverse to the respective element in \eqref{pffpr} and is unknown to our knowledge.

In Theorem \ref{thm:chiA2R}, we explicitly found the leading order expression of the character of $A^2$ for a given representation in the large $N$ limit for a finite $n$. 
This result is related to cell modules in Brauer algebra \cite{bulgakova:tel-02554375}.

Additionally, the results Theorem \ref{thm:Cm} on the matrix $C_m$ (which is responsible for yielding the causal structure to the Feynman graphs generated) are new to our knowledge. 
We explicitly found the distribution of eigenvalues of $C_m$.

\medskip

For future works, we can consider several possible extensions and additions of our current work.
It should be possible to extend the result of Theorem \ref{thm:chiA2R} to any powers of $A$ by generalizing the expression used in \eqref{app2}. This expression, shown in Appendix \ref{sec:app}, is obtained by combinatorial evaluations, which we expect can be applied to extend our result to higher powers of $A$.

In \cite{Benedetti:2008hc}, the critical value of the coupling $g$ in \eqref{eq:CDTMMDarioABC2} was computed. By solving for the distribution of highest weights, they were able to observe a critical behavior where the distribution extended over the entire positive real line. This criticality happened for coupling $g=1/2$, which is the same as the criticality found in \cite{Ambj_rn_1998}.
We may be able to obtain the critical property of partition functions \eqref{eq:CDTMMDarioABC2} if we can understand the properties of Pfaffians ${\rm Pf}\left[\frac{{\tilde h}_i+{\tilde h}_j}{{\tilde h}_i-{\tilde h}_j}\right]$ in \eqref{lnap} well enough, even if we do not know the explicit expression for it.

In summary, the present work puts forward a dually-weighted multi-matrix model that, for a particular choice of weight, corresponds to two-dimensional CDT coupled to the Ising model. To the best of our knowledge, there is no known exact solution for CDT coupled to Ising even in two dimensions. Hence, one possibility to be explored is whether the model here proposed can be well-suited for different approximations of the CDT-Ising model as, e.g., by means of Monte-Carlo simulations. A reformulation of the underlying theory by a different formalism can lead to new insights and this is worth exploring in the future. Moreover, we have now the possibility to couple more sophisticated matter models to two-dimensional CDT in the matrix-model frameowork, such as the Potts model. This will be reported elsewhere. Lastly, the dually-weighted multi-matrix model can be investigated on its own, i.e., beyond the choice of the causal constraint as mainly emphasized in this work.

\section*{Acknowledgements}

The authors would like to thank 
Dario Benedetti, Taro Kimura, Benoit Collins, Sanjaye Ramgoolam, Mark Wildon, Luca Lionni, Paul Wedrich, Liron Speyer, and Christopher Chung, Dmitri Grinko, Cihan Pazarbasi and Shinobu Hikami for discussions and valuable inputs. 
We would also like to thank the Insitut Henri Poincar\'e thematic program ``Quantum Gravity, Random Geometry and Holography", 
2023, 
where
part of the work was done
and 
many 
discussions took place.  
ADP acknowledges CNPq under the grant PQ-2 (312211/2022-8), FAPERJ under the “Jovem Cientista do Nosso Estado” program (E26/202.800/2019 and E-26/205.924/2022), and NWO under the VENI Grant (VI.Veni.192.109) for financial support.

\section*{Appendix}
\label{sec:app}

We show how to derive the relation in $\eqref{app2}$. 
First we show a more general relation in the following Lemma \ref{lemapp}, and we obtain \eqref{app2} by performing left multiplication on both sides of the equation \eqref{ger} by $\sigma$ and taking the trace.

\begin{lemma}\label{lemapp}
Consider the trace $\tr$ over a vector space $C_N^{\otimes2n}=C_N^{\otimes n}\otimes C_N^{\otimes n}$. Let $\mathrm{P_2 Tr}$ be the partial trace over the second $C_N^{\otimes n}$ vector space, and $\alpha=(1,1+n)(2,2+n)\cdots (n,2n)$ be the permutation that switches the first and second $C_N^{\otimes n}$ vector spaces. For a given $n$,
\begin{equation}\label{ger}
    \sum_{\gamma \in [2^n]}\mathrm{P_2Tr}(\gamma\,\alpha)
    =
    \sum_{\rho \in S_n}\mathrm{Tr}(\rho)\,\rho\;\big(1+{\mathcal O}(N^{-2})\big)\;.
\end{equation}
\end{lemma}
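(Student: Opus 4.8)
The plan is to evaluate the left-hand side combinatorially by thinking of each $\gamma\in[2^n]$ as a fixed-point-free involution on the set $\{1,\dots,2n\}$, and then tracking how it interacts with the ``swap'' permutation $\alpha=(1,1+n)\cdots(n,2n)$ under the partial trace over the second factor. First I would recall that $\mathrm{P_2Tr}$ of a permutation $\pi\in S_{2n}$ acting on $\mathbb{C}_N^{\otimes 2n}$ produces an operator on $\mathbb{C}_N^{\otimes n}$ together with a factor $N^{(\text{number of cycles of }\pi \text{ supported entirely in the second block after the trace})}$; more precisely, contracting the second $n$ tensor slots of a permutation matrix closes up exactly those cycles that live in the second block, each contributing a factor of $N$, and leaves the remaining strands as an operator on the first block. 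So the first step is to set up bookkeeping: partition the $n$ transpositions of $\gamma$ according to whether they connect two indices in the first block $\{1,\dots,n\}$, two indices in the second block $\{n+1,\dots,2n\}$, or one in each (``crossing'' transpositions).

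Next I would compose $\gamma$ with $\alpha$. Since $\alpha$ identifies index $k$ with $k+n$, composing with $\alpha$ and then taking the partial trace over the second block effectively ``folds'' the second block onto the first: a crossing transposition $(k,\,l+n)$ of $\gamma$ becomes, after this folding, a statement relating $k$ and $l$ in the first block, i.e.\ it contributes to building a permutation $\rho\in S_n$; a purely-first-block transposition $(k,l)$ of $\gamma$ survives as a transposition of $\rho$ composed with $\mathrm{Tr}$-type closure; and a purely-second-block transposition $(k+n,l+n)$ gets closed by the partial trace into a scalar factor of $N$. The combinatorial heart of the argument is then to show that the leading contribution — the one of order $N^0$ relative to the natural normalization — comes exactly from the terms where $\gamma$ has $n$ crossing transpositions (a perfect matching between the two blocks), because each such $\gamma$ is parametrized by a permutation $\rho\in S_n$ via $\gamma=\prod_k (k,\,\rho(k)+n)$, and for that configuration $\mathrm{P_2Tr}(\gamma\,\alpha)$ collapses neatly to $\mathrm{Tr}(\rho)\,\rho$ (the $\mathrm{Tr}(\rho)$ factor coming from the cycles that close off under the partial trace, the residual $\rho$ acting on the first block). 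Summing over all such $\gamma$ gives $\sum_{\rho\in S_n}\mathrm{Tr}(\rho)\,\rho$ on the nose.

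The remaining step is the error estimate: I would argue that any $\gamma\in[2^n]$ with at least one purely-first-block transposition (equivalently at least one purely-second-block transposition, by a parity/counting argument on the two blocks) yields a term that is suppressed by a relative factor of $N^{-2}$. The mechanism is that replacing a crossing matching by one with a first-block pair and a second-block pair trades an operator-valued strand structure for one extra explicit power of $N$ but loses two powers of $N$ worth of index contractions in the worst case, netting $\mathcal{O}(N^{-2})$; this is where the specific combinatorics of how cycles of $\gamma\alpha$ split between ``closed'' (scalar) and ``open'' (operator) parts must be counted carefully. I expect this error-counting step to be the main obstacle: one has to show uniformly over the subleading matchings that no cancellation-free term escapes the $N^{-2}$ bound, which requires a clean classification of $\gamma\alpha$ by the cycle type of the induced data on the two blocks rather than a case-by-case check. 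Once that is in hand, equation \eqref{ger} follows, and \eqref{app2} is then immediate by left-multiplying both sides by $\sigma$ and taking the full trace, using $\mathrm{Tr}\big((\sigma\otimes 1)\,\gamma\,\alpha\big)=\mathrm{Tr}\big(\sigma\cdot\mathrm{P_2Tr}(\gamma\,\alpha)\big)$ and $\mathrm{Tr}(\sigma\rho)$ on the right.
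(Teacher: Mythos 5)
Your plan follows the same route as the paper's proof: view each $\gamma\in[2^n]$ as a fixed-point-free pairing, split its pairs into within-block and crossing ones, observe that the all-crossing pairings are parametrized by permutations $\rho\in S_n$ and that for these $\mathrm{P_2Tr}(\gamma\,\alpha)=N^{\mathrm{cy}(\rho)}\rho^{\mp1}=\mathrm{Tr}(\rho)\,\rho^{\mp1}$ (the inversion washes out after summing over $\rho$), and then argue that pairings containing a within-block pair are suppressed. The leading-term part of your argument is correct and is exactly what the paper does.

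The genuine gap is the error estimate, which you yourself identify as the main obstacle but do not actually supply. Your stated mechanism (``one extra explicit power of $N$ but loses two powers of $N$ worth of index contractions in the worst case'') is an assertion rather than a proof, and as phrased it is equally consistent with only an $N^{-1}$ suppression: nothing in the sketch excludes a subleading $\gamma$ whose scalar prefactor falls just one power of $N$ short of the leading coefficient $N^{\mathrm{cy}(\pi)}$ of the permutation $\pi$ it produces. The paper closes precisely this hole by evaluating $\mathrm{P_2Tr}(\gamma\,\alpha)$ for an \emph{arbitrary} $\gamma\in[2^n]$: it decomposes $\gamma$ into within-block involutions $\nu_f,\nu_i$ defined in \eqref{nuf} and \eqref{nui} and a permutation $\mu$ encoding the crossing structure, \eqref{mu}, and obtains the uniform formula $\mathrm{P_2Tr}(\gamma\,\alpha)=N^{\mathrm{cy}(\mu)-s}\,\nu_f\,\mu\,\nu_i$ of \eqref{res}, the exponent being the number of cycles of $\gamma\alpha$ contained entirely in the traced block. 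The step from ``suppressed'' to ``suppressed by $N^{-2}$'' then rests on a parity argument (in the paper, the evenness of $s$): equivalently, $\gamma\alpha$ is an even permutation of $2n$ letters, so it has an even number of cycles; its block-one-only and block-two-only cycles occur in equal numbers, hence the number of mixed cycles — which is exactly the deficit between the exponent and $\mathrm{cy}(\nu_f\mu\nu_i)$ — is even, and it is nonzero as soon as a within-block pair is present, giving at least $N^{-2}$. Your proposal needs this uniform evaluation (or an equivalent cycle-counting lemma) to become a proof; with it added, it coincides with the paper's argument, and your final reduction of \eqref{app2} from \eqref{ger} by multiplying by $\sigma$ and tracing is the same as the paper's.
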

Before we begin our proof, let us see an example of \eqref{ger}, checking it for $n=2$. The permutations in $[2^2]$ are shown in Fig.\,\ref{22}.
For each permutation $\sigma$ in $[2^2]$, $\mathrm{P_2 Tr} (\sigma\,\alpha)$ is shown in Fig.\,\ref{tr22}.
This way, we indeed check
\begin{equation}    \sum_{\gamma\in[2^2]}\mathrm{P_2Tr}(\gamma\;\alpha)= [1+N^2](1)(2)+N(1\, 2) = [N^2(1)(2)+N(1\,2)][1+{\mathcal O}(N^{-2})]\;.
\end{equation}

\begin{figure}[h]
    \begin{subfigure}{0.3\textwidth}
    \centering
    \includegraphics[width=.4\linewidth]{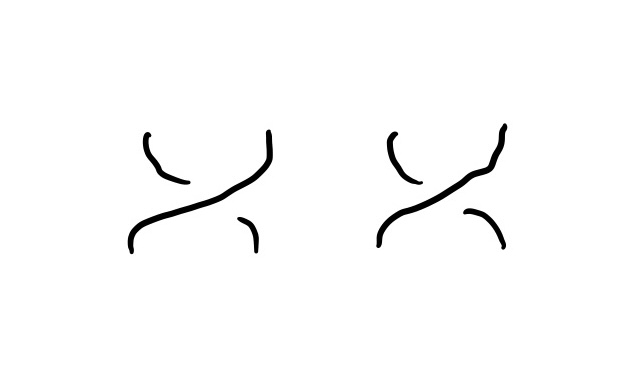}
    \caption{$(1\, 2)(3\, 4)$}
    \end{subfigure}
    \begin{subfigure}{0.3\textwidth}
    \centering
    \includegraphics[width=.4\linewidth]{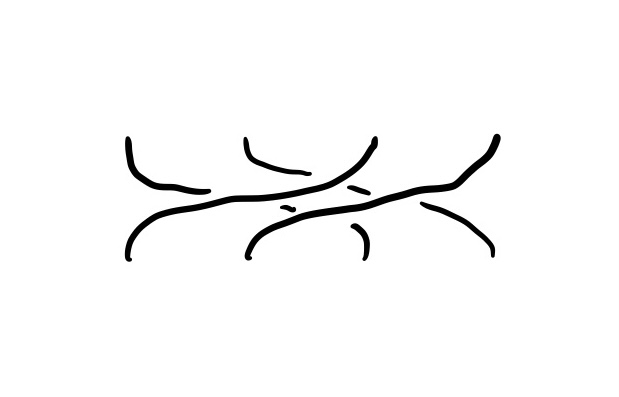}
    \caption{$(1\, 3)(2\, 4)$}
    \end{subfigure}
    \begin{subfigure}{0.3\textwidth}
    \centering
    \includegraphics[width=.4\linewidth]{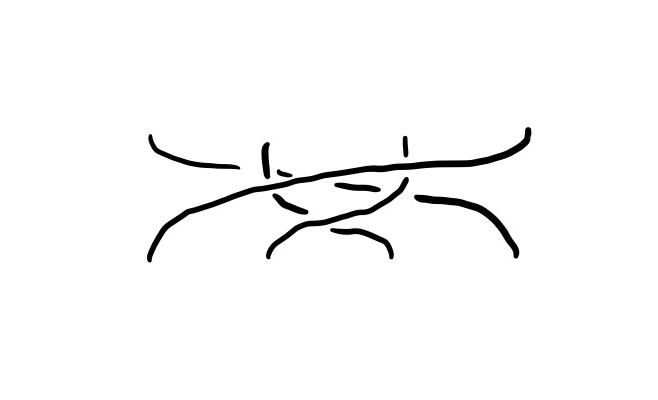}
    \caption{$(1\, 4)(2\, 3)$}
    \end{subfigure}
    \caption{Permutations in $[2^2]$.}
    \label{22}
\end{figure}
\begin{figure}[t]
    \begin{subfigure}{0.3\textwidth}
    \centering
    \includegraphics[width=.6\linewidth]{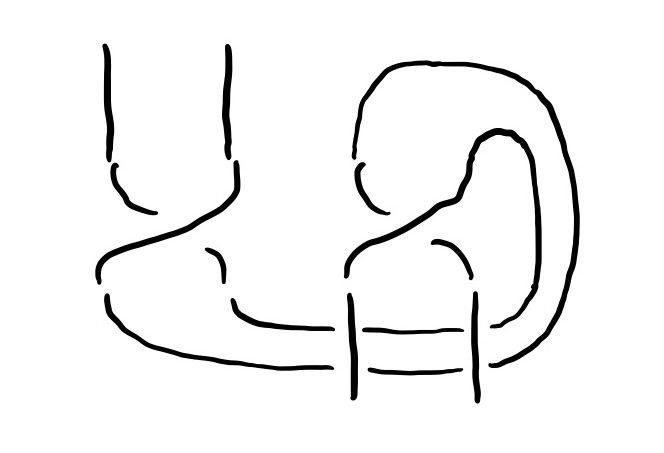}
    \caption{$(1)(2)$}
    \end{subfigure}
    \begin{subfigure}{0.3\textwidth}
    \centering
    \includegraphics[width=.6\linewidth]{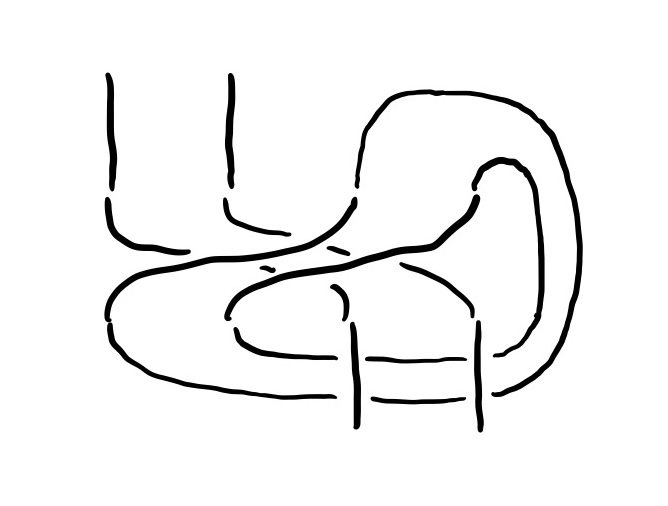}
    \caption{$N^2(1)(2)$}
    \end{subfigure}
    \begin{subfigure}{0.3\textwidth}
    \centering
    \includegraphics[width=.6\linewidth]{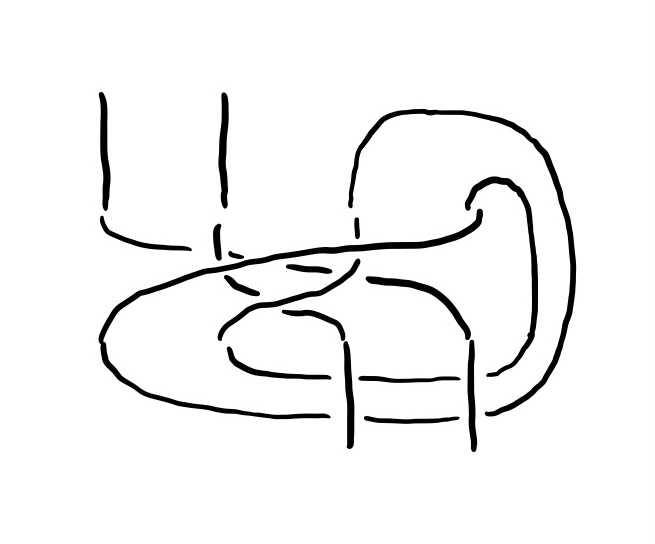}
    \caption{$N(1\, 2)$}
    \end{subfigure}
    \caption{Partial trace of permutations in $[2^2]$.}\label{tr22}
\end{figure}

\begin{proof}
Let $\gamma \in [2^n]$. Defining the interval of integers $[a,b]=\{x|x\in\mathbb{Z}, a\leq x \leq b\}$ and the sets $p_f=\{x|x\in[1,n], \gamma x \in[1,n]\}$ and $p_i=\{\alpha x | x\in [n+1,2n]| \gamma x \in [n+1,2n]\}$, where $\alpha=(1,1+n)(2,2+n)\cdots (n,2n)$. 
Remark $\alpha = \alpha^{-1}$ and $\gamma = \gamma^{-1}$.
Noticing the cardinalities of $p_f$ and $p_i$ are equal, we call $s=|p_f|=|p_i|$. Define the permutation $\nu_f\in [2^\frac{s}{2}]$ by
\begin{equation}\label{nuf}
    \nu_f x =\begin{cases}
\gamma x  & \text{ if } x\in p_f \\
x & \text{ if } x\in [1,n]\setminus p_f 
\end{cases}\;.
\end{equation}
Define the permutation $\nu_i\in[2^\frac{s}{2}]$ by
\begin{equation}\label{nui}
    \nu_i x =\begin{cases}
\alpha \gamma \alpha x & \text{ if } x\in p_i \\
x & \text{ if } x\in [1,n]\setminus p_i 
\end{cases}\;.
\end{equation}
Define also $\mu\in S_n$ by
\begin{equation}\label{mu}
    \mu x =\begin{cases}
 \gamma\alpha x & \text{ if } x\in [1,n]\setminus p_i   \\
(\gamma \alpha)^{-k(x)}x & \text{ if } x\in p_i 
\end{cases}\,,
\end{equation}
where $k(x)=\mathrm{min}\{m|m\geq0, (\gamma\alpha)^{-m} x\in p_f\}$. 
Checking each case we see that
\begin{equation}
    \gamma x =\begin{cases}
 \nu_f x & \text{ if } x\in p_f \\
 \alpha \nu_i\alpha x & \text{ if } x\in \alpha p_i \\
\mu\alpha x & \text{ if } x\in [n+1,2n]\setminus \alpha p_i \\
(\mu\alpha)^{-1} x & \text{ if } x\in [1,n]\setminus p_f 
\end{cases}\,,
\end{equation}
and applying a multiplication by $\alpha$,
\begin{equation}
    \gamma \alpha x =\begin{cases}
 \nu_f \alpha x & \text{ if } x\in \alpha p_f \\
 \alpha \nu_i x & \text{ if } x\in  p_i \\
\mu x & \text{ if } x\in [1,n]\setminus p_i \\
\alpha\mu^{-1}\alpha x & \text{ if } x\in [n+1,2n]\setminus \alpha p_f 
\end{cases}\;.
\end{equation}
Let us review the definition of partial trace of an operator $\eta\in\mathrm{End}(\mathbb{C}_N^{2n})$. The tensor coefficients of $\eta$ are defined by
\begin{equation}
    \eta_{i_1,...i_{2n}}^{j_1,...,j_{2n}}=\langle e_{j_1}, ..., e_{j_{2n}}|\,\eta\,| e_{i_1}, ..., e_{i_{2n}}\rangle \,,
    \label{eq:partialtracecoeff}
\end{equation}
where $| e_{i_1}, ..., e_{i_{2n}}\rangle$, for $1\leq i_k \leq N$ with $1\leq k \leq 2n$, is a basis of ${\mathbb{C}}_N^{2n}$.
In the case $\eta$ is a permutation, the coefficients \eqref{eq:partialtracecoeff} can be evaluated as
\begin{equation}
    \eta_{i_1,...,i_{2n}}^{j_1,...,j_{2n}}= \delta_{i_1}^{j_{\eta(1)}}...\delta_{i_{2n}}^{j_{\eta(2n)}}=\delta_{l_1}^{j_{1}}...\delta_{l_{2n}}^{j_{2n}}\delta_{i_1}^{l_{\eta(1)}}...\delta_{i_{2n}}^{l_{\eta(2n)}}\;.
\end{equation}
We define the partial trace of $\eta$ as a trace over its last $n$ indices,
\begin{equation}
    \mathrm{P_2Tr}(\eta)_{i_1,...,i_n}^{j_1,...,j_n}=\sum_{k_{n+1},...,k_{2n}}\eta_{i_1,...,i_n,k_{n+1},...,k_{2n}}^{j_1,...,j_n,k_{n+1},...,k_{2n}}\;.
\end{equation}
If $\eta$ is a permutation in $S_n$ represented in $\mathrm{End}(\mathbb{C}_N^{2n})$, its partial trace is equal to
\begin{equation}
    \mathrm{P_2Tr}(\eta)_{i_1,...,i_n}^{j_1,...,j_n}=\sum_{\substack{k_{n+1},...,k_{2n}\\l_1,...,l_{2n}}}\delta_{l_1}^{j_{1}}...\delta_{l_n}^{j_{n}}\delta_{l_{n+1}}^{k_{n+1}}...\delta_{l_{2n}}^{k_{2n}}\delta_{i_1}^{l_{\eta(1)}}...\delta_{i_n}^{l_{\eta(n)}}\delta_{k_{n+1}}^{l_{\eta(n+1)}}...\delta_{k_{2n}}^{l_{\eta(2n)}}\,,
\end{equation}
Let $\mathrm{cy}(\mu)$ be the number of cycles in $\mu$ (note that indeed $\mathrm{cy} (\mu) = \mathrm{cy} (\mu^{-1})$). Checking case by case, we then see that for $\eta = \gamma \alpha$, 
\begin{equation}
    \mathrm{P_2Tr}(\gamma \alpha) x =  \begin{cases}
 N^{\mathrm{cy}(\mu^{mu})-s}\, \mu \, x & \text{ if } x\in [1,n]\setminus p_i \\
N^{\mathrm{cy}(\mu^{-1})-s} \, \nu_f \, \mu^{-k(x)}\, \nu_i \, x & \text{ if } x \in p_i 
\end{cases}\;,
\end{equation}
where the power of N, which is $\mathrm{cy}(\mu)-s$, is the number of cycles of $\alpha\gamma$ which contains only elements from $n+1$ to $2n$. This power also counts the number of closed loops in the graphic representation, see Fig. \ref{tr22}.
By noticing that the two cases are both equal to $\nu_f\mu\nu_i$ restricted to their conditions, simply we can write
\begin{equation}\label{res}
    \mathrm{P_2Tr}(\gamma \alpha) = N^{\mathrm{cy}(\mu)-s} \, \nu_f \, \mu \, \nu_i\;.
\end{equation}
If $s=0$, then $\nu_f=\nu_i=id$ and $\gamma\alpha=\mu\otimes\mu^{-1}$, therefore
\begin{equation}
    \mathrm{P_2Tr}(\alpha\gamma) = N^{{\mathrm{cy}}(\mu)}\mu\;.
\end{equation}
If $s>0$, the coefficient of $\nu_f\mu\nu_i$ is of lower order than from the from $(\nu_f\,\mu\,\nu_i)\otimes(\nu_f\,\mu\,\nu_i)^{-1}$, therefore
\begin{equation}
    \sum_{\gamma\in[2^n]}\mathrm{P_2Tr}(\gamma \alpha) = \sum_{\mu\in S_n}\mathrm{P_2Tr}(\mu\otimes\mu^{-1})\;\left(1+{\mathcal O}(N^{-2})\right)=\sum_{\mu\in S_n}N^{\mathrm{cy}(\mu)}\mu\;\left(1+{\mathcal O}(N^{-2})\right)\,,
\end{equation}
where the order is ${\mathcal O}(N^{-2})$ since $s$ is even.

\end{proof}

As an illustration of the proof above, we consider a permutation $\gamma$ in $[2^{10}]$ given by 
\begin{equation}
\gamma=(1\;5)(2\;17)(3\;18)(4\;15)(6\;12)(7\;10)(8\;13)(9\;11)(14\;16)(19\;20)    
\end{equation}
and is represented graphically in Fig.\,\ref{210c}.
For all the lines connecting the bottom to the top, call $x$ and $y$ the values at the bottom and top of the line, respectively. Each pair $x$ and $y$ falls into one of four distinct cases: $1\leq x\leq 10$ and $1\leq y \leq 10$; $11\leq x\leq 20$ and $11\leq y \leq 20$; $11\leq x\leq 20$ and $1\leq y \leq 10$; $1\leq x\leq 10$ and $11\leq y \leq 20$. In Fig.\,\ref{210c} we highlight the first two cases in red, the third case in green and the fourth case in pink.
\begin{figure}[t]
    \centering
    \includegraphics[width=0.7\linewidth]{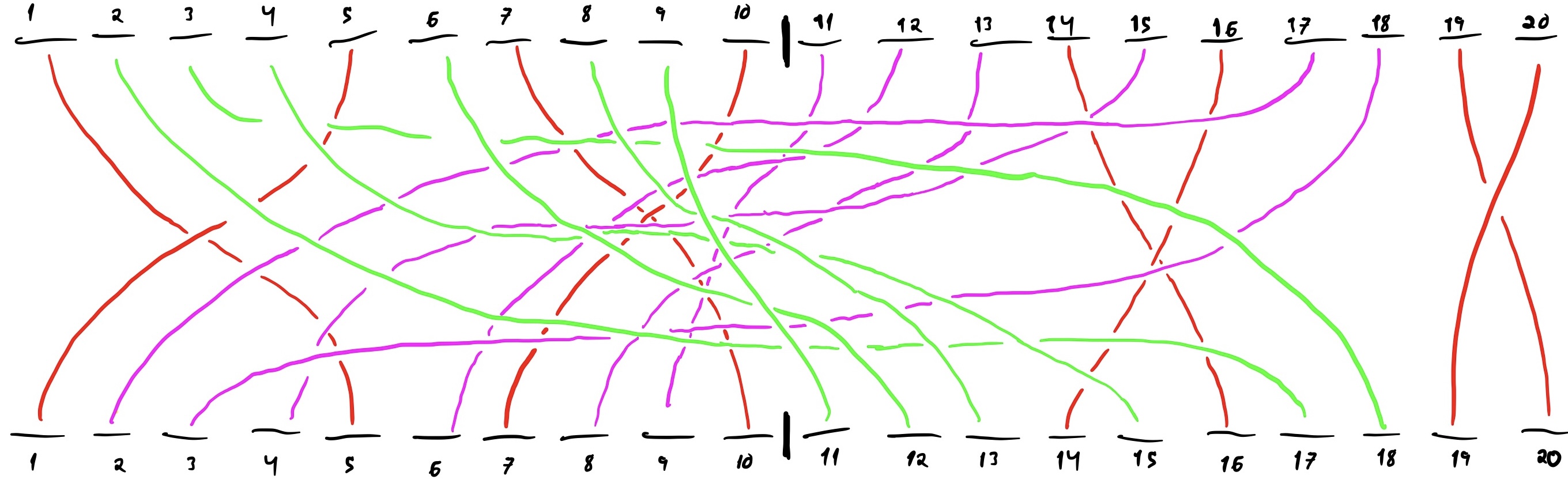}
    \caption{An example of a permutation in $[2^{10}]$.}
    \label{210c}
\end{figure}
The sets $p_i$ and $p_f$ used in the lemma \ref{lemapp} are $p_i=\{4,6,9,10\}$ and $p_f=\{1,5,7,10\}$. The permutation $\nu_f$ defined in \eqref{nuf} is the permutation associated with the case $1\leq x\leq 10$ and $1\leq y \leq 10$, shown in red on the left side of Fig.\,\ref{210c}, and given by
\begin{equation}
    \nu_f=(1\;5)(7\;10)(2)(3)(4)(6)(8)(9)\,.
\end{equation}
The permutation $\nu_i$ defined in \eqref{nui} is the permutation associated with the case $11\leq x\leq 20$ and $11\leq y \leq 20$, shown in red on the right side of Fig.\,\ref{210c}, and given by
\begin{equation}
    \nu_i=(4\;6)(9\;10)(1)(2)(3)(5)(7)(8)\,.
\end{equation}
The permutation $\mu$ defined in \eqref{mu} is the permutation associated with the case $11\leq x\leq 20$ and $1\leq y \leq 10$, shown in green in Fig.\,\ref{210c}, and given by
\begin{equation}
    \mu=(1\;9)(5\;4)(7\;2\;6)(3\;8)(10)\,.
\end{equation}
The permutation $\gamma\alpha$ is 
\begin{equation}
    \gamma\alpha=(1\;9\;20\;7\;2\;6\;14\;15)(5\;4\;16\;12\;17\;10\;19\;11)(3\;8)(18\;13)\;,
\end{equation}
and according to \eqref{res}, its partial trace is
\begin{equation}
    \mathrm{P_2Tr}(\gamma \alpha) = N\;(1\;9\;7\;2\;6)(3\;8)(4\;10\;5)\,.
\end{equation}

\bibliography{refs}

    \end{document}